\newcommand{\source}{{s}}
\newcommand{\sink}{{y}}
\newcommand{\Po}{{\textsc{Minimum Arrival}}}
\newcommand{\Pt}{{\textsc{Best Policy}}}
\newtheorem{theorem}{Theorem}
\newtheorem{corollary}{Corollary}
\newtheorem{definition}{Definition}
\newtheorem{observation}{Observation}
\newtheorem{lemma}{Lemma}
\newtheorem{problem}{Problem}
\newtheorem{remark}{Remark}
\title{\vspace{-0.5cm}How fast can we reach a target vertex \\
in stochastic temporal graphs?\thanks{This work was supported by the NeST initiative 
of the EEE/CS School of the University of Liverpool 
and by the EPSRC grants EP/P020372/1 and EP/P02002X/1.}}
\author{Eleni C.~Akrida\thanks{Department of Computer Science, University of Liverpool, Liverpool, UK. 
Email: \texttt{e.akrida@liverpool.ac.uk}} 
\and George B.~Mertzios\thanks{Department of Computer Science, Durham University, Durham, UK. 
Email: \texttt{george.mertzios@durham.ac.uk}} 
\and Sotiris Nikoletseas\thanks{Computer Engineering \& Informatics Department, University of Patras and CTI, Greece. 
Email: \texttt{nikole@cti.gr}}
\and Christoforos Raptopoulos\thanks{Computer Engineering \& Informatics Department, University of Patras and CTI, Greece. 
Email: \texttt{raptopox@ceid.upatras.gr}}
\and Paul G.~Spirakis\thanks{Department of Computer Science, University of Liverpool, UK and 
Computer Engineering \& Informatics Department, University of Patras, Greece. 
Email: \texttt{p.spirakis@liverpool.ac.uk}}
\and Viktor Zamaraev\thanks{Department of Computer Science, Durham University, Durham, UK. 
Email: \texttt{viktor.zamaraev@durham.ac.uk}}}
\date{\vspace{-1.0cm}}
\begin{document}

\maketitle

\begin{abstract}
Temporal graphs are used to abstractly model real-life networks that are inherently dynamic in nature, 
in the sense that the network structure undergoes discrete changes over time. 
Given a static underlying graph $G=(V,E)$, a temporal graph on $G$ is a sequence of \emph{snapshots} 
$\{G_t=(V,E_t) \subseteq G: t\in \mathbb{N}\}$, one for each time step $t\geq 1$.
In this paper we study \emph{stochastic temporal graphs}, i.e.~stochastic processes $\mathcal{G}=\{G_t \subseteq G: t \in \mathbb{N}\}$ whose random variables are the snapshots of a temporal graph on~$G$. 
A natural feature of stochastic temporal graphs which can be observed in various real-life scenarios is 
a \emph{memory effect} in the appearance probabilities of particular edges; 
that is, the probability an edge $e\in E$ appears at time step $t$ depends on its 
appearance (or absence) at the previous $k$ steps. 
In this paper we study the hierarchy of models \emph{memory-$k$}, $k\geq 0$, 
which address this memory effect in an \emph{edge-centric} network evolution: 
every edge of $G$ has its own probability distribution for its appearance over time, 
\emph{independently} of all other edges. 
Clearly, for every $k\geq 1$, memory-$(k-1)$ is a special case of memory-$k$.
However, in this paper we make a clear distinction between the values $k=0$ (\emph{``no memory''}) 
and $k\geq 1$ (\emph{``some memory''}), as in some cases these models exhibit a fundamentally different 
computational behavior for these values of $k$, as our results indicate.
For every $k\geq 0$ we investigate the computational complexity of two naturally related, but fundamentally different, 
\emph{temporal path} (or \emph{journey}) problems: \Po\ and \Pt. 
In the first problem we are looking for the \emph{expected arrival time} of a foremost journey between two designated vertices~$\source,\sink$. 
In the second one we are looking for the expected arrival time of the \emph{best policy} for actually choosing 
a \emph{particular} $\source$-$\sink$ journey. 
We present a detailed investigation of the computational landscape of both problems for the different 
values of memory $k$. Among other results we prove that, surprisingly, \Po\ is \emph{strictly harder} 
than \Pt; in fact, for $k=0$, 
\Po\ is~\#P-hard while \Pt\ is solvable in $O(n^2)$ time.\newline

\noindent \textbf{Keywords:} Temporal network, stochastic temporal graph, temporal path, \#P-hard problem, polynomial-time approximation scheme.
\end{abstract}

\section{Introduction}\label{sec:intro}

Dynamic network analysis, i.e.~analysis of networks that change over time, is currently one of the most active topics of research in network science and theory. 
A common task in this field is to use our prior knowledge of the network link dynamics to answer questions 
about the behavior of the network over time, e.g.~how quickly information can flow through it. 
Many modern real-life networks are dynamic in nature, in the sense that the network structure 
undergoes discrete changes over time \cite{michailCACM,Santoro11}. 
Here we deal with the discrete-time dynamicity of the network links (edges) over a fixed set of nodes (vertices). 
That is, given an underlying static graph $G$, the network evolution over $G$ is given by the successive appearance or absence of each edge of $G$ at every time step $t=1,2,\ldots$. 
This concept of dynamic network evolution is given by \emph{temporal graphs}~\cite{kempe,mertziosMCS13}, 
which are also known by other names such as \emph{evolving graphs} \cite{LeonardiALLM16, Ferreira-MANETS-04}, or \emph{time-varying graphs} \cite{krizanc1}. 
For a recent attempt to integrate existing models, concepts, and results from the distributed computing perspective, see~the survey papers~\cite{flocchini1,flocchini2} and the references therein.

\begin{definition}[Temporal graph]
\label{temporal-def}
Given an underlying static graph $G=(V,E)$ on~$n$ vertices and $m$ edges, a \emph{temporal graph} 
on $G$ is a sequence $\mathcal{G} = \{G_t = (V,E_t): t \in \mathbb{N}\}$ of graphs 
such that $E_t \subseteq E$ for all $t \in \mathbb{N}$. Every $G_t$ is the \emph{snapshot} 
of $\mathcal{G}$ at time step $t$. 
\end{definition}

Another way to think about temporal graphs is by assigning \emph{time-labels} on the edges; 
for example, if an edge $e$ appears in the snapshots $G_3$, $G_5$, and $G_8$, then we equivalently assign 
to $e$ the set of labels $\lambda(e)=\{3,5,8\}$. 
Due to the vast applicability of temporal graphs, various structural and algorithmic properties 
of them have been studied extensively, both via theoretical/algorithmic analysis and via empirical 
simulation-based analysis. 
In many of these works, one of the central temporal notions is that of a temporal path. 
A path in the underlying (static) graph $G$ is a \emph{temporal path} (or \emph{journey}) if there exists 
an increasing sequence of time-labels as one walks along the edges of the path~\cite{kempe,mertziosMCS13}. 
Motivated by the fact that, due to causality, information in temporal graphs can only flow along 
sequences of edges that appear in an increasing time order, many temporal graph parameters and optimization 
problems that have been studied so far are based on the notion of a temporal path and 
other related notions, e.g.~temporal analogs of distance, diameter, connectivity, reachability, 
and exploration~\cite{akridaGMS17,akridaGMS16,HenriST18,OrdaR96,BasuYBR14,BasuBRJ10,
CasteigtsFGSY15,erlebach,FlocchiniMS13,AvinKL08,LamprouMS18,enright2018deleting}.
In addition to temporal paths, recently also various temporal non-path problems have been introduced 
and algorithmically studied, such as temporal vertex cover~\cite{AkridaMSZ18}, 
temporal coloring~\cite{MertziosMZ-AAAI19}, 
and temporal $\Delta$-cliques~\cite{viardCliqueTCS,HimmelMNS17}.

Apart from the focus on the various algorithmic problems that one can study on temporal graphs, 
one can also view temporal graphs through several different levels of knowledge about the 
actual network evolution. 
On the one extreme, we may be given the whole temporal graph instance in advance, 
i.e.~the times of appearance and absence of every edge at all times, as it typically happens 
e.g.~when modeling transportation networks.
On the other extreme, the temporal graph may be created by an adversary who reveals it to us 
snapshot-by-snapshot at every time step. 
Here we focus on the intermediate knowledge settings, captured by \emph{stochastic temporal graphs}, 
where the network evolution is given by a probability distribution that governs the appearance 
of each edge over~time. 

\begin{definition}[Stochastic temporal graph]
\label{stochastic-temp-graphs-def}
A \emph{stochastic temporal graph} is a stochastic process $\mathcal{G}=\{G_t: t \in \mathbb{N}\}$ 
whose random variables are snapshots $G_t \subseteq G$ of an underlying graph~$G$. 
Every instantiation of $\mathcal{G}$ is a temporal graph. 
\end{definition}

A natural feature of stochastic temporal graphs which can be observed in various real-life scenarios 
(and which we address in this paper) is that the appearance probability of a particular edge 
at a given time step $t$ depends on the appearance (or absence) of the same edge 
at the previous $k\geq 1$ time steps. 
This ``memory effect'' can often be observed, among others, in faulty network communication and in mobile, social, 
and peer-to-peer networks~\cite{ClementiMPS11,sch,Pittel87}. 
Several other models of temporal networks which exhibit some sort of probabilistic behavior 
have been considered in the past, see e.g.~\cite{Holme-Saramaki-book-13}.

In this paper, we study a hierarchy of models for stochastic temporal graphs. These models 
concern an \emph{edge-centric} network evolution, i.e.~they assign to every edge of the underlying graph $G$ 
a probability distribution for its appearance over time, independently of all the other edges. 
The first and most basic model (\emph{memoryless} or \emph{memory-$0$}) assigns independently 
to every edge~$e$ a probability $p_e$ such that, at every time step, $e$ appears with probability 
$p_e$. In the general model (\emph{memory-$k$}), at every time step the appearance probability of every edge 
is a function of the history of its appearances/absences in the last $k\geq 1$ 
time steps. Clearly, for every $k\geq 1$, the memory-$(k-1)$ model is a special case of the memory-$k$ model.
However, in this paper we make a clear distinction between the values $k=0$ (\emph{``no memory''}) 
and $k\geq 1$ (\emph{``some memory''}), as in some cases these models exhibit a fundamentally different 
computational behavior for these values of $k$, as our results indicate (see Section~\ref{sec:best-policy}).

Our memory-$k$ model, $k\geq 1$, is a direct generalization of the homogeneous version of the memory-1 model 
that was introduced in a seminal paper by Clementi et al.~\cite{clementiMMPS10}, 
in which all edges have the same probability distribution for their appearance, 
based on their own appearance/absence at the previous step. 
In this homogeneous memory-1 model, Clementi et al.~gave upper bounds for the flooding time 
and they provided tight characterizations of the graphs on which the flooding time is constant~\cite{clementiMMPS10}. 
It is worth noting here that Avin et al.~\cite{AvinKL08} studied the completely opposite extreme 
of our edge-centric evolution; namely they considered a \emph{graph-centric} evolution model where a global 
probability distribution assigns specific transition probabilities among different snapshots~\cite{AvinKL08}. 
Between the two extremes of the edge-centric and the graph-centric network evolution models, 
there exists a whole hierarchy of locally interdependent probabilistic patterns, i.e.~probability 
distributions where the appearance probability of one edge also depends on the appearance 
of \emph{other edges} over time; such models remain mostly unexplored.

In both our memoryless and memory-$k$ variations of stochastic temporal graphs, we study two fundamental 
temporal path (i.e.~journey) problems that are defined on two designated vertices $\source$ and~$\sink$. 
Consider a piece of information that is generated at $\source$ at time 1, which we would like to send 
to $\sink$ via an $\source$-$\sink$ journey. The \emph{arrival time} of an $\source$-$\sink$ journey 
in a realization of a stochastic temporal graph is the time the information reaches $\sink$ using 
this journey. A \emph{foremost} $\source$-$\sink$ journey is one with the smallest arrival time. 
In the first part of the paper we investigate the complexity of computing the \emph{expected arrival time} of 
a \emph{foremost} $\source$-$\sink$ journey. 
Basu et al.~\cite{BasuGST12} and Nain et al.~\cite{nainTJBBY13} studied a similar problem but their work 
is restricted to the simpler cases where the underlying graph is either a path or a grid.

In the second part of the paper we investigate the complexity of computing the arrival time of a 
\emph{best policy} for actually choosing a particular $\source$-$\sink$ journey in the 
stochastic temporal graph. To illustrate this notion of a best policy, assume that some 
piece of information is carried by an entity, say Alice. Alice is given as input the parameters of 
the stochastic temporal graph (i.e.~the probabilistic rules on the edges) and, at every time step, 
she knows the current snapshot and her current location. 
Based on this information, Alice has to decide at every step for her next action, 
while her goal is to reach $\sink$ as quickly as possible on expectation, starting at time 1. 
In a very inspiring paper, Basu et al.~\cite{BasuBRJ10} consider this problem in the special case 
of the memoryless model where all edges have the same probability of appearance at every time, 
and give a Dijkstra-like polynomial-time algorithm. 
Special cases of the memory-1 model were considered in~\cite{BasuYJB14}.

To illustrate the difference between the two problems we study, we make the following analogy. 
In the first problem (\Po) we try to transfer information from $\source$ to~$\sink$ using an unbounded number of messages, i.e.~we ``flood'' the stochastic temporal graph with information. Initially the information is stored at $\source$ at time 1 and then, at every step, every informed vertex informs all its neighbors as soon as the edge between them becomes available. 
In the second problem (\Pt) we try to transfer a package with a tangible good from $\source$ to $\sink$. Now, at every step we need to decide for the actual route of the package through the network: when an edge appears, should we ship the package along it or rather wait where we currently are? 
\Pt\ is more relevant to real-life applications than \Po, where an actual \emph{good} journey needs to be found in real time.

\medskip
\noindent\textbf{Our contribution.} 
In the first part of the paper, in Section~\ref{section-Po}, we provide our results for the problem \Po, 
i.e.~for computing the expected arrival time of a foremost $\source$-$\sink$ journey in a stochastic temporal graph. 
First we prove in Section~\ref{sec:hardness_for_Po} that \Po\ is \#P-hard even for the memoryless model (and 
thus also for the memory-$k$ model, for every $k\geq 1$). The reduction is done from the problem \#PP2DNF 
which counts the number of satisfying assignments in a positive partitioned 2-DNF Boolean formula~\cite{PB83}.

Second, we provide in Section~\ref{FPTAS-subsection} a non-trivial approximation scheme for \Po, 
based on dynamic programming, for the memoryless model in the case where the 
underlying graph~$G$ is a series-parallel graph. 
More specifically, it turns out that this is a \emph{Fully Polynomial-Time Approximation Scheme (FPTAS)} 
whenever the probabilities $p_e$ are lower bounded by $\frac{1}{n^c}$ for some $c\geq 1$. 
Let $X$ be the random variable that expresses the arrival time of a foremost $\source$-$\sink$ journey. 
For every $\varepsilon \in (0,1]$, our FPTAS gives an algorithm that produces a value $\mu$ 
where $\mathbb{E}(X)-\varepsilon \leq \mu \leq \mathbb{E}(X)$, and runs in polynomial time in both $n$ and $\frac{1}{\varepsilon}$. 
Although our main result of Section~\ref{FPTAS-subsection} concerns series-parallel graphs, 
we actually present a more general FPTAS approach (see Theorem~\ref{th:generalFPTAS}) which is 
of independent interest and could lead to FPTASs also for more general classes of underlying graphs $G$.

Third, we present in Section~\ref{sec:FPRAS} a \emph{Fully Polynomial Randomized Approximation Scheme (FPRAS)} for \Po\ in the memory-$k$ model, for every $k\geq 0$, under the assumption that every edge 
appearance probability is lower bounded by $\frac{1}{n^c}$ for some $c\geq 1$. 
Let $X$ be the random variable that expresses the arrival time of a foremost $\source$-$\sink$ journey. 
For every $\varepsilon \in (0,1)$, our FPRAS gives a randomized algorithm that produces 
an estimate $\widetilde{X}$ where $(1-\varepsilon)\mathbb{E}(X)\leq \widetilde{X}\leq (1+\varepsilon)\mathbb{E}(X)$ 
with probability tending to 1 as $n\rightarrow \infty$, 
and runs in polynomial time in both $n$ and $\frac{1}{\varepsilon}$.

In the second part of the paper, in Section~\ref{sec:best-policy}, we provide our results for 
the problem \Pt, i.e.~for computing the expected arrival time of a best policy for choosing a particular 
$\source$-$\sink$ journey. 
Initially we provide in Section~\ref{sec:dynamic_prog_memoryless} a dynamic programming algorithm 
for the memoryless model which runs in $O(n^2)$ time and space. 
In wide contrast, we prove in Section~\ref{sec:hardness_for_Pt} that \Pt\ becomes \#P-hard 
for the memory-$k$ model, where $k\geq 3$, again by providing a reduction from the problem \#PP2DNF.
Finally, we provide in Section~\ref{sec:bestpolicy} a formulation of \Pt\ in the memory-$k$ model using 
the general \emph{Markov Decision Process (MDP)} framework which allows us to devise in Section~\ref{sec:doublyexp} 
an exact doubly exponential-time algorithm with running time $O(2^{(kmn+n\log n)\cdot 2^{km}})$.

\section{Preliminaries}\label{sec:prelim}

In this paper we consider temporal graphs (see Definition~\ref{temporal-def}) in which the underlying 
(static) graph $G=(V,E)$ has $n$ vertices and $m$ edges. A subgraph $H=(V,E_H)$ of $G$, 
denoted by~$H\subseteq G$, is a graph where $E_H\subseteq E$. 
For every vertex $u\in V$, the \emph{neighborhood} $\Gamma_{G}(u)$ of $u$ in $G$ is the set of adjacent vertices of $u$ in $G$. The \emph{closed neighborhood} $\Gamma_{G}[u]$ also contains vertex~$u$ itself, 
i.e.~$\Gamma_{G}[u] = \Gamma_{G}(u) \cup \{u\}$. 
For simplicity of notation we denote $[n] = \{1,2,\ldots,n\}$ for every $n\in \mathbb{N}$. 
%%%%\stackrel{def}{=} 
Furthermore, sometimes we refer to the discrete time steps $t=1,2,\ldots$ as \emph{days}. 
Throughout the paper we consider stochastic temporal graphs that exhibit an edge-centric evolution, 
i.e.~every edge $e$ of $G$ is assigned one probability distribution for its appearance over time, 
independently of all other edges. 
We investigate the case where there is a ``memory effect'' that governs the probability of appearance of 
every edge over time. We distinguish now the cases where the the memory is zero or non-zero.

\begin{description}
	\item[Memoryless (or memory-0) model.] Every edge $e \in E$ evolves stochastically and independently of other edges as follows: at every time step $t\in \mathbb{N}$, $e$ appears in $G_t$ with probability~$p_e$ and is absent with probability $1-p_e$, independently of any other time step. 
The numbers $\{p_e: e \in E\}$ are given parameters of the model. 
We denote this (memoryless) stochastic temporal graph by $\mathcal{G}^{(0)} =	(G, \{ p_e  :  e \in E \} )$ or simply $\mathcal{G}^{(0)} =	(G, \{p_e\} )$.

	\item[Memory-$k$ model.] This model of temporal graphs exhibits stochastic time-dependency of the edges: 
we assume an initial (arbitrary) sequence of $k$ snapshots, $G_{-k+1}, \ldots, G_{-1}, G_0 \subseteq G$. At every time step $t\geq 1$, every edge $e$ appears independently of all other edges with 
probability that depends only on (the edge and) the history of appearance of $e$ in the $k$ previous 
snapshots. 
At every time step $t$, this history is a $k$-bit binary vector, where a $0$-entry (resp.~$1$-entry) on the $i$-th position denotes absence (resp.~appearance) of $e$ in $E_{t-k+i-1}$, for $i=1,\ldots,k$. 
Therefore the snapshot $G_t$ is the graph that appears at time $t\geq 1$ as the result of the following experiment: 
given the history $H_e^{(k)}$ of the appearance of edge $e\in E$ in the last $k$ snapshots, 
$e$ belongs to $E_t$ independently with probability~$p_e(H_e^{(k)})$. 
We denote the memory-$k$ stochastic temporal graph by $\mathcal{G}^{(k)}$.

\bigskip

	In the particular case where $k=1$, the memory-$1$ stochastic temporal graph $\mathcal{G}^{(1)}$ is the sequence $\{G_t=(V,E_t) :t \in \mathbb{N}\}$ of snapshots 
such that $E_t = \{ e \in E: X_t^e = 1 \}$, where $ \{ X_t^e\}_{t \in \mathbb{N}}$ 
is a Markov chain for the edge $e \in E$ with states $\{0,1\}$ 
(corresponding to non-appearance and appearance of $e$, respectively) and probability transition matrix:
	\begin{equation*}
	M_e = \left( 
	\begin{array}{r|cc}
	   & 0 & 1\\ \hline
	 0 &1-p_e & p_e \\
	 1 &  q_e & 1-q_e
	\end{array}
	\right) \text{, where } 0\leq p_e,q_e\leq 1.
	\end{equation*}
Using this formalism, $p_e$ (resp.~$q_e$) is the probability that the edge $e$ changes its current state 
from absence to appearance (resp.~from appearance to absence) in the next snapshot. 	
Note here that, setting $p_e=p$ and $q_e=q$ for every edge $e$, we obtain exactly the well-established 
\emph{edge-Markovian evolving graph} model introduced by Clementi et al.~\cite{clementiMMPS10}. 
\end{description}

\subsection{The problems}\label{sec:problems}

This work studies two main problems, each under the models of stochastic temporal graphs defined above. 
To describe both of these problems, let us first recall that information in temporal graphs flows via journeys, i.e.~temporal paths.

\begin{definition}[Time-edge]
	A time-edge in a temporal graph $\mathcal{G}= \{G_t: t \in \mathbb{N}\}$ is a pair $(e,t)$ such that $e \in E_t$.
\end{definition}

\begin{definition}[Journey / temporal path]\label{def:journey}
	Let $\mathcal{G} = \{G_t: t \in \mathbb{N}\}$ be a temporal graph and $\source,\sink$ be two vertices of $G$. 
	An \emph{$\source$-$\sink$ journey} (or an \emph{$\source$-$\sink$ temporal path}) in~$\mathcal{G}$ is a sequence $\big( (e_1,t_1),\ldots,(e_x,t_x) \big)$ of time-edges over a path $(e_1,\ldots,e_x)$ in $G$, where $t_1<t_2<\ldots<t_x$. 
The \emph{arrival time} of the journey is the time $t_x$ of appearance of its last edge. 
\end{definition}

\begin{definition}[Foremost Journey]\label{def:foremost}
	A foremost $\source$-$\sink$ journey in a temporal graph $\mathcal{G}$ is an $\source$-$\sink$ journey with minimum arrival time amongst all $\source$-$\sink$ journeys in $\mathcal{G}$.
\end{definition}

Notice that the arrival time of a foremost $\source$-$\sink$ journey in a stochastic temporal graph is a random variable, which we henceforth denote by $X(\source, \sink)$.
The first problem that we study here is how to compute the expected value of the latter, namely $\mathbb{E}[X(\source,\sink)]$.

\begin{problem}[\Po]\label{prob:P1}
	Given a stochastic temporal graph on an underlying graph $G=(V,E)$ and two distinct vertices $\source, \sink \in V$, compute the expected value of the arrival time of a foremost $\source$-$\sink$ journey, i.e.~$\mathbb{E}[X(\source,\sink)]$.
\end{problem}

Now suppose that an individual (say Alice) is at day 0 at vertex $\source$ and would like to arrive 
at vertex $\sink$ through a temporal path as quickly as possible. 
Denote by $s_t$ the vertex where she is located at time $t$; then $s_0=\source$. 
Every day $t$ Alice ``wakes up'' in the morning and looks at which edges are available 
in today's snapshot; 
by only knowing her current position, the history of the last $k$ snapshots, and the input parameters of the stochastic temporal graph 
(i.e.~the probabilistic rules of edge appearance), 
Alice needs to decide whether: 
\begin{enumerate}
\item[(i)] to stay at the vertex $s_t$ she currently is, or 
\item[(ii)] to use an edge of $G_{t}$ to move	to a neighboring vertex.
\end{enumerate}
That is, $s_{t+1}$ is either equal to $s_t$ or equal to some vertex of $\Gamma_{G_{t}} (s_t)$.

A natural problem we can study here is to compute the expected arrival time of an $\source$-$\sink$ journey 
that Alice can follow, using a \emph{best policy}\footnote{We use the term ``policy'' here 
(instead of ``strategy'') since, as we will see later, this problem can be 
formulated using a Markov Decision Process (MDP).} possible, i.e.~a policy (sequence of actions) that minimizes her expected arrival time at $\sink$.
Notice that the arrival time of the journey suggested to Alice by the best policy is a random variable $Y(\source, \sink)$, whose distribution depends on the specific stochastic temporal graph. 
In particular, in the memoryless model, the expectation of $Y(\source, \sink)$ depends only on the edges' probabilities of appearance. In the memory-$k$ model, the expectation of $Y(\source, \sink)$ also depends 
on the initial snapshots $G_{-k+1}, \ldots, G_{-1}, G_0$.

\begin{problem}[\Pt]\label{prob:P2}
	Given a stochastic temporal graph $\mathcal{G}^{(k)}$ on an underlying graph $G=(V,E)$ and two distinct vertices $\source, \sink \in V$, compute $\mathbb{E}_{\mathcal{G}^{(k)}}[Y(\source,\sink)]$.
\end{problem}

In particular, we will write $h(\source, \sink) \stackrel{\text{def}}{=} \mathbb{E}_{\mathcal{G}^{(0)}}[Y(\source,\sink)]$ and $h(\source, \sink, G_0) \stackrel{\text{def}}{=} \mathbb{E}_{\mathcal{G}^{(1)}}[Y(\source,\sink)]$.

\paragraph*{Difference between the two problems.}\label{sec:difference}

Before we proceed further, we first give an example illustrating that the problems \Po\ and \Pt\ are different. 
To demonstrate this, assume the memoryless model $\mathcal{G}^{(0)}$ and consider the 4-cycle $a,b,c,d,a$ as the underlying graph. Let  $\source=a$ and $\sink=c$ and assume that, at any time step, each edge appears independently with probability $\frac{1}{2}$.

Any best policy for Alice will wait until an edge incident to $a$ appears and then cross it; if both adjacent edges $(a,b)$ and $(a,d)$ appear at the same time, then it does not matter which one she chooses.
The event ``some edge adjacent to $a$ appears'' occurs with probability $\frac{3}{4}$, hence, the expected time until such an edge appears is $\frac{4}{3}$. Furthermore, when Alice reaches one of the vertices $b$ or $d$, an optimal policy will never suggest going back to $a$, so Alice will have to wait until the last edge to $c$ appears, which takes $2$ steps on expectation. Overall, the optimal policy for Alice will take $h(a,c)= \frac{10}{3}$ steps on expectation. This is the solution to \Pt\ (see Problem~\ref{prob:P2}). 

On the other hand, \Po\ (see Problem~\ref{prob:P1}) asks for the expectation of the arrival time $X(a,c)$ 
of a foremost $\source$-$\sink$ journey. 
To compute $\mathbb{E}[X(a,c)]$, denote by $T_b$ (resp.~$T_d$) the arrival time of a journey allowed to use only edges $(a,b)$ and $(b,c)$ (resp.~$(a,d)$ and $(d,c)$), when they appear. Then, 
\[
X(a,c) > k \ \ \Leftrightarrow \ \ T_b>k \ \text{ and } \ T_d>k
\]
But the probability of the event $\{T_b>k\}$ is equal to the probability that either $(a,b)$ does not appear until (and including) step $k$ plus the probability that it appears within the first $k$ steps, and $(b,c)$ does not appear after that until (and including) $k$. Therefore, 
\[\Pr[T_b>k] = \frac{1}{2^k}+k\frac{1}{2} \frac{1}{2^{k-1}}=(k+1)\frac{1}{2^k}.\]
By symmetry we have $\Pr[T_b>k]=\Pr[T_d>k]$ and, by independence, for any $k \geq 2$:  
\[
\Pr[X(a,c) > k] = \Pr[T_b>k] \Pr[T_d>k] = \frac{1}{2^{2k}} +k \frac{1}{2^{2k-1}}+k^2 \frac{1}{2^{2k}}.
\]    
By using the fact that $\mathbb{E}[X(a,c)] = \sum_{k=0}^{\infty} \Pr[X(a,c) > k] = 2+ \sum_{k=2}^{\infty} \Pr[X(a,c) > k]$, it follows that $\mathbb{E}[X(a,c)] = 2+\frac{26}{27}= \frac{80}{27}$, which is strictly smaller than $\frac{10}{3}$.

In fact, the gap between the solution to \Po\ and the solution to \Pt\ can be arbitrarily large: Consider the graph consisting of vertices $\source$ and $\sink$ and $n-2$ vertex disjoint paths of length 2 between $\source$ and $\sink$. Assume also that, under the~memoryless model, every edge incident to $\source$ appears each day with probability $1$ and every edge incident to $\sink$ appears each day independently with probability $n^{-0.9}$. 
Similarly to the above example, the expected arrival time of a best policy for Alice 
is $h(s, y) = 1+n^{0.9}$. 
On the other hand, the arrival time of the foremost journey from $\source$ to $\sink$ will be equal to the first day 
after day 1 on which some edge incident to $\sink$ appears. 
But the time needed for the latter to happen follows the geometric distribution with success probability $1-(1-n^{-0.9})^{n-2} = 1-o(1)$. Therefore, the expected arrival time of the foremost journey will be $\mathbb{E}[X(s,y)] = 2+o(1)$, i.e.~much smaller than $h(s, y) = 1+n^{0.9}$.

As a final note, the expected arrival time $\mathbb{E}[X(s,y)]$ of the foremost $\source$-$\sink$ journey 
is always upper-bounded by the minimum among the expected values of the arrival times of all $\source$-$\sink$ journeys in the temporal graph. This is actually implied by a more general and well-known lemma 
in Probability Theory (Fatou's lemma~\cite[p.~29]{durrett_book}) 
which establishes that the expected value of the minimum among $n$ random variables is upper-bounded by the minimum among all the variables' expectations.

\section{Computing the expected minimum arrival time}\label{section-Po}

\subsection{Hardness of exact computation in the memoryless model}\label{sec:hardness_for_Po}

In this section we show that, even in the memoryless model, \Po\ is {\#P}-hard in both undirected graphs 
and directed acyclic graphs (DAGs).
In the proof of the following theorem, the edges can be treated either as oriented, in which case we obtain the result for DAGs,
or as non-oriented, in which case we obtain the result for undirected graphs.

\begin{theorem}\label{thm:prob_1_hard}
	\Po\ in the memoryless model is {\#P}-hard.
\end{theorem}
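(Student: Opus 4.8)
The plan is to reduce from \#PP2DNF, the problem of counting satisfying assignments of a positive partitioned 2-DNF formula $\phi = \bigvee_{(i,j)\in S} x_i \wedge y_j$ over variable sets $\{x_1,\dots,x_a\}$ and $\{y_1,\dots,y_b\}$, which is known to be \#P-hard~\cite{PB83}. First I would build, from such a formula, a memoryless stochastic temporal graph whose expected foremost arrival time encodes the number of satisfying assignments. The natural gadget: introduce a source $\source$, a sink $\sink$, a vertex $u_i$ for each $x$-variable and a vertex $w_j$ for each $y$-variable; put an edge $\source u_i$ for every $i$, an edge $w_j \sink$ for every $j$, and an edge $u_i w_j$ exactly when $(i,j)\in S$. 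The idea is to pick the appearance probabilities so that crossing an edge ``on the first day it is available'' corresponds to setting the associated variable to true: e.g. set $p_{\source u_i} = p_{w_j \sink} = \tfrac12$ for all $i,j$, and make the middle edges $u_i w_j$ always present (probability $1$), or vice versa. Then an $\source$-$\sink$ journey of arrival time $2$ exists precisely when, among the edges available in the (single relevant) snapshot, there is some $i$ with $\source u_i$ present and some compatible $j$ with $w_j\sink$ present — i.e. precisely when the corresponding assignment (true on the present edges, false on the absent ones) satisfies $\phi$.

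The technical heart is to turn ``probability that a foremost journey arrives early'' into a count. Using the tail-sum formula $\mathbb{E}[X(\source,\sink)] = \sum_{k\ge 0}\Pr[X(\source,\sink) > k]$, as already used in the 4-cycle example of the excerpt, it suffices to control $\Pr[X(\source,\sink) > 1]$, or more precisely to isolate a single probability value $\Pr[\text{arrival} = 2]$ (or $\Pr[X>1]$, $\Pr[X>2]$, etc.) that equals $N/2^{a+b}$ where $N$ is the number of satisfying assignments; one then recovers $N$ from $\mathbb{E}[X]$ by inverting a linear relation, provided the remaining terms of the tail sum are either zero or exactly computable in polynomial time. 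To make the other tail terms trivial I would arrange that the graph is ``layered in time'' so that any journey must use day-1 and day-2 edges only (for instance by restricting the incident edges of $\source$ to appear only on odd days and those of $\sink$ only on even days, or by a direct construction on a DAG with two time layers), and add a guaranteed backup journey of some fixed larger length so that $X$ is almost surely finite and its distribution is supported on a constant number of explicit values. Then $\mathbb{E}[X]$ is an affine function of the single unknown probability $\Pr[\text{there is a length-2 journey}]$, which is exactly $\Pr_{\rho}[\rho \models \phi]$ under the uniform distribution $\rho$ on truth assignments induced by the coin flips, hence $N = 2^{a+b}\Pr[\rho\models\phi]$.

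I would then verify two auxiliary points. First, the reduction is polynomial: the graph has $O(a+b)$ vertices and $O(|S|)$ edges, all probabilities are $\tfrac12$ or $1$ (so they have constant bit-length, and in particular the ``lower bound $1/n^c$'' caveat elsewhere in the paper is respected), and $\mathbb{E}[X]$ is a rational with polynomially many bits so the inversion is exact. Second, the same construction works whether edges are directed (orient everything from $\source$ towards $\sink$, yielding a DAG and hence the stated hardness for DAGs) or undirected (here I must argue that allowing a journey to backtrack, e.g. $\source \to u_i \to w_j \to u_{i'} \to \cdots$, cannot produce an arrival time smaller than $2$, which is immediate since any $\source$-$\sink$ path has length at least $2$ and the time-labels must strictly increase).

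The main obstacle I anticipate is the ``noise control'' in the second paragraph: ensuring that $\mathbb{E}[X(\source,\sink)]$ depends on the \#PP2DNF count through a \emph{single} coefficient and that every other contribution to the tail sum is a constant that is either $0$ or computable in polynomial time independently of $\phi$. A careless gadget leaves $\mathbb{E}[X]$ entangled with hard-to-count intermediate events (e.g. the probability that \emph{at least two} middle edges are simultaneously usable, which is itself a \#P-type quantity), so the design of the time-layering and of the fixed-length backup journey — so that $X$ takes only the values $2$ and some fixed $L$ — is the delicate step on which the whole reduction hinges. Once $X\in\{2,L\}$ almost surely, we get $\mathbb{E}[X] = 2\Pr[\rho\models\phi] + L(1-\Pr[\rho\models\phi])$, i.e. $\Pr[\rho\models\phi] = (L-\mathbb{E}[X])/(L-2)$, and $N$ follows.
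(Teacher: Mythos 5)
Your proposal is essentially the paper's own reduction: the same \#PP2DNF gadget with probability-$\frac12$ edges at $\source$ and $\sink$, always-present middle edges, and a deterministic backup path of fixed length so that $X$ takes only two values and $\mathbb{E}[X]$ is an affine function of $\Pr[\rho\models\phi]=\psi/2^{a+b}$ (the paper uses a length-$4$ backup, so $X\in\{3,4\}$ and $\psi=2^{n+m}(4-\mathbb{E}[X])$). The only slips are cosmetic: the encoding path $\source,u_i,w_j,\sink$ has three edges, so the early arrival time is $3$ (witnessed by the day-$1$ $\source$-edges and day-$3$ $\sink$-edges, which are independent and induce the uniform assignment), not $2$, and the ``vice versa'' variant (randomizing the middle edges instead) would not encode the count --- but your primary construction and the final affine inversion are exactly right.
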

\begin{proof}
	To prove the theorem we will provide a reduction from the {\#P}-complete problem \#PP2DNF \cite{PB83}. 
	The latter problem is defined as follows. Let $X = \{x_1, x_2, \ldots, x_n\}$ and $Y = \{y_1, y_2, \ldots, y_m\}$
	be two disjoint sets of Boolean variables. A \textit{positive}, \textit{partitioned} 2-DNF formula is a DNF formula of the form:
	\[
	\Phi = \bigvee\limits_{(i,j) \in E} x_i y_j,
	\] 
	for some $E \subseteq [n] \times [m]$.
	Given a positive, partitioned 2-DNF formula $\Phi$, the problem \#PP2DNF asks for the number 
	of truth assignments satisfying $\Phi$.
	Let $\Phi $ be an instance of \#PP2DNF. We define $G$ to be a graph
	with the vertex set $\{\source, \sink\} \cup X \cup Y$ and the edge set 
	$\{(\source,x_i) ~|~ x_i \in X \} \cup \{ (x_i,y_j) ~|~ (i,j) \in E \} \cup \{ (y_i,\sink) ~|~ y_j \in Y \}$, see Figure~\ref{fig:pp2dnf}.
	
\begin{figure}[ht]
\centering
\includegraphics[width=0.45\textwidth]{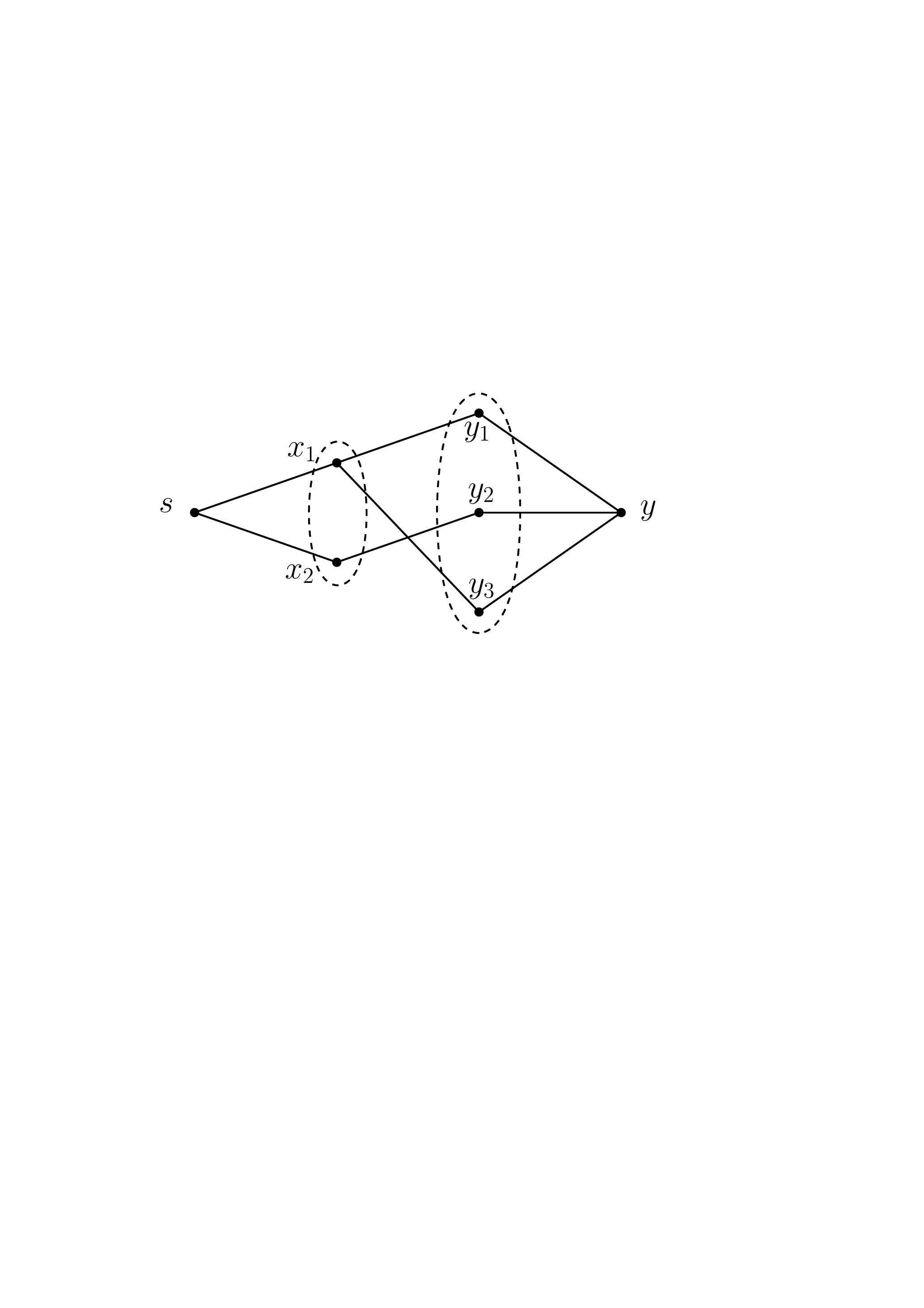}
\caption{Example construction of $G$, given the positive, partitioned 2-DNF formula $\Phi=(x_1y_1) \vee (x_1y_3) \vee (x_2y_2)$.}
\label{fig:pp2dnf}
\end{figure}
	
	First we claim\footnote{This claim was provided by Antoine Amarilli (\url{https://cstheory.stackexchange.com/q/42239}).}
	that the number $\psi$ of satisfying assignments of $\Phi$ is equal to the number of spanning subgraphs
	of $G$ which contain all the edges from $\{ (x_i,y_j) ~|~ (i,j) \in E \}$ and have a simple path from $\source$ to $\sink$ of length 3.
	To see the claim, for every subset $S \subseteq \{ (\source,x_i) ~|~ x_i \in X \} \cup \{ (y_i,\sink) ~|~ y_j \in Y \}$ of edges we define 
	a truth assignment $\alpha$ that assigns $x_i = 1$ iff $(\source,x_i) \in S$ and $y_j = 1$ iff $(y_j,\sink) \in S$. 
	Notice that every $\source$-$\sink$ path of length 3 in $G$ is of the form $(\source,x_i,y_j,\sink)$ for some $(i,j) \in E$.
	Therefore, if the subgraph spanned by $S$ contains a path $(\source,x_i,y_j,\sink)$, then $\alpha$ assigns 1 to both $x_i$ and $y_j$,
	and hence $\alpha$ satisfies $\Phi$.
	Conversely, given an assignment $\alpha$ satisfying $\Phi$, we define a subgraph of $G$ spanned by the edge set 
	$\{ (\source,x_i) ~|~ x_i \text{ is assigned 1 by } \alpha \} \cup \{ (x_i,y_j) ~|~ (i,j) \in E \} \cup \{ (y_i,\sink) ~|~ y_j \text{ is assigned 1 by } \alpha \}$. Since $\alpha$ is satisfying assignment, there exists $(i,j) \in E$ such $\alpha$ assigns 1 to both $x_i$ and $y_j$, and therefore
	the subgraph contains the $\source$-$\sink$ path $(\source,x_i,y_j,\sink)$ of length 3.
	
	Now we define an instance of \Po\ in the memoryless model as follows. Let $H$ be the graph obtained from $G$
	by adding three new vertices $v_1,v_2,v_3$ and four new edges $(\source,v_1), (v_1,v_2), (v_2,v_3), (v_3,\sink)$, which all together form
	a new $\source$-$\sink$ path of length 4. 
	For every edge $e \in \{ (\source,x_i) ~|~ x_i \in X \} \cup \{ (y_i,\sink) ~|~ y_j \in Y \}$
	we set $p_e = 1/2$, and for any other edge $e$ of $H$ we set $p_e=1$. 
	In this stochastic temporal graph the duration of a foremost journey from $\source$ to $\sink$ is either
	3, if for some $(i,j) \in E$ the edge $(\source,x_i)$ appears in time slot 1, and the edge $(y_j,\sink)$ appears in time slot 3, or
	4 otherwise.
	In other words, the duration of a foremost $\source$-$\sink$ journey depends only on the subgraph of $G$ spanned by the edge set
	$R_1 \subseteq \{ (\source,x_i) ~|~ x_i \in X \}$ that appears in slot 1, and by the edge set $R_3 \subseteq \{ (y_i,\sink) ~|~ y_j \in Y \}$ 
	that appears in slot 3. The duration is equal to 3 if and only if the subgraph of $G$ spanned by 
	$R_1 \cup \{ (x_i,y_j) ~|~ (i,j) \in E \} \cup R_3$ has an $\source$-$\sink$ path of length 3. Since every edge in $R_1 \cup R_3$ appears independently
	with probability $1/2$, it follows that the probability that this subgraph 
	has a path of length 3 is equal to $p=\frac{\psi}{2^{n+m}}$. Consequently,
	\begin{equation}
	\notag
	\mathbb{E}[X(\source,\sink)] = 3p + 4(1-p)  = 4-p,
	\end{equation}	
	and hence $\psi = 2^{n+m} (4-\mathbb{E}[X(\source,\sink)])$. 
	Therefore, knowing the expected duration $\mathbb{E}[X(\source,\sink)]$ of an $\source$-$\sink$ foremost journey, we can efficiently compute the number of satisfying assignments of $\Phi$,
	which proves that the computation of $\mathbb{E}[X(\source,\sink)]$ is {\#P}-hard.
\end{proof}

\begin{corollary}\label{cor:prob_1_hard_memory}
For every $k\geq 0$, \Po\ in the memory-$k$ model is {\#P}-hard.
\end{corollary}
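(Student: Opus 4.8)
The plan is to obtain the statement for $k\ge 1$ directly from Theorem~\ref{thm:prob_1_hard} by observing that the memoryless model embeds into the memory-$k$ model. Concretely, given any memoryless instance $\mathcal{G}^{(0)}=(G,\{p_e\})$ together with designated vertices $\source,\sink$, I would build a memory-$k$ instance $\mathcal{G}^{(k)}$ on the same underlying graph $G$, with the same $\source,\sink$, by declaring, for every edge $e\in E$ and every history vector $H\in\{0,1\}^k$, the appearance probability $p_e(H):=p_e$. The $k$ compulsory initial snapshots $G_{-k+1},\dots,G_0$ would be chosen arbitrarily, e.g.\ all equal to the empty graph $(V,\emptyset)$.

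The key step is to check that $\mathcal{G}^{(k)}$, as a stochastic process, induces exactly the same distribution on temporal graphs as $\mathcal{G}^{(0)}$. This is immediate: since $p_e(H)=p_e$ does not depend on $H$, at every step $t\ge 1$ each edge $e$ belongs to $E_t$ independently with probability $p_e$, exactly as in the memoryless model, and this is unaffected by the (arbitrary) choice of the initial snapshots. Hence the random variable $X(\source,\sink)$ — the arrival time of a foremost $\source$-$\sink$ journey — has the same law under $\mathcal{G}^{(k)}$ and under $\mathcal{G}^{(0)}$, so in particular $\mathbb{E}_{\mathcal{G}^{(k)}}[X(\source,\sink)]=\mathbb{E}_{\mathcal{G}^{(0)}}[X(\source,\sink)]$.

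Finally I would observe that, for each fixed $k$, this transformation is a polynomial-time (indeed essentially trivial) reduction: the description of $\mathcal{G}^{(k)}$ — namely $G$, the constant functions $p_e(\cdot)\equiv p_e$, and the empty initial snapshots — has size polynomial in that of $\mathcal{G}^{(0)}$. Therefore a polynomial-time algorithm computing $\mathbb{E}[X(\source,\sink)]$ in the memory-$k$ model would yield one in the memoryless model, so \Po\ in the memory-$k$ model is \#P-hard for every $k\ge 1$ by Theorem~\ref{thm:prob_1_hard}; the case $k=0$ is Theorem~\ref{thm:prob_1_hard} itself. I do not expect any genuine obstacle here: the only point that needs a moment's care is verifying that the memory-$k$ model's mandatory initial history cannot interfere with the reduction, which it cannot precisely because the chosen transition probabilities ignore the history.
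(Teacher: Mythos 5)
Your proposal is correct and matches the paper's (implicit) argument: the corollary is stated without proof precisely because the memoryless model is a special case of the memory-$k$ model, which is exactly the embedding you spell out via history-independent probabilities $p_e(H)\equiv p_e$. Nothing further is needed.
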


\subsection{The FPTAS for the memoryless model on series-parallel graphs}
\label{FPTAS-subsection}

\subsubsection{The case of paths}\label{sec:paths_Po}

In this section we will consider a stochastic temporal graph $\mathcal{P}^{(0)} = (P=(V,E), \{ p_e\} )$
with the underlying graph being a path $P=(\source = v_0, v_2, \ldots, v_{n} = \sink)$.

\begin{lemma}\label{lem:pathExpected}
	$\mathbb{E}[X_{\mathcal{P}^{(0)}}(\source,\sink)] = \sum_{e \in E} \frac{1}{p_e}$.
\end{lemma}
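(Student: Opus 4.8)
Since the underlying graph is the path $P = (\source = v_0, v_1, \ldots, v_n = \sink)$, there is exactly one $\source$-$\sink$ path in $P$, namely the sequence of edges $e_i = v_{i-1}v_i$, $i = 1, \ldots, n$. Consequently, in any realization of $\mathcal{P}^{(0)}$ there is no freedom in how to route: a foremost $\source$-$\sink$ journey must traverse $e_1, e_2, \ldots, e_n$ in this order, and it attains the minimum arrival time precisely when it crosses each $e_i$ at the first time step at which $e_i$ is available after $v_{i-1}$ has been reached. Formally, set $t_0 := 0$ and, for $i = 1, \ldots, n$, let $t_i$ be the smallest time step $t > t_{i-1}$ with $e_i \in E_t$ (and $t_i := \infty$ if no such step exists). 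A straightforward induction on $i$ shows that this greedy journey simultaneously minimizes every $t_i$ (if $t_{i-1}\le t_{i-1}'$ for some competing journey, then the first appearance of $e_i$ after $t_{i-1}$ is at most the first appearance after $t_{i-1}'$), hence $X_{\mathcal{P}^{(0)}}(\source,\sink) = t_n$.

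Next I would express $X_{\mathcal{P}^{(0)}}(\source,\sink)$ as a sum of independent geometric random variables. Put $W_i := t_i - t_{i-1}$ for $i = 1, \ldots, n$, so that $X_{\mathcal{P}^{(0)}}(\source,\sink) = \sum_{i=1}^n W_i$. The claim is that $W_1, \ldots, W_n$ are mutually independent and each $W_i$ is geometrically distributed with success probability $p_{e_i}$; in particular $\mathbb{E}[W_i] = 1/p_{e_i}$. To see this, note that $t_{i-1}$ is a function of the appearances of the edges $e_1, \ldots, e_{i-1}$ alone, whereas, conditioned on $t_{i-1} = \tau < \infty$, the variable $W_i$ equals the number of steps after $\tau$ until the first appearance of $e_i$. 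In the memoryless model $e_i$ is present at each time step independently with probability $p_{e_i}$ and independently of all other edges; hence, given $t_{i-1} = \tau$, $W_i$ is geometric with parameter $p_{e_i}$ regardless of the value of $\tau$. This yields both the distributional claim for $W_i$ and its independence from $(W_1, \ldots, W_{i-1})$; iterating over $i$ gives full mutual independence.

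Finally, I would conclude by linearity of expectation:
\[
\mathbb{E}[X_{\mathcal{P}^{(0)}}(\source,\sink)] = \sum_{i=1}^n \mathbb{E}[W_i] = \sum_{i=1}^n \frac{1}{p_{e_i}} = \sum_{e \in E} \frac{1}{p_e},
\]
which is the desired identity. The degenerate case $p_e = 0$ for some $e \in E$ is consistent with this formula under the convention $1/0 = +\infty$, since then $\sink$ is almost surely never reached and $X_{\mathcal{P}^{(0)}}(\source,\sink) = \infty$ almost surely.

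There is essentially no obstacle here: because a path offers no routing choice, the only point requiring care is the independence argument, i.e.~that conditioning on the (random) time $t_{i-1}$ at which we reach $v_{i-1}$ does not change the law of the subsequent waiting time $W_i$. This is exactly the memorylessness of the model (independence across time steps and across edges), and it is what makes the successive inter-arrival times behave like fresh independent geometric variables with means $1/p_{e_i}$.
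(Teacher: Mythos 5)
Your proof is correct and follows essentially the same route as the paper: both decompose the arrival time into per-edge waiting times, observe that each is geometric with mean $1/p_{e_i}$, and conclude by linearity of expectation. You merely spell out in more detail the points the paper leaves implicit (that the greedy journey is foremost, and that the waiting times are independent — the latter being unnecessary for linearity of expectation anyway).
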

\begin{proof}
	Consider a stochastic temporal graph with a single edge $e$ which appears every day independently with probability $p_e$, and
	let $X_e$ be a random variable equal to the duration of the foremost journey 
	from one of the endpoints of $e$ to the other. 
	Then $X_{\mathcal{P}^{(0)}}(\source,\sink) = \sum_{e \in E} X_e$.
	Notice that $X_e$ is a geometric random variable with probability mass function 
	$\Pr[X_e = i] = (1-p_e)^{i-1}p_e$ for $i=1,2,3,...$, and expectation $\mathbb{E}[X_e] = \frac{1}{p_e}$.
	%Furthermore, we observe that $X_\mathcal{G}(\source,\sink) = \sum_{e \in E} X_e$.
	Therefore $\mathbb{E}[X_{\mathcal{P}^{(0)}}(\source,\sink)] = \sum_{e \in E} \mathbb{E}[X_e] = \sum_{e \in E} \frac{1}{p_e}$.
\end{proof} 

\noindent
Let us denote by $\mu$ the expectation $\mu \stackrel{\text{def}}{=} \mathbb{E}[X_{\mathcal{P}^{(0)}}(\source,\sink)] = \sum_{e \in E} \frac{1}{p_e}$.
Note that
\begin{equation}\label{eq:muSum}
\mu = \sum_{i=1}^{\infty} \Pr[X_{\mathcal{P}^{(0)}}(\source,\sink) \geq i]. 
\end{equation}
In the remainder of this section we will show that the first $O(\mu \ln \mu)$ terms of sum (\ref{eq:muSum}) already 
give a very good approximation of $\mu$. In our analysis we will use the following bound.

\begin{theorem}[\cite{janson2018tail}]\label{th:tailSumGeom}
	Let $X = \sum_{i=1}^n X_i$, where $n \geq 1$ and $X_i$, $i = 1, \ldots, n,$ are independent
	geometric random variables with parameters $p_1, p_2, \ldots, p_n \in (0,1]$, respectively.
	Let $\mu = \mathbb{E}[X] = \sum_{i=1}^n \frac{1}{p_i}$. Then for any $\lambda \geq 1$,
	\[
	\Pr[X \geq \lambda \mu] \leq e^{1-\lambda}.
	\]
\end{theorem}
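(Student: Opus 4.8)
The plan is to use the Cram\'er--Chernoff (exponential moment) method. I would start from the moment generating function of a single geometric variable on $\{1,2,\dots\}$: for $t>0$ with $e^{t}<(1-p_i)^{-1}$ one has $\mathbb{E}[e^{tX_i}]=\frac{p_ie^{t}}{1-(1-p_i)e^{t}}$. The substitution $1-r:=e^{-t}$ turns this into the clean identity $\mathbb{E}[e^{tX_i}]=\frac{1}{1-r/p_i}$, valid for $0<r<\min_j p_j$. By independence, $\mathbb{E}[e^{tX}]=\prod_{i=1}^{n}\frac{1}{1-r/p_i}$, and the elementary inequality $\prod_i(1-a_i)\ge 1-\sum_i a_i$ (for $a_i\in[0,1]$) gives $\mathbb{E}[e^{tX}]\le\frac{1}{1-r\mu}$ as soon as $0<r<1/\mu$; note that $1/\mu\le p_j$ for every $j$, so this range of $r$ automatically respects the earlier constraint.

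Next I would apply Markov's inequality to $e^{tX}$. Since $X$ is integer-valued, $\lceil\lambda\mu\rceil\ge\lambda\mu$, and $e^{-t}=1-r\in(0,1)$, we get
\[
\Pr[X\ge\lambda\mu]\le(1-r)^{\lceil\lambda\mu\rceil}\,\mathbb{E}[e^{tX}]\le\frac{(1-r)^{\lambda\mu}}{1-r\mu}\le\frac{e^{-r\lambda\mu}}{1-r\mu}.
\]
Writing $s:=r\mu\in(0,1)$ and minimizing $\frac{e^{-\lambda s}}{1-s}$ over $s$ yields the optimal choice $s=1-1/\lambda$ (which lies in $(0,1)$ since $\lambda\ge 1$) and the bound $\lambda\,e^{1-\lambda}$.

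The delicate step, which I expect to be the real obstacle, is removing the spurious factor $\lambda$ so as to reach the stated $e^{1-\lambda}$: the bare exponential-moment method cannot do better than $\lambda e^{1-\lambda}$. That the truth is in fact smaller is already visible for a single geometric, where a direct computation gives $\Pr[X\ge\lambda/p]=(1-p)^{\lceil\lambda/p\rceil-1}\le(1-p)^{\lambda/p-1}=\exp\!\big(-(\lambda/p-1)(-\ln(1-p))\big)\le\exp\!\big(-(\lambda/p-1)p\big)=e^{p-\lambda}\le e^{1-\lambda}$, where I used $-\ln(1-p)\ge p$ and $p\le 1$. For a general sum one therefore needs a refinement of the Chernoff step: a natural route is to identify the extremal configuration by letting $\mu\to\infty$, comparing the rescaled sum $X/\mu$ with the limiting Gamma-type laws and exploiting monotonicity in the number of summands; in practice, though, I would simply invoke the sharper analysis of~\cite{janson2018tail}. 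The Chernoff skeleton above is routine; it is the tightening of the constant that carries the weight.
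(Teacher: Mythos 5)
This theorem is imported by the paper from \cite{janson2018tail} and stated without proof, so there is no internal argument to compare yours against; the relevant question is whether your derivation is sound and whether your appeal to the reference is legitimate. On both counts you do well. The Chernoff skeleton is correct: the substitution $1-r=e^{-t}$ giving $\mathbb{E}[e^{tX_i}]=(1-r/p_i)^{-1}$, the Weierstrass-type product bound $\mathbb{E}[e^{tX}]\le (1-r\mu)^{-1}$, the observation that $r<1/\mu$ automatically satisfies $r<\min_j p_j$, and the optimization $s=1-1/\lambda$ are all right, and they do yield $\Pr[X\ge\lambda\mu]\le\lambda e^{1-\lambda}$. Your diagnosis of the obstruction is also accurate: the bare exponential-moment method cannot reach $e^{1-\lambda}$ even for $n=1$ (in the limit $p\to 0$ a single geometric rescales to an $\mathrm{Exp}(1)$ variable, whose exact tail is $e^{-\lambda}$ while Chernoff gives $\lambda e^{1-\lambda}$), so the constant-free bound genuinely requires the finer analysis of Janson, whose argument produces a compensating prefactor $\lambda^{-1}$ that exactly cancels the $e^{\ln\lambda}$ loss. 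Since you defer precisely that step to the same reference the paper itself relies on, your treatment is consistent with the paper's; your complete elementary verification of the $n=1$ case is a nice bonus the paper does not include.

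One remark worth adding: nothing downstream actually depends on the sharper constant. The theorem is used only in Lemma~\ref{lem:pathExpectedApprox} to truncate the tail sum $\sum_{i>\tau}\Pr[X\ge i]$. With the weaker bound $\Pr[X\ge i]\le \frac{i}{\mu}\,e^{1-i/\mu}$ that your Chernoff argument delivers unconditionally, the same conclusion holds after enlarging $\tau$ by a constant factor (e.g.\ $\tau\ge \mu\bigl(2\ln\frac{\mu}{\varepsilon}+2\bigr)$ suffices), which changes only constants in the FPTAS running time of Theorem~\ref{th:generalFPTAS}. So if one wished to make the paper self-contained, your weaker, fully elementary bound would already do the job.
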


\begin{lemma}\label{lem:pathExpectedApprox}
	Let $\varepsilon$ be a number such that $0 < \varepsilon \leq 1$. Then
	\begin{equation}\label{eq:pathExpApprox}
	\mu - \sum_{i=1}^{\tau} \Pr[X_{\mathcal{P}^{(0)}}(\source,\sink) \geq i] = 
	\sum_{i=\tau+1}^{\infty} \Pr[X_{\mathcal{P}^{(0)}}(\source,\sink) \geq i]< \varepsilon,
	\end{equation}
	for every $\tau \geq \mu \left( \ln \frac{\mu}{\varepsilon} + 1 \right)$, where $\mu = \mathbb{E}[X_{\mathcal{P}^{(0)}}(\source,\sink)]$.
\end{lemma}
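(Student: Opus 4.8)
The equality in~(\ref{eq:pathExpApprox}) is immediate from the identity~(\ref{eq:muSum}), so the whole content of the lemma is the bound on the tail $\sum_{i=\tau+1}^{\infty} \Pr[X_{\mathcal{P}^{(0)}}(\source,\sink) \geq i] < \varepsilon$, and the plan is to obtain it directly from the Chernoff-type bound of Theorem~\ref{th:tailSumGeom}. First I would record two trivial facts that will be used throughout: since $P$ has at least one edge and each $p_e \leq 1$, we have $\mu = \sum_{e\in E}\frac{1}{p_e} \geq 1$; and since $\varepsilon \leq 1 \leq \mu$ we get $\ln\frac{\mu}{\varepsilon} \geq 0$, hence the hypothesis $\tau \geq \mu\!\left(\ln\frac{\mu}{\varepsilon}+1\right)$ already forces $\tau \geq \mu$. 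In particular every index appearing in the tail sum satisfies $i \geq \tau+1 > \mu$, which is exactly what is needed to legitimately invoke Theorem~\ref{th:tailSumGeom}.

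Next, for each such $i$ I would apply Theorem~\ref{th:tailSumGeom} with $\lambda = i/\mu \geq 1$ (note $X_{\mathcal{P}^{(0)}}(\source,\sink) = \sum_{e\in E} X_e$ is precisely a sum of independent geometrics with mean $\mu$, as established in the proof of Lemma~\ref{lem:pathExpected}), giving $\Pr[X_{\mathcal{P}^{(0)}}(\source,\sink) \geq i] \leq e^{1 - i/\mu}$. Summing the resulting geometric series yields
\[
\sum_{i=\tau+1}^{\infty} \Pr[X_{\mathcal{P}^{(0)}}(\source,\sink) \geq i] \ \leq\ \sum_{i=\tau+1}^{\infty} e^{1 - i/\mu} \ =\ \frac{e\cdot e^{-(\tau+1)/\mu}}{1 - e^{-1/\mu}}.
\]

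Finally I would substitute the lower bound on $\tau$: from $\tau \geq \mu\!\left(\ln\frac{\mu}{\varepsilon}+1\right)$ we get $e^{-(\tau+1)/\mu} \leq \frac{\varepsilon}{\mu}\,e^{-1}\,e^{-1/\mu}$, so the right-hand side above is at most $\frac{\varepsilon}{\mu}\cdot\frac{e^{-1/\mu}}{1-e^{-1/\mu}} = \frac{\varepsilon}{\mu}\cdot\frac{1}{e^{1/\mu}-1}$. Using the strict inequality $e^{x} > 1+x$ for $x>0$ with $x = 1/\mu$ gives $\frac{1}{e^{1/\mu}-1} < \mu$, and the two factors cancel to leave a quantity strictly less than $\varepsilon$, as claimed. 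There is no genuine obstacle here; the only points requiring a little care are ensuring $\tau \geq \mu$ so that the tail bound of Theorem~\ref{th:tailSumGeom} applies to every term of the sum, and extracting the \emph{strict} inequality (rather than $\leq \varepsilon$) from the strict convexity estimate $e^{1/\mu}-1 > 1/\mu$.
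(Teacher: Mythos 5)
Your proposal is correct and follows essentially the same route as the paper: apply Theorem~\ref{th:tailSumGeom} termwise with $\lambda = i/\mu$, sum the resulting geometric series, substitute the lower bound on $\tau$, and finish with an elementary lower bound on $e^{1/\mu}-1$ (you use $e^x>1+x$ where the paper uses $e^x\ge 1+x+x^2/2$; both yield the strict inequality). Your explicit remarks that $\mu\ge 1$ and hence $\tau\ge\mu$ are a welcome clarification of a step the paper leaves implicit.
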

\begin{proof}
	The equality in (\ref{eq:pathExpApprox}) follows from (\ref{eq:muSum}). In the rest of the proof we show the inequality.
	Since $\tau \geq \mu \left( \ln \frac{\mu}{\varepsilon} + 1 \right) \geq \mu$, using Theorem~\ref{th:tailSumGeom} we have

	\begin{equation*}
	\begin{split}
	\sum_{i=\tau+1}^{\infty} \Pr[X_{\mathcal{P}^{(0)}}(\source,\sink) \geq i] & \leq  
	\sum_{i=\tau+1}^{\infty} e^{1-\frac{i}{\mu}} 
	%%%=e^{1-\frac{\tau}{\mu}} \sum_{i=1}^{\infty} e^{-\frac{i}{\mu}} 
	= \frac{e^{1-\tau/\mu}}{e^{1/\mu} - 1} \leq
	\frac{e^{1-\mu \left( \ln \frac{\mu}{\varepsilon} + 1 \right)/\mu}}{e^{1/\mu}-1} =\\
	&=\frac{\varepsilon}{\mu (e^{1/\mu}-1)} \leq
	\frac{\varepsilon}{\mu (1 + \frac{1}{\mu} + \frac{1}{2\mu^2} - 1)} = 
	\frac{\varepsilon}{1 + \frac{1}{2\mu}} < \varepsilon,
	\end{split}
	\end{equation*}
	where we used the inequality $e^x \ge 1 + x + x^2/2$ which holds for every $x \geq 0$.
\end{proof}

%%%%%%%%%%%%%%%%%%%%%%%%%%%%%%%%%%%%%%%%%%%%%%
%%%%%%%%%%%%%%%%%%%%%%%%%%%%%%%%%%%%%%%%%%%%%%
\subsubsection{A general FPTAS approach}
\label{sec:generalFPTAS}
%%%%%%%%%%%%%%%%%%%%%%%%%%%%%%%%%%%%%%%%%%%%%%
%%%%%%%%%%%%%%%%%%%%%%%%%%%%%%%%%%%%%%%%%%%%%%

While deriving analytically and computing efficiently the exact solution of \Po\ in a path
is an easy task (cf. Lemma~\ref{lem:pathExpected}), it does not seem to be trivial for a slight generalization of paths,
called \textit{parallel compositions of paths}. A \text{parallel composition of paths} is the graph obtained 
from a collection of disjoint paths $P_1, P_2, \ldots, P_\ell$ with end vertices $\source_i, \sink_i$, $i =1, \ldots, \ell$,
respectively, by identifying the vertices $\source_1, \source_2, \ldots, \source_\ell$ in a single vertex $\source$, and
by identifying the vertices $\sink_1, \sink_2, \ldots, \sink_\ell$ in a single vertex $\sink$.

It is not clear whether there exists an efficient procedure for computing the expected arrival time from $\source$ to $\sink$
in a parallel composition of paths, even if the parallel paths are of equal length and all the probabilities of edge appearance are the same.
In this section we present a general approach for developing \emph{$\varepsilon$-additive approximation algorithms}\footnote{A feasible solution is \emph{$\varepsilon$-additive approximate} if it is within $\varepsilon$ additive factor from the optimal value.
	An algorithm is called an \emph{$\varepsilon$-additive approximation algorithm} if it returns an $\varepsilon$-additive approximate
	solution for any instance.}
for computing the expected arrival time of a foremost journey in special classes of stochastic temporal graphs.
In Section~\ref{sec:FPTASforSP} we apply this approach to develop an efficient $\varepsilon$-additive approximation algorithm
for the problem on the class of stochastic temporal graphs with underlying graphs being series-parallel graphs, which generalize
parallel compositions of paths and graphs, in which all simple $\source$-$\sink$ paths are of the same length.

Throughout the section we denote by $\mathcal{G}^{(0)} = (G=(V,E), \{ p_e \} )$ 
a memoryless stochastic temporal graph with $n$ vertices and $m$ edges, and by $\source, \sink \in V$ two distinct vertices in $G$. 
Furthermore, we denote by $H=(V,E,w)$ the weighted graph obtained from the underlying graph $G$ by assigning
to every edge $e \in E$ the weight $w(e) = \frac{1}{p_e}$.

\begin{definition}
\label{def:temporal-subgraph}
Let $\mathcal{G}^{(0)}$ be a memoryless stochastic temporal graph, where $G$ is the underlying graph. 
A \emph{stochastic temporal subgraph} $\mathcal{H}^{(0)}$ of $\mathcal{G}^{(0)}$ is a stochastic temporal 
graph which has a subgraph $H\subseteq G$ as an underlying graph and inherits 
all edge appearance probabilities from $\mathcal{G}^{(0)}$.
\end{definition}

\begin{observation}\label{obs:subgraph}
Let $\mathcal{H}^{(0)}$ be a \emph{stochastic temporal subgraph} of the stochastic temporal graph $\mathcal{G}^{(0)}$. 
Then for every natural number $i$ we have 
$\Pr[X_{\mathcal{G}^{(0)}}(\source,\sink) \geq i] \leq \Pr[X_{\mathcal{H}^{(0)}}(\source,\sink) \geq i]$,
and hence $\mathbb{E}[X_{\mathcal{G}^{(0)}}(\source,\sink)] \leq \mathbb{E}[X_{\mathcal{H}^{(0)}}(\source,\sink)]$.
\end{observation}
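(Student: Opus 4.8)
The plan is to use a straightforward coupling between the two stochastic processes. Since $H \subseteq G$ means $E_H \subseteq E$, and $\mathcal{H}^{(0)}$ inherits every appearance probability $p_e$ from $\mathcal{G}^{(0)}$, I would realize both processes on a common probability space as follows: draw, independently for each edge $e \in E$ and each day $t \in \mathbb{N}$, a Bernoulli$(p_e)$ random variable $B_{e,t}$; set the snapshot of $\mathcal{G}^{(0)}$ at time $t$ to have edge set $\{e \in E : B_{e,t} = 1\}$ and the snapshot of $\mathcal{H}^{(0)}$ at time $t$ to have edge set $\{e \in E_H : B_{e,t} = 1\}$. This is a valid coupling, since the marginal law of each of the two processes is exactly the one prescribed by the memoryless model, and under it every common edge $e \in E_H$ is present in $\mathcal{H}^{(0)}$ at time $t$ if and only if it is present in $\mathcal{G}^{(0)}$ at time $t$.

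Next I would establish a pointwise domination. Fix a realization of the variables $B_{e,t}$. Any $\source$-$\sink$ journey $\big((e_1,t_1),\ldots,(e_x,t_x)\big)$ that exists in this realization of $\mathcal{H}^{(0)}$ uses only edges of $E_H$, each satisfying $B_{e_j,t_j} = 1$; hence exactly the same sequence of time-edges is a valid $\source$-$\sink$ journey in the corresponding realization of $\mathcal{G}^{(0)}$, with the same arrival time $t_x$. Therefore the set of arrival times of $\source$-$\sink$ journeys in $\mathcal{H}^{(0)}$ is contained in the set of arrival times of $\source$-$\sink$ journeys in $\mathcal{G}^{(0)}$, and taking the minimum of each set — with the convention $\min \emptyset = +\infty$ when no $\source$-$\sink$ journey exists — yields $X_{\mathcal{G}^{(0)}}(\source,\sink) \leq X_{\mathcal{H}^{(0)}}(\source,\sink)$ on the whole probability space.

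From this pointwise inequality both claimed statements follow at once. For every natural number $i$, the event $\{X_{\mathcal{G}^{(0)}}(\source,\sink) \geq i\}$ is contained in the event $\{X_{\mathcal{H}^{(0)}}(\source,\sink) \geq i\}$, so $\Pr[X_{\mathcal{G}^{(0)}}(\source,\sink) \geq i] \leq \Pr[X_{\mathcal{H}^{(0)}}(\source,\sink) \geq i]$; summing over $i \geq 1$ and using the identity $\mathbb{E}[X] = \sum_{i=1}^{\infty} \Pr[X \geq i]$ already invoked in~(\ref{eq:muSum}) gives $\mathbb{E}[X_{\mathcal{G}^{(0)}}(\source,\sink)] \leq \mathbb{E}[X_{\mathcal{H}^{(0)}}(\source,\sink)]$, the comparison being termwise between series of nonnegative terms and hence valid even if one or both expectations are $+\infty$. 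There is essentially no hard step here: the only points requiring care are to state the coupling precisely and to keep the $+\infty$ convention consistent so that the argument also covers realizations (and instances) in which no $\source$-$\sink$ journey exists at all.
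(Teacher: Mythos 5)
Your proof is correct: the paper states this as an Observation with no proof at all, treating it as immediate, and your coupling argument (realize both processes from the same Bernoulli variables, note that every journey in $\mathcal{H}^{(0)}$ is a journey in $\mathcal{G}^{(0)}$, and conclude pointwise domination of the arrival times) is exactly the standard justification the authors leave implicit. Your care with the $\min\emptyset=+\infty$ convention and with termwise comparison of the tail sums is appropriate and introduces no gaps.
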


\noindent
The following lemma is a direct consequence of Observation~\ref{obs:subgraph} and Lemma~\ref{lem:pathExpected}.

\begin{lemma}
	Let $w^*$ be the minimum weight of an $\source$-$\sink$ path in $H$. 
	Then $\mathbb{E}[X_{\mathcal{G}^{(0)}}(\source,\sink)]  \leq w^*$.
\end{lemma}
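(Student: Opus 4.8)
The plan is to exhibit a single path whose associated stochastic temporal subgraph already has expected foremost arrival time exactly $w^*$, and then invoke monotonicity. Concretely, let $P = (\source = v_0, v_1, \ldots, v_r = \sink)$ be an $\source$-$\sink$ path in $G$ of minimum weight in $H$, so that $\sum_{e \in P} w(e) = \sum_{e \in P} \frac{1}{p_e} = w^*$. Such a path exists whenever $\source$ and $\sink$ are connected in $G$ (and if they are not, the statement is vacuous since $w^* = \infty$ and $X_{\mathcal{G}^{(0)}}(\source,\sink) = \infty$ as well, so the inequality holds trivially).

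First I would form the stochastic temporal subgraph $\mathcal{P}^{(0)}$ of $\mathcal{G}^{(0)}$ whose underlying graph is the path $P$, inheriting the appearance probabilities $\{p_e : e \in P\}$ from $\mathcal{G}^{(0)}$; this is a legitimate stochastic temporal subgraph in the sense of Definition~\ref{def:temporal-subgraph}. Applying Observation~\ref{obs:subgraph} with $\mathcal{H}^{(0)} = \mathcal{P}^{(0)}$ yields $\mathbb{E}[X_{\mathcal{G}^{(0)}}(\source,\sink)] \leq \mathbb{E}[X_{\mathcal{P}^{(0)}}(\source,\sink)]$, i.e.~deleting edges can only delay the foremost journey.

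Next I would compute the right-hand side exactly. Since the underlying graph of $\mathcal{P}^{(0)}$ is a path from $\source$ to $\sink$, Lemma~\ref{lem:pathExpected} gives $\mathbb{E}[X_{\mathcal{P}^{(0)}}(\source,\sink)] = \sum_{e \in P} \frac{1}{p_e} = w^*$. Combining the two displays gives $\mathbb{E}[X_{\mathcal{G}^{(0)}}(\source,\sink)] \leq w^*$, as desired.

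There is no real obstacle here: the only points that need a word of care are (i) that $w^*$ is a minimum over $\source$-$\sink$ paths of $H$, and $H$ shares its edge set with $G$, so the minimizing path $P$ is genuinely a subgraph of $G$; and (ii) the degenerate disconnected case, handled by the convention $w^* = \infty$. Both are immediate, so the proof is just the two-line chain $\mathbb{E}[X_{\mathcal{G}^{(0)}}(\source,\sink)] \leq \mathbb{E}[X_{\mathcal{P}^{(0)}}(\source,\sink)] = w^*$.
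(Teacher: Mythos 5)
Your proof is correct and follows exactly the route the paper intends: the lemma is stated there as a direct consequence of Observation~\ref{obs:subgraph} (monotonicity under passing to a stochastic temporal subgraph) and Lemma~\ref{lem:pathExpected} (the expected arrival time on a path equals $\sum_{e}\frac{1}{p_e}$), applied to a minimum-weight $\source$-$\sink$ path. Your extra remark on the disconnected case is a harmless addition the paper omits.
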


\begin{theorem}\label{th:generalFPTAS}
	Let $c\in \mathbb{N}$ and $\varepsilon \in (0,1]$. 
	Let $p_e \geq \frac{1}{n^c}$ for every $e \in E$ and suppose that there exists an algorithm $A$ 
	that computes in time $O\left( f(\ell,n,m) \right)$ the probabilities $\Pr[X_{\mathcal{G}^{(0)}}(\source,\sink) \geq i]$ for all $i=1,\ldots,\ell$.
	Then there exists an algorithm $B$ that approximates $\mathbb{E}[X_{\mathcal{G}^{(0)}}(\source,\sink)]$
	within the additive factor of $\varepsilon$ in time 
	\[
	O\Big( f\left( n^{c+1} \ln \frac{n}{\varepsilon}, n, m \right) + n \ln n + m\Big).
	\]
	Consequently, if $f(\ell,n,m)$ is a polynomial in variables $\ell,n$, and $m$, then $B$ is an FPTAS on the instance 
	$(\mathcal{G}^{(0)}, \source, \sink)$.
\end{theorem}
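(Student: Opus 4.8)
The plan is to combine a cheap shortest-path computation with the truncation bound of Lemma~\ref{lem:pathExpectedApprox}, transported from paths to arbitrary underlying graphs via the monotonicity in Observation~\ref{obs:subgraph}. First I would build the weighted graph $H=(V,E,w)$ with $w(e)=\frac{1}{p_e}$ in time $O(m)$ and run Dijkstra's algorithm from $\source$ in time $O(n\log n+m)$ to obtain a minimum-weight $\source$-$\sink$ path $P^*$ of weight $w^*$. Since every $p_e\geq \frac{1}{n^c}$, each weight is at most $n^c$ and a simple path has at most $n-1$ edges, so $w^*\leq (n-1)n^c<n^{c+1}$; moreover, by the lemma immediately preceding this theorem, $\mu:=\mathbb{E}[X_{\mathcal{G}^{(0)}}(\source,\sink)]\leq w^*$.

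Next I would fix the truncation point $\tau:=\lceil w^*(\ln\frac{w^*}{\varepsilon}+1)\rceil$ and bound the discarded tail. Let $\mathcal{H}^{(0)}$ be the stochastic temporal subgraph of $\mathcal{G}^{(0)}$ whose underlying graph is $P^*$; by Lemma~\ref{lem:pathExpected} its expected arrival time is exactly $w^*$, so Lemma~\ref{lem:pathExpectedApprox} (applied with $\mu$ there equal to $w^*$) gives $\sum_{i=\tau+1}^{\infty}\Pr[X_{\mathcal{H}^{(0)}}(\source,\sink)\geq i]<\varepsilon$. Observation~\ref{obs:subgraph} then yields $\sum_{i=\tau+1}^{\infty}\Pr[X_{\mathcal{G}^{(0)}}(\source,\sink)\geq i]\leq\sum_{i=\tau+1}^{\infty}\Pr[X_{\mathcal{H}^{(0)}}(\source,\sink)\geq i]<\varepsilon$, so truncating the tail at $\tau$ costs at most $\varepsilon$.

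Algorithm $B$ then calls $A$ with $\ell=\tau$ to obtain the values $\Pr[X_{\mathcal{G}^{(0)}}(\source,\sink)\geq i]$ for $i=1,\ldots,\tau$ and outputs $\mu_B:=\sum_{i=1}^{\tau}\Pr[X_{\mathcal{G}^{(0)}}(\source,\sink)\geq i]$. Since $\mathbb{E}[X_{\mathcal{G}^{(0)}}(\source,\sink)]=\sum_{i=1}^{\infty}\Pr[X_{\mathcal{G}^{(0)}}(\source,\sink)\geq i]$ and every summand is nonnegative, the previous paragraph gives $\mathbb{E}[X_{\mathcal{G}^{(0)}}(\source,\sink)]-\varepsilon<\mu_B\leq\mathbb{E}[X_{\mathcal{G}^{(0)}}(\source,\sink)]$, which is the required additive guarantee. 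For the running time, $w^*<n^{c+1}$ (and $n\geq 2$, $\varepsilon\leq 1$) gives $\tau=O\!\left(n^{c+1}\ln\frac{n^{c+1}}{\varepsilon}\right)=O\!\left(n^{c+1}\ln\frac{n}{\varepsilon}\right)$ after absorbing the constant factor depending on $c$, so the call to $A$ runs in $O\!\left(f(n^{c+1}\ln\frac{n}{\varepsilon},n,m)\right)$; adding the $O(n\log n+m)$ for constructing $H$ and running Dijkstra and the $O(\tau)$ for the final summation (which is dominated by the call to $A$) yields the claimed bound, and if $f$ is polynomial in $\ell,n,m$ then $B$ runs in time polynomial in $n$ and $\frac{1}{\varepsilon}$, i.e.~it is an FPTAS.

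I expect the only genuinely delicate point to be the bookkeeping that lets $B$ pick a truncation threshold depending on the \emph{computable} upper bound $w^*$ rather than on the unknown true expectation $\mu$: this is legitimate precisely because Lemma~\ref{lem:pathExpectedApprox} applied to the path $P^*$ already controls exactly the quantity $w^*=\mathbb{E}[X_{\mathcal{H}^{(0)}}(\source,\sink)]$, and Observation~\ref{obs:subgraph} then monotonically transfers the tail bound down to $\mathcal{G}^{(0)}$. Everything else is a routine substitution of $w^*<n^{c+1}$ into the error and time expressions, plus the standard identity $\mathbb{E}[X]=\sum_{i\geq 1}\Pr[X\geq i]$ for nonnegative integer random variables.
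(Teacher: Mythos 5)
Your proposal is correct and follows essentially the same route as the paper's proof: compute the minimum-weight $\source$-$\sink$ path in $H$ via Dijkstra, set $\tau = w^*\left(\ln\frac{w^*}{\varepsilon}+1\right)$, bound the tail using Lemma~\ref{lem:pathExpectedApprox} applied to the path subgraph together with Observation~\ref{obs:subgraph}, and invoke algorithm $A$ on the truncated range. The only (harmless) difference is that you explicitly take a ceiling on $\tau$ and note $\mu\leq w^*$, details the paper leaves implicit.
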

\begin{proof}
	Let $P = (s=v_0, v_1, \ldots, v_r = y)$ be a minimum weight $\source$-$\sink$ path in $H$, and let $\mathcal{P}^{(0)}$ be the 
	stochastic temporal subgraph of $\mathcal{G}^{(0)}$ restricted to the vertices of $P$. 
	For convenience, let us denote $e_i = v_{i-1}v_i$ for every $i=1, \ldots, r$.
	Then, by definition and Lemma~\ref{lem:pathExpected}, the weight $w^*$ of $P$ is equal to 
	$\sum_{i=1}^{r} \frac{1}{p_{e_i}} = \mathbb{E}[X_{\mathcal{P}^{(0)}}(\source,\sink)]$.
	Let $\tau := w^* \left( \ln \frac{w^*}{\varepsilon} + 1 \right)$.
	Then, by Observation~\ref{obs:subgraph} and Lemma~\ref{lem:pathExpectedApprox}, we have that
	
	\[
	\sum_{i=\tau+1}^{\infty} \Pr[X_{\mathcal{G}^{(0)}}(\source,\sink) \geq i] \leq 
	\sum_{i=\tau+1}^{\infty} \Pr[X_{\mathcal{P}^{(0)}}(\source,\sink) \geq i] < \varepsilon,
	\]
	and hence
\begin{eqnarray*}
\sum_{i=1}^{\tau }\Pr [X_{\mathcal{G}^{(0)}}(\source,\sink)\geq i] \
&\leq& 
\mathbb{E}[X_{\mathcal{G}^{(0)}}(\source,\sink)] 
 \ = \ \sum_{i=1}^{\infty}\Pr [X_{\mathcal{G}^{(0)}}(\source,\sink)\geq i] \\
&<&\sum_{i=1}^{\tau }\Pr [X_{\mathcal{G}^{(0)}}(\source,\sink)\geq i]+\varepsilon, 
\end{eqnarray*}
that is, $\sum_{i=1}^{\tau} \Pr[X_{\mathcal{G}^{(0)}}(\source,\sink) \geq i]$
approximates $\mathbb{E}[X_{\mathcal{G}^{(0)}}(\source,\sink)]$ within the additive factor of $\varepsilon$.
	
	\medskip
	\noindent
	Now we define the desired algorithm $B$ as follows:
	\begin{enumerate}
		\item Construct the graph $H$ and compute the minimum weight $w^*$ of an $\source$-$\sink$ path in $H$
		using Dijkstra's algorithm.
		\item Using algorithm $A$, compute the probabilities $\Pr[X_{\mathcal{G}^{(0)}}(\source,\sink) \geq i]$ for every $i=1,\ldots,\tau$,
		where $\tau = w^* \left( \ln \frac{w^*}{\varepsilon} + 1 \right)$.
		\item Output $\sum_{i=1}^{\tau} \Pr[X_{\mathcal{G}^{(0)}}(\source,\sink) \geq i]$.
	\end{enumerate}
	
	The above discussion implies that algorithm $B$ correctly computes the declared approximation
	of $\mathbb{E}[X_{\mathcal{G}^{(0)}}(\source,\sink)]$. It remains to justify the time complexity.
	First, Dijkstra's algorithm can be implemented to work in time $O(n \ln n + m)$ \cite{fredman1987fibonacci}.
	Second, the assumption on $p_e$'s implies that $w^* = O(n^{c+1})$, and hence
	$\tau = w^* \left( \ln \frac{w^*}{\varepsilon} + 1 \right) = O\left( n^{c+1} \ln \frac{n}{\varepsilon} \right)$.
	Therefore the assumption of the theorem implies that the last two steps of the algorithm
	run in time $O\Big( f\big( n^{c+1} \ln \frac{n}{\varepsilon}, n, m \big) \Big)$, which in turn implies the complexity bound
	and completes the proof.
\end{proof}

%%%%%%%%%%%%%%%%%%%%%%%%%%%%%%%%%%%%%%%%%%%%%%
%%%%%%%%%%%%%%%%%%%%%%%%%%%%%%%%%%%%%%%%%%%%%%
\subsubsection{The FPTAS for stochastic temporal series-parallel graphs}
\label{sec:FPTASforSP}
%%%%%%%%%%%%%%%%%%%%%%%%%%%%%%%%%%%%%%%%%%%%%%
%%%%%%%%%%%%%%%%%%%%%%%%%%%%%%%%%%%%%%%%%%%%%%

In the present section we use the approach from Section~\ref{sec:generalFPTAS} to derive 
a polynomial-time approximation scheme for stochastic temporal series-parallel graphs.
According to Theorem~\ref{th:generalFPTAS}, the development of such an algorithm reduces
to the design of an efficient procedure of computing probabilities of the form $\Pr[X_{\mathcal{G}^{(0)}}(\source,\sink) \geq i]$,
which is the main goal of this section.

Let $G$ be a graph and $\source$ and $\sink$ be two distinct vertices in $G$.
The triple $(G,\source,\sink)$ is a \textit{two-terminal series-parallel} graph, with terminals $\source$ and $\sink$, 
if $G$ can be constructed by a sequence of the following two operations starting from a set of copies of a single-edge
two-terminal series-parallel graph $(K_2, a, b)$.
\begin{enumerate}
	\item \textit{Parallel composition}: Given a pair of two-terminal series-parallel graphs $(H_1, \source_1, \sink_1)$ and 
	$(H_2, \source_2, \sink_2)$, form a new two-terminal series-parallel graph $(G,\source,\sink)$ by
	identifying $\source = \source_1 = \source_2$ and $\sink = \sink_1 = \sink_2$.
	
	\item \textit{Series composition}: Given a pair of two-terminal series-parallel graphs $(H_1, \source_1, \sink_1)$ and 
	$(H_2, \source_2, \sink_2)$, form a new two-terminal series-parallel graph $(G,\source,\sink)$ by identifying
	$\source = \source_1$, $\sink_1 = \source_2$, and $\sink = \sink_2$.
\end{enumerate}

\noindent
Finally, a graph $G$ is called \textit{series-parallel} if $(G,\source,\sink)$ is a two-terminal series-parallel graph
for some pair of distinct vertices $\source$ and $\sink$ of $G$.

The sequence of parallel and series compositions leading to a two-terminal series-parallel graph $(G=(V,E),\source,\sink)$
can be conveniently represented by a decomposition tree.
A binary tree $T = (V_T, E_T)$ with a labeling function $\sigma : V_T \rightarrow \{ \texttt{s,p}\} \cup E \times \{0, 1 \}$ is called 
a \textit{decomposition tree} of a two-terminal series-parallel graph $(G,\source,\sink)$ if and only if the leaves of $T$ are
labeled with elements of $E \times \{0, 1 \}$ such that every $e \in E$ appears in exactly one label, 
internal nodes are labeled with $\texttt{p}$ or $\texttt{s}$, and $G$ can be generated recursively 
using $T$ as follows: If $T$ is a single node $v$ with $\sigma(v) = (e,\alpha)$, then $G$ consists of the single edge $e$ 
with the source being the vertex with the smallest ID, if $\alpha = 0$, and with the source being the vertex with the largest
ID, if $\alpha = 1$.
Otherwise, let $T_1$ (resp. $T_2$) be the right (resp. left) subtree of $T$ and $(H_1,\source_1,\sink_1)$ and
$(H_2,\source_2,\sink_2)$ be two-terminal series-parallel graphs with decomposition trees $T_1$ and $T_2$: 
if $\sigma(v) = \texttt{p}$ (resp. $\texttt{s}$) then $G$ is the parallel (resp. series) composition of 
$(H_1,\source_1,\sink_1)$ and $(H_2,\source_2,\sink_2)$.

We will make use of tree decompositions of series-parallel graphs in our algorithm. 
It is known that a tree decomposition of a series-parallel graph can be constructed in linear time.

\begin{theorem}[\cite{valdes1982recognition}]\label{th:SPdecomp}
	Given a two-terminal series-parallel graph with $n$ vertices and $m$ edges, its tree decomposition can be
	computed in time $O(n+m)$.
\end{theorem}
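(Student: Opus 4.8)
The plan is to prove the theorem constructively, via the classical \emph{reduction procedure} for series-parallel graphs, while maintaining the partial decomposition tree alongside. I would work with multigraphs carrying two distinguished terminals $\source,\sink$ and use the two local reduction rules: a \emph{series reduction} deletes a non-terminal vertex $v$ of degree exactly $2$ with incident edges $e_1=uv$ and $e_2=vw$ and inserts a single new edge $g=uw$; a \emph{parallel reduction} replaces two edges $e_1,e_2$ having the same endpoint pair $\{u,w\}$ by one new edge $g=uw$. The first claim to establish is the combinatorial characterization: $(G,\source,\sink)$ is a two-terminal series-parallel graph if and only if \emph{every} maximal sequence of series and parallel reductions terminates in the single edge $\source\sink$; equivalently (and this is the usable form), \emph{some} such sequence does. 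The ``only if'' direction is immediate by induction along any decomposition of $G$; the ``if'' direction, together with the confluence statement that the order of reductions is irrelevant, is the heart of the Valdes--Tarjan--Lawler argument, which I would carry out by induction on $|E(G)|$, showing that the last new edge created in a terminating sequence is precisely the top composition of a valid decomposition of $(G,\source,\sink)$.

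Second, I would read off the decomposition tree during the reduction. To each edge present at any stage I attach a rooted binary tree: for an original edge $e\in E$ it is a single leaf labeled $(e,\alpha)$, where $\alpha\in\{0,1\}$ records which endpoint of $e$ is currently regarded as its ``left'' terminal; when a reduction merges $e_1,e_2$ into $g$, I create a fresh internal node labeled \texttt{s} (series) or \texttt{p} (parallel) whose two children are the roots of the trees of $e_1$ and $e_2$, and make this node the root of the tree of $g$, orienting the two subtrees in the series case so that the left/right bookkeeping is consistent (this is exactly what the bits $\alpha$ at the leaves encode). Since each reduction removes exactly one edge, there are exactly $m-1$ reductions, producing a binary tree with $m$ leaves and $m-1$ internal nodes; when the procedure ends with the single edge $\source\sink$, the attached tree is a decomposition tree of $(G,\source,\sink)$, its correctness being a direct consequence of the characterization above and the recursive definition of $\sigma$ and $G$ from $T$.

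Third, the running time. The number of vertices and edges never increases and there are at most $m-1$ reductions, so it suffices to implement each reduction in $O(1)$ amortized time. I would represent the current multigraph by doubly-linked adjacency lists in which each edge record carries pointers to its two stub occurrences, so that deleting an edge and splicing in a new one is $O(1)$; a worklist holds the current non-terminal degree-$2$ vertices, and processing one performs a series reduction and re-examines only its (at most two) neighbors, each in $O(1)$. The one delicate point is detecting, right after creating $g=uw$, whether a parallel reduction now applies, i.e.\ whether $u$ and $w$ already had a common edge: a naive adjacency-list scan is too slow. This is handled as in Valdes--Tarjan--Lawler: first radix-sort all endpoint pairs once in $O(n+m)$ to collapse every initial parallel class, then run series reductions, and after each newly created $g=uw$ inspect only whether the endpoint that did not trigger the reduction already had another edge to the other one, which can be kept $O(1)$ amortized with a lazily maintained per-pair pointer. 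Together with $O(n+m)$ for building $T$, this gives the claimed $O(n+m)$ bound.

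The main obstacle I expect is not the pointer bookkeeping but the combinatorial core: establishing confluence of the reduction system and that reaching the single edge $\source\sink$ is equivalent to two-terminal series-parallelness \emph{with the prescribed terminals} --- in particular ruling out the ways a sequence could get stuck at a non-trivial irreducible multigraph, and being careful that a graph can be series-parallel for one terminal pair but not another. The rest is routine amortized analysis.
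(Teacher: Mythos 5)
Your proposal reconstructs the standard Valdes--Tarjan--Lawler reduction-based argument, which is exactly the source the paper cites for this statement: the paper itself gives no proof, importing Theorem~\ref{th:SPdecomp} as a black box from \cite{valdes1982recognition}. Your outline (termination/confluence of series and parallel reductions, building the binary decomposition tree with $m$ leaves and $m-1$ internal nodes on the fly, and the amortized $O(1)$ implementation with the initial radix sort plus careful parallel-edge detection) is the correct and essentially canonical route to this result, and you rightly flag the parallel-detection bookkeeping as the only genuinely delicate implementation point.
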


Let $\mathcal{G}^{(0)} = (G=(V,E), \{ p_e  \} )$ be a stochastic temporal graph with the underlying graph
$G$ being series-parallel. Let also $\source, \sink \in V$ be two distinct vertices in $G$ such that $(G, \source, \sink)$
is a two-terminal series-parallel graph. We will present a dynamic programming algorithm which, for a given natural number $\ell$,
computes the set of probabilities:
\begin{equation}\label{eq:setProb1}
\Pr[X_{\mathcal{G}^{(0)}}(\source,\sink) \geq i], \ \ i=1,\ldots,\ell.
\end{equation}

\noindent
For convenience, the algorithm will also support the set of probabilities:
\begin{equation}\label{eq:setProb2}
\Pr[X_{\mathcal{G}^{(0)}}(\source,\sink) = i], \ \ i=1,\ldots,\ell-1.
\end{equation}

\noindent
Notice that having computed one of the sets of probabilities, the other set can be computed in $O(\ell^2)$ time.

The algorithm is based on the following recursive equations. Since $(G, \source, \sink)$ is a two-terminal series-parallel graph,
it is either a single-edge graph, or can be obtained from smaller two-terminal series-parallel graphs 
$(H_1, \source_1, \sink_1)$, $(H_2, \source_2, \sink_2)$ via one of the two composition operations.

\begin{enumerate}
	\item In the case of a single-edge graph we have for every $i \in [\ell-1]$ that:
	\begin{equation}\label{eq:single}
	\Pr[X_{\mathcal{G}^{(0)}}(\source,\sink) = i] = (1-p)^{i-1}p,
	\end{equation}
	where $p$ is the probability of appearance of the unique edge of the graph.
	
	\item In the case of parallel composition we have for every $i \in [\ell]$ that:
	\begin{equation}\label{eq:parallel}
	\Pr[X_{\mathcal{G}^{(0)}}(\source,\sink) \geq i] = \Pr[X_{\mathcal{H}^{(0)}_1}(\source_1,\sink_1) \geq i] \cdot 
	\Pr[X_{\mathcal{H}^{(0)}_2}(\source_2,\sink_2) \geq i],
	\end{equation}
	where $\mathcal{H}^{(0)}_1$ and $\mathcal{H}^{(0)}_2$ are the stochastic temporal subgraphs of $\mathcal{G}^{(0)}$
	restricted to the vertices of $H_1$ and $H_2$, respectively.
	
	\item In the case of series composition, we have for every $i \in [\ell-1]$ that:
	\begin{equation}\label{eq:series}
	\Pr[X_{\mathcal{G}^{(0)}}(\source,\sink) = i] = \sum_{j = 1}^{i-1} 
	\Pr[X_{\mathcal{H}^{(0)}_1}(\source_1,\sink_1) = j] \cdot 
	\Pr[X_{\mathcal{H}^{(0)}_2}(\source_2,\sink_2) = i-j].
	\end{equation}	
\end{enumerate}

\begin{algorithm}[ht]
	\caption{\textsc{Compute SP probabilities}}  
	\label{alg:dynam}  
	\begin{algorithmic}[1]
		\REQUIRE{A stochastic temporal graph $\mathcal{G}^{(0)} = (G, \{ p_e   \} )$ 
		    on an underlying two-terminal series-parallel graph $(G,\source,\sink)$ with  
			a tree decomposition $T_G$, and a natural number~$\ell$.} 
		\ENSURE{The sets $\{ \Pr[X_{\mathcal{G}^{(0)}}(\source,\sink) \geq i] :  i \in [\ell] \}$ and
			$\{ \Pr[X_{\mathcal{G}^{(0)}}(\source,\sink) = i] :  i \in [\ell-1] \}$} % i=1, \ldots, k 
		
		\medskip
		
		\IF{$G$ is a single-edge graph with the unique edge $e$}
		\FOR{$i = 1$ to $\ell-1$} \label{line:dynamForSingleStart}
		    \STATE{$\Pr[X_{\mathcal{G}^{(0)}}(\source,\sink) = i] = (1-p_e)^{i-1}p_e$}
		\ENDFOR
		\vspace{0,1cm}
		\STATE{Compute the set of probabilities $\{ \Pr[X_{\mathcal{G}^{(0)}}(\source,\sink) \geq i] :  i \in [\ell] \}$}
		\label{line:dynamForSingleEnd}
		\ELSE
		\STATE{$(G,\source,\sink)$  is a composition of two series-parallel subgraphs $(H_1,\source_1,\sink_1)$ and $(H_2,\source_2,\sink_2)$}
		\vspace{0,1cm}
		\STATE{\textsc{Compute SP probabilities}($\mathcal{H}^{(0)}_1,\source_1,\sink_1,T_{H_1},\ell$)} \label{line:dynamSPprobH1}
		\vspace{0,1cm}
		\STATE{\textsc{Compute SP probabilities}($\mathcal{H}^{(0)}_2,\source_1,\sink_1,T_{H_2},\ell$)} \label{line:dynamSPprobH2}
		\vspace{0,1cm}
		\IF{$(G,\source,\sink)$ is the parallel composition of $(H_1,\source_1,\sink_1)$ and $(H_2,\source_2,\sink_2)$}
		\FOR{$i = 1$ to $\ell$}\label{line:dynamForParallelStart}
		    \STATE{$\Pr[X_{\mathcal{G}^{(0)}}(\source,\sink) \geq i] = \Pr[X_{\mathcal{H}^{(0)}_1}(\source_1,\sink_1) \geq i] \cdot 
		\Pr[X_{\mathcal{H}^{(0)}_2}(\source_2,\sink_2) \geq i]$}
		\ENDFOR 
		\vspace{0,1cm}
		\STATE{Compute the set of probabilities $\{ \Pr[X_{\mathcal{G}^{(0)}}(\source,\sink) = i] :  i \in [\ell-1] \}$} \label{line:dynamForParallelEnd}
		\ENDIF
		\vspace{0,1cm}
		\IF{$(G,\source,\sink)$ is the series composition of $(H_1,\source_1,\sink_1)$ and $(H_2,\source_2,\sink_2)$}
		\FOR{$i = 1$ to $\ell-1$}\label{line:dynamForSeriesStart}
		    \STATE{$\Pr[X_{\mathcal{G}^{(0)}}(\source,\sink) = i] = \sum_{j = 1}^{i-1} 
		\Pr[X_{\mathcal{H}^{(0)}_1}(\source_1,\sink_1) = j] \cdot 
		\Pr[X_{\mathcal{H}^{(0)}_2}(\source_2,\sink_2) = i-j]$}
		\ENDFOR
		\vspace{0,1cm}
		\STATE{Compute the set of probabilities $\{ \Pr[X_{\mathcal{G}^{(0)}}(\source,\sink) \geq i] :  i \in [\ell] \}$} \label{line:dynamForSeriesEnd}
		\ENDIF
		\ENDIF
		
		\medskip
		
		\RETURN{$\{ \Pr[X_{\mathcal{G}^{(0)}}(\source,\sink) \geq i] :  i \in [\ell] \}$ and 
			$\{ \Pr[X_{\mathcal{G}^{(0)}}(\source,\sink) = i]  :  i \in [\ell-1] \}$}
	\end{algorithmic}
\end{algorithm}

\begin{theorem}\label{th:spProbAlg}
	Given a stochastic temporal series-parallel graph $\mathcal{G}^{(0)} = (G=(V,E), \{ p_e  \} )$, two vertices $\source, \sink \in V$
	such that $(G,\source,\sink)$ is a two-terminal series-parallel graph, and a natural number $\ell$, Algorithm~\ref{alg:dynam}  
	correctly computes the probabilities $\{ \Pr[X_{\mathcal{G}^{(0)}}(\source,\sink) \geq i]  :  i \in [\ell] \}$ and
	$\{ \Pr[X_{\mathcal{G}^{(0)}}(\source,\sink) = i]  :  i \in [\ell-1] \}$
	in time $O(m\ell^2)$, where $m = |E|$.  
\end{theorem}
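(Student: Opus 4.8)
The plan is to establish correctness by structural induction along a decomposition tree $T_G$ of $(G,\source,\sink)$ (available in linear time by Theorem~\ref{th:SPdecomp}), and then to bound the running time by summing the work over the nodes of $T_G$. The induction hypothesis is that the two recursive calls return correct probability vectors for the smaller two-terminal series-parallel graphs $\mathcal{H}_1^{(0)},\mathcal{H}_2^{(0)}$; note that these inherit their edge probabilities from $\mathcal{G}^{(0)}$ and that $E(H_1),E(H_2)$ partition $E$, so the two processes are independent. In the base case $G$ is a single edge $e$: a foremost $\source$-$\sink$ journey just waits until $e$ appears, so $X_{\mathcal{G}^{(0)}}(\source,\sink)$ is geometric with parameter $p_e$, which is exactly \eqref{eq:single}; the ``$\geq i$'' vector is then recovered from $\Pr[X\geq i]=1-\sum_{k=1}^{i-1}\Pr[X=k]$, as the algorithm does.

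For the parallel composition, the key observation is that the only vertices shared by $H_1$ and $H_2$ are $\source$ and $\sink$, so every simple $\source$-$\sink$ path of $G$ lies entirely inside $H_1$ or entirely inside $H_2$; consequently the set of $\source$-$\sink$ journeys of $\mathcal{G}^{(0)}$ is the union of those of $\mathcal{H}_1^{(0)}$ and $\mathcal{H}_2^{(0)}$, giving the pointwise identity $X_{\mathcal{G}^{(0)}}(\source,\sink)=\min\{X_{\mathcal{H}_1^{(0)}}(\source_1,\sink_1),\,X_{\mathcal{H}_2^{(0)}}(\source_2,\sink_2)\}$. Since these two variables are functions of disjoint edge sets, they are independent, and \eqref{eq:parallel} follows from $\Pr[\min\{A,B\}\geq i]=\Pr[A\geq i]\,\Pr[B\geq i]$. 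The series composition is the step I expect to be the main obstacle. Here $v:=\sink_1=\source_2$ is a cut vertex, so every $\source$-$\sink$ journey first travels from $\source$ to $v$ using only $E(H_1)$ and then from $v$ to $\sink$ using only $E(H_2)$. Fixing a realization and letting $j$ be the foremost arrival time at $v$, I would prove $X_{\mathcal{G}^{(0)}}(\source,\sink)=j+X'$, where $X'$ is the duration of the foremost $v$-$\sink$ journey in $\mathcal{H}_2^{(0)}$ restricted to days $>j$: the upper bound comes from concatenating the two foremost sub-journeys, and the lower bound from the monotonicity fact that starting the $H_2$-process on a later day can only delay the arrival at $\sink$, because the set of admissible future days shrinks. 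Then, by memorylessness, conditioned on $j$ the variable $X'$ has the same distribution as $X_{\mathcal{H}_2^{(0)}}(\source_2,\sink_2)$ and is independent of $j\stackrel{d}{=}X_{\mathcal{H}_1^{(0)}}(\source_1,\sink_1)$ (a function of $E(H_1)$ and the days up to $j$); hence $X_{\mathcal{G}^{(0)}}(\source,\sink)$ is a convolution of two independent copies, which is precisely \eqref{eq:series}. In both composite cases the algorithm converts between the ``$=i$'' and ``$\geq i$'' vectors exactly as in the base case, so the returned vectors are correct.

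For the running time, $T_G$ is a binary tree whose leaves are in bijection with $E$, hence it has $O(m)$ nodes in total. At each node the algorithm spends $O(\ell)$ time in the single-edge and parallel cases and $O(\ell^2)$ time for the convolution \eqref{eq:series} in the series case (the conversions between the two probability vectors cost $O(\ell)$ via prefix sums, or $O(\ell^2)$ if done naively, and in either case are dominated). Summing over all nodes gives $O(m\ell^2)$, which also absorbs the $O(n+m)$ cost of constructing $T_G$, completing the proof.
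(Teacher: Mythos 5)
Your proof is correct and follows essentially the same route as the paper's: structural induction over the series-parallel decomposition tree using the single-edge, parallel, and series recurrences, together with the same $O(\ell^2)$-work-per-node over $O(m)$ tree nodes count for the running time. The only difference is that you explicitly justify the composition recurrences (independence of the two sub-processes on disjoint edge sets, the minimum identity for parallel composition, and the monotonicity-plus-memorylessness argument yielding the convolution for series composition), which the paper states without proof.
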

\begin{proof}
	We start with the analysis of the correctness of the algorithm.
	First, if the underlying graph of the input stochastic temporal graph is  a single-edge graph, then the algorithm computes 
	the required sets of probabilities in lines~\ref{line:dynamForSingleStart}-\ref{line:dynamForSingleEnd} using
	equations~(\ref{eq:single}). Second, if the underlying graph is not a single-edge graph, then, by definition,
	$(G,\source,\sink)$ is either the parallel or the series composition of two two-terminal series-parallel graphs
	$(H_1,\source_1,\sink_1)$ and $(H_2,\source_2,\sink_2)$ whose decomposition trees are the subtrees 
	$T_{H_1}$ and $T_{H_2}$ of $T_G$ rooted at the children of the root of $T_G$.
	In the case of parallel composition, the algorithm computes the sets of probabilities in lines 
	\ref{line:dynamForParallelStart}-\ref{line:dynamForParallelEnd} using
	equations~(\ref{eq:parallel}). In the case of series composition, the algorithm computes the sets of probabilities 
	in lines~\ref{line:dynamForSeriesStart}-\ref{line:dynamForSeriesEnd} using equations~(\ref{eq:series}).
	In both cases, the computation of the probabilities uses only the corresponding sets of probabilities for
	the stochastic temporal subgraphs $\mathcal{H}^{(0)}_1$ and $\mathcal{H}^{(0)}_2$, which are computed recursively in lines 
	\ref{line:dynamSPprobH1} and~\ref{line:dynamSPprobH2}, respectively.
	
	In order to analyze the complexity of Algorithm~\ref{alg:dynam}, we observe that for every node of the decomposition tree
	of the underlying graph the algorithm makes exactly one recursive call. In each of the calls, the algorithm executes
	either lines~\ref{line:dynamForSingleStart}-\ref{line:dynamForSingleEnd}, 
	or lines~\ref{line:dynamForParallelStart}-\ref{line:dynamForParallelEnd}, 
	or lines~\ref{line:dynamForSeriesStart}-\ref{line:dynamForSeriesEnd}. It is easy to check that each of these sets of lines
	performs $O(\ell^2)$ operations of addition or multiplication. Since $T_G$ is a binary tree and has exactly $m$ leaves,
	in total $T_G$ has $2m-1$ nodes, and therefore the total time complexity of Algorithm~\ref{alg:dynam} is $O(m\ell^2)$.
\end{proof}

Finally we present an FPTAS for the expected arrival time of a foremost $\source$-$\sink$ journey in a stochastic temporal series-parallel graph.

\begin{algorithm}[ht]
	\caption{FPTAS for \Po\ in stochastic temporal series-parallel graphs} 
	\label{alg:FPTAS_SP}  
	\begin{algorithmic}[1]
		\REQUIRE{A stochastic temporal series-parallel graph $\mathcal{G}^{(0)} = (G=(V,E), \{ p_e  \} )$ 
			such that $p_e \geq \frac{1}{n^c}$ for every $e \in E$, and a number $\varepsilon \in (0,1].$}
		\ENSURE{Number $\mu$ such that 
			$\mathbb{E}[X_{\mathcal{G}^{(0)}}(\source,\sink)] - \varepsilon < \mu \leq  \mathbb{E}[X_{\mathcal{G}^{(0)}}(\source,\sink)]$}
		
		\medskip
		
		\STATE{Let $H=(V,E,w)$ be the weighted graph obtained from the underlying graph $G$ by assigning
		to every edge $e \in E$ the weight $w(e) = \frac{1}{p_e}$}
		
		\STATE{Compute the minimum weight $w^*$ of an $\source$-$\sink$ path in $H$}
		
		\STATE{Let $\tau = w^* \left( \ln \frac{w^*}{\varepsilon} + 1 \right)$}
		
		\STATE{Compute a tree decomposition $T$ of $(G,\source,\sink)$}\label{line:dynamDecomposition}
		
		\STATE{\textsc{Compute SP probabilities}($\mathcal{G}^{(0)}, \source, \sink_, T, \tau$)}
		
		\medskip
		
		\RETURN{$\sum_{i=1}^{\tau} \Pr[X_{\mathcal{G}^{(0)}}(\source,\sink) \geq i]$}
	\end{algorithmic}
\end{algorithm}

The following theorem follows from the proof of Theorem~\ref{th:generalFPTAS} and Theorem~\ref{th:spProbAlg}.

\begin{theorem}
	Algorithm~\ref{alg:FPTAS_SP} is correct and works in time 
	$O\Big( m \cdot n^{2c+2} \ln^2 \frac{n}{\varepsilon}\Big)$, where $n$ and $m$ are the number of vertices and
	the number of edges in the underlying graph, respectively.
\end{theorem}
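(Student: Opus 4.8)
The plan is to recognise that Algorithm~\ref{alg:FPTAS_SP} is exactly the instantiation of the generic procedure $B$ constructed in the proof of Theorem~\ref{th:generalFPTAS}, with the oracle $A$ (which computes $\Pr[X_{\mathcal{G}^{(0)}}(\source,\sink) \geq i]$ for $i = 1,\ldots,\ell$) taken to be Algorithm~\ref{alg:dynam}. Concretely, steps 1--3 of Algorithm~\ref{alg:FPTAS_SP} reproduce the construction of $H$, the computation of $w^*$ by Dijkstra, and the definition of the cutoff $\tau = w^*\!\left(\ln\frac{w^*}{\varepsilon} + 1\right)$ from $B$; step~4 additionally computes the decomposition tree that Algorithm~\ref{alg:dynam} requires as input, which exists and is computable in $O(n+m)$ time by Theorem~\ref{th:SPdecomp} since $G$ is series-parallel with terminals $\source,\sink$; step~5 is the call to $A$; and the returned value $\sum_{i=1}^{\tau} \Pr[X_{\mathcal{G}^{(0)}}(\source,\sink) \geq i]$ is precisely the output prescribed by $B$.

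Correctness is then immediate. By Theorem~\ref{th:spProbAlg}, Algorithm~\ref{alg:dynam} correctly outputs the sets $\{\Pr[X_{\mathcal{G}^{(0)}}(\source,\sink) \geq i] : i \in [\ell]\}$ and $\{\Pr[X_{\mathcal{G}^{(0)}}(\source,\sink) = i] : i \in [\ell-1]\}$, so it is a legitimate choice of $A$; hence Theorem~\ref{th:generalFPTAS} guarantees that the value $\mu$ returned by Algorithm~\ref{alg:FPTAS_SP} satisfies $\mathbb{E}[X_{\mathcal{G}^{(0)}}(\source,\sink)] - \varepsilon < \mu \leq \mathbb{E}[X_{\mathcal{G}^{(0)}}(\source,\sink)]$, which is exactly the stated output specification.

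For the running time I would first bound $\tau$: since $p_e \geq \frac{1}{n^c}$, every weight $w(e) = \frac{1}{p_e}$ is at most $n^c$, an $\source$-$\sink$ path has at most $n-1$ edges, hence $w^* = O(n^{c+1})$ and $\tau = w^*\!\left(\ln\frac{w^*}{\varepsilon}+1\right) = O\!\left(n^{c+1}\ln\frac{n}{\varepsilon}\right)$. Substituting $\ell = \tau$ into the bound $O(m\ell^2)$ of Theorem~\ref{th:spProbAlg} yields $O\!\left(m\, n^{2c+2}\ln^2\frac{n}{\varepsilon}\right)$ for step~5; the remaining work — building $H$ and running Dijkstra in $O(n\ln n + m)$ time via \cite{fredman1987fibonacci}, computing the series-parallel decomposition in $O(n+m)$ time, and summing $\tau$ terms — is easily seen to be dominated by this expression. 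Adding the estimates gives the claimed bound $O\!\left(m\, n^{2c+2}\ln^2\frac{n}{\varepsilon}\right)$.

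There is no genuinely hard step here: the statement is a bookkeeping consequence of Theorems~\ref{th:generalFPTAS} and~\ref{th:spProbAlg}. The only points needing a moment of care are verifying that Algorithm~\ref{alg:dynam} returns the probabilities over the full range $i = 1,\ldots,\tau$ (so that it truly realises the oracle $A$ of Theorem~\ref{th:generalFPTAS}, rather than a shifted or truncated range), and confirming that all auxiliary steps are absorbed into the stated time bound once $\tau = O\!\left(n^{c+1}\ln\frac{n}{\varepsilon}\right)$ has been established.
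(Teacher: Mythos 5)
Your proposal is correct and follows exactly the route the paper intends: the paper's own "proof" is the single sentence that the theorem follows from Theorem~\ref{th:generalFPTAS} and Theorem~\ref{th:spProbAlg}, and you have simply filled in the bookkeeping (instantiating the oracle $A$ with Algorithm~\ref{alg:dynam}, bounding $\tau = O(n^{c+1}\ln\frac{n}{\varepsilon})$, and substituting into the $O(m\ell^2)$ bound). No gaps.
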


\subsection{The FPRAS for general graphs in the memory-$k$ model, $k\geq 0$}\label{sec:FPRAS}

In this section, we present our FPRAS for \Po\ in the memory-$k$ model, for every $k\geq 0$, under the 
assumption that every edge appearance probability is lower bounded by $\frac{1}{n^c}$ for some $c\geq 1$.

\begin{lemma}\label{lem:fpras-1}
Let $\mathcal{G}^{(k)} = (G=(V,E), \{p_e(H_e^{(k)}) \} )$ be a memory-$k$ stochastic temporal graph,
and let $Z$ be a geometric random variable with success probability $\frac{1}{n^c}$. Then we have
\begin{enumerate}
	\item $\Pr\left[X(\source,\sink) \geq a\right] \ \leq \ \Pr\left[(n-1)Z \geq a \right]$, for every $a \in \mathbb{N}$, and
	\item $\mathbb{E}[X(\source,\sink)] \ \leq \ \mathbb{E}[(n-1)Z] \ \leq \ n^{c+1}$.
\end{enumerate}
\end{lemma}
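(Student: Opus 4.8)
The plan is to upper-bound $X(\source,\sink)$ by the arrival time of one explicit, easily analysed journey. Fix a shortest $\source$-$\sink$ path $P=(\source=u_0,u_1,\ldots,u_r=\sink)$ in the underlying graph $G$, counted by number of edges; since $P$ is simple, $r\le n-1$. Consider the \emph{greedy traversal} of $P$: start at $u_0$ on day $1$ and, whenever located at $u_{j-1}$ with $j\le r$, wait there until the first day on which the edge $u_{j-1}u_j$ appears, then move along it. This is a genuine $\source$-$\sink$ journey, so its arrival time $T_P$ satisfies $X(\source,\sink)\le T_P$ pointwise; consequently $\Pr[X(\source,\sink)\ge a]\le\Pr[T_P\ge a]$ for every $a\in\mathbb{N}$, and $\mathbb{E}[X(\source,\sink)]\le\mathbb{E}[T_P]$.

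Next I would write $T_P=W_1+\cdots+W_r$, where $W_j$ is the number of days the greedy traversal waits at $u_{j-1}$ before edge $u_{j-1}u_j$ becomes present. The crux is a conditional stochastic-domination claim: for each $j$, conditioned on the whole history of the process up to and including the day on which the traversal first reaches $u_{j-1}$, the waiting time $W_j$ is stochastically dominated by a geometric random variable with success probability $\tfrac{1}{n^c}$. This follows from the memory-$k$ dynamics together with the standing hypothesis $p_e(H_e^{(k)})\ge\tfrac{1}{n^c}$: on every day the edge $u_{j-1}u_j$ is present with conditional probability at least $\tfrac{1}{n^c}$ regardless of the last $k$ snapshots, so by the chain rule it stays absent for $b$ consecutive days with probability at most $(1-\tfrac{1}{n^c})^b$. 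Applying this inductively along $j=1,\ldots,r$ (formally, constructing a coupling of the $W_j$'s with independent copies $Z_1,\ldots,Z_r$ of a $\mathrm{Geom}(\tfrac{1}{n^c})$ variable $Z$ along the filtration generated by the snapshots, using that the arrival day at $u_{j-1}$ is a stopping time) yields $T_P\preceq Z_1+\cdots+Z_r\preceq Z_1+\cdots+Z_{n-1}$, i.e.\ $T_P$ is stochastically dominated by the sum of $n-1$ independent copies of $Z$, which is the quantity written $(n-1)Z$.

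Part~1 is then immediate: $\Pr[X(\source,\sink)\ge a]\le\Pr[T_P\ge a]\le\Pr[(n-1)Z\ge a]$. For Part~2, linearity of expectation gives $\mathbb{E}[X(\source,\sink)]\le\mathbb{E}[T_P]\le\sum_{j=1}^{n-1}\mathbb{E}[Z_j]=(n-1)n^c=\mathbb{E}[(n-1)Z]$, and since $(n-1)n^c\le n^{c+1}$ we conclude $\mathbb{E}[X(\source,\sink)]\le\mathbb{E}[(n-1)Z]\le n^{c+1}$.

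The one delicate step is the conditional domination in the second paragraph. The subtlety is that the ``history'' at the start of the $j$-th waiting phase is itself random: the day the traversal reaches $u_{j-1}$ is a stopping time, and the configuration of the last $k$ snapshots at that moment can be arbitrary. Hence the argument must be phrased as a coupling along the natural filtration, or via repeated use of the tower property, rather than by a naive independence argument (in the memory-$k$ model appearances on distinct days are \emph{not} independent — only the conditional appearance probability given the $k$-step history is controlled). Once this bookkeeping is in place, the elementary bound $(1-\tfrac1{n^c})^b$ carries the whole proof and the value of $k$ is irrelevant.
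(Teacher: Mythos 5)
Your proposal is correct and follows essentially the same route as the paper: bound the foremost arrival time by a greedy traversal of a fixed $\source$-$\sink$ path with at most $n-1$ edges, and dominate each per-edge waiting time by a $\mathrm{Geom}(\tfrac{1}{n^c})$ variable using the uniform lower bound on the conditional appearance probabilities. You are in fact more careful than the paper on the one delicate point — the paper writes the waiting-time tail as an unconditional product over days, whereas your chain-rule/coupling formulation is what actually justifies that bound in the memory-$k$ model, where appearances on distinct days are not independent.
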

\begin{proof}
	Let $e \in E$ be an arbitrary edge in the underlying graph and let $t \in \mathbb{N}$ be an arbitrary time point.
	Let $X_e^t$ be the random variable equal to the shortest time required to move from one of the end vertices of $e$ to the
	other starting at time $t$. Then for any $a \in \mathbb{N}$ we have that	
	\begin{eqnarray*}
			\Pr[X_e^t \geq a] & =& \prod_{i=0}^{a-2} \Pr[e \text{ does not appear at time point } t+i]  \\
			                  & =& \prod_{i=0}^{a-2} \left(1-\Pr[e \text{ appears at time point } t+i] \right) \\ 
			 & \leq& \left( 1- \frac{1}{n^c} \right)^{k-1} = \Pr[Z \geq a],
	\end{eqnarray*}
	where the inequality follows from our assumed lower bound on edge appearance probabilities.
	Now, since an edge and time point were picked arbitrarily and any $s$-$y$ foremost journey traverses
	at most $n-1$ edges, we conclude (1), that is, 
	$\Pr\left[X(\source,\sink) \geq a\right] \leq \Pr\left[(n-1)Z \geq a \right]$, for every $a \in \mathbb{N}$. Now, using (1) we establish (2) as follows:
\begin{equation*}
	 	\mathbb{E}[X(\source,\sink)] \ = \ \sum_{i=1}^{\infty} \Pr[X(\source,\sink) \geq i] \ \leq \ 
	 	 \sum_{i=1}^{\infty} Pr\left[nZ \geq a \right] \ = \ \mathbb{E}[(n-1)Z] \ \leq \ n^{c+1}. 
\end{equation*}
\end{proof}

In the following theorem we provide our FPRAS for \Po.

\begin{theorem}
\label{fpras-thm}
Let $\varepsilon\in (0,1)$ and let $\mathcal{G}^{(k)}$ be a memory-$k$ stochastic temporal graph 
with two designated vertices $\source,\sink$. 
Furthermore let every edge appearance probability be at least $\frac{1}{n^c}$ for some $c\geq 1$. 
Then \Po\ admits an FPRAS which runs in $O\left( m\frac{n^{5c+8}}{\varepsilon^4}  \cdot \log(\frac{n}{\varepsilon}) \right)$ time 
with probability of success at least $1-\frac{2}{n}$.
\end{theorem}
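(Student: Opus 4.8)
The plan is to build an FPRAS by Monte Carlo simulation of the stochastic temporal graph, truncated at a suitable time horizon, combined with a standard mean-estimation argument. First I would fix the horizon $\tau = \Theta\!\big(\frac{n^{c+1}}{\varepsilon}\log\frac{n}{\varepsilon}\big)$: by Lemma~\ref{lem:fpras-1} we know $\mathbb{E}[X(\source,\sink)]\le n^{c+1}$, and since $X(\source,\sink)$ is stochastically dominated by $(n-1)Z$ with $Z$ geometric of parameter $\tfrac{1}{n^c}$, the tail $\sum_{i>\tau}\Pr[X(\source,\sink)\ge i]$ decays geometrically once $\tau$ exceeds a constant multiple of $n^{c+1}$; choosing $\tau$ as above makes this tail at most $\frac{\varepsilon}{n^{c}}\le \frac{\varepsilon}{2}\,\mathbb{E}[X(\source,\sink)]$ (here one uses the trivial lower bound $\mathbb{E}[X(\source,\sink)]\ge 1$, or more carefully $\mathbb{E}[X(\source,\sink)]\ge \frac{1}{p_{\min}}\ge 1$ along the first edge of any journey). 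So it suffices to estimate the truncated expectation $\mu_\tau := \sum_{i=1}^{\tau}\Pr[X(\source,\sink)\ge i]$ up to a multiplicative $(1\pm\varepsilon/2)$ factor.

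Next I would describe a single simulation run: generate the initial $k$ snapshots as given, then for $t=1,2,\dots,\tau$ sample each edge $e$ independently according to $p_e(H_e^{(k)})$ (its history being read off the last $k$ sampled snapshots), maintaining the set $R_t$ of vertices reachable from $\source$ by a journey arriving by time $t$ — this is just the standard ``temporal BFS'': $R_0=\{\source\}$ and $R_t = R_{t-1}\cup\Gamma_{G_t}(R_{t-1})$. Record the first $t$ at which $\sink\in R_t$, or $\tau+1$ if it never happens within the horizon; call this value $\widehat{X}$, so $\mathbb{E}[\min(\widehat{X},\tau+1)]$ agrees with $\mathbb{E}[\min(X(\source,\sink),\tau+1)]$, which is within $\varepsilon/2$ relative error of $\mathbb{E}[X(\source,\sink)]$ by the previous paragraph. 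Each run costs $O(m\tau)$ time. Then take $N$ independent runs, output the sample mean $\widetilde{X}$ of the $\widehat{X}$ values.

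For the number of runs, I would invoke a concentration bound: since the per-run value lies in $[1,\tau+1]$ and the mean is $\Theta(\mathbb{E}[X(\source,\sink)])\ge 1$, Hoeffding's inequality (or Chebyshev, bounding the variance by $\tau\cdot\mathbb{E}[X(\source,\sink)]$ using the geometric-domination tail) gives that $N = \Theta\!\big(\frac{\tau^2}{\varepsilon^2 (\mathbb{E}[X(\source,\sink)])^2}\log n\big) = \Theta\!\big(\frac{\tau^2}{\varepsilon^2}\log n\big)$ runs suffice to achieve $(1\pm\varepsilon/2)$ relative error with probability at least $1-\tfrac{2}{n}$ (a union bound over the two-sided deviation with a $\tfrac1n$ failure budget each). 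Multiplying out, the total running time is $O(N\cdot m\tau) = O\!\big(m\cdot \frac{\tau^3}{\varepsilon^2}\log n\big)$; substituting $\tau = \Theta\!\big(\frac{n^{c+1}}{\varepsilon}\log\frac{n}{\varepsilon}\big)$ yields $O\!\big(m\cdot\frac{n^{3c+3}}{\varepsilon^5}\log^4\frac{n}{\varepsilon}\big)$, which one then reconciles with the stated $O\!\big(m\frac{n^{5c+8}}{\varepsilon^4}\log\frac{n}{\varepsilon}\big)$ — the discrepancy suggests the paper uses a cruder (Chebyshev-type) variance bound that introduces extra factors of $n^{c}$ and $\tau$, and I would follow whichever route makes the bookkeeping cleanest.

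The main obstacle, and the step requiring the most care, is controlling the relative (not additive) error: a Monte Carlo estimator naturally gives additive error proportional to the range $\tau$, so turning this into a $(1\pm\varepsilon)$ multiplicative guarantee forces $N$ to scale like $\tau^2/(\varepsilon^2\mu^2)$, and one must be sure the denominator $\mu = \mathbb{E}[X(\source,\sink)]$ is bounded below by an absolute constant — this is where the lower bound $\mathbb{E}[X(\source,\sink)]\ge 1$ (every journey needs at least one time step) is essential and must be stated explicitly. A secondary subtlety is that truncating at $\tau$ introduces a downward bias, so one should argue that the truncated estimator still lies within the two-sided window $[(1-\varepsilon)\mathbb{E}[X],(1+\varepsilon)\mathbb{E}[X]]$ by splitting the $\varepsilon$ budget (say $\varepsilon/2$ for truncation bias, $\varepsilon/2$ for sampling fluctuation) and invoking Lemma~\ref{lem:fpras-1} for the geometric tail bound on the discarded mass.
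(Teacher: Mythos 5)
Your proposal is correct in substance and follows the same basic template as the paper: simulate the process up to a finite horizon, average over independent runs, and split the error budget between truncation bias and sampling fluctuation. The two places where the paper differs are worth noting. First, the paper does not fix the horizon independently of the number of runs; it sets the horizon to $t'=rn^{c+2}$ so that, by Markov's inequality and Lemma~\ref{lem:fpras-1}, each run fails to connect $\source$ to $\sink$ with probability at most $\frac{1}{rn}$, and then takes a union bound over the $r$ runs (contributing $\frac{1}{n}$ to the failure probability) rather than bounding the downward bias of a truncated estimator as you do. Second, for the concentration step the paper does not use a range- or $\tau$-dependent variance bound: it bounds $\sigma^2(X)\le \mathbb{E}[X^2]\le 2n^{2c+2}$ directly, by writing $X\le Y_1+\cdots+Y_\ell$ with $\ell\le n-1$, each $Y_i$ dominated by a geometric variable with success probability $\frac{1}{n^c}$, and applying Cauchy--Schwarz to get $X^2\le n\sum_i Y_i^2$; Chebyshev with $\mathbb{E}[X]\ge 1$ then gives $r=2n^{2c+3}/\varepsilon^2$. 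So your diagnosis of the discrepancy in the final exponent is slightly off: the paper's variance bound is actually sharper than yours (it is independent of $\tau$ and $\varepsilon$), and the extra factors of $n$ come from the run-dependent horizon $t'=rn^{c+2}$ and from invoking the $O(t'm\log(t'm))$ foremost-journey algorithm of~\cite{akridaGMS17} instead of a linear-time temporal BFS. Your fixed-horizon variant would in fact yield a somewhat better running time; either way the FPRAS claim (polynomial in $n$ and $\frac{1}{\varepsilon}$, success probability $1-\frac{2}{n}$) goes through, and your explicit attention to the lower bound $\mathbb{E}[X(\source,\sink)]\ge 1$ and to the downward truncation bias is exactly the care the argument requires.
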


\begin{proof}
Let $\mathcal{G}^{(k)}$ be a stochastic temporal graph with two designated vertices $\source,\sink$. 
Furthermore let $X$, as before, be the arrival time of a foremost $\source$-$\sink$ journey. 
We will estimate the expectation $\mathbb{E}(X)$ via an unbiased estimator approach. 
We perform $r$ times independently the following experiment \textsc{Exp}; 
for now let us assume an arbitrary value for $r$, to be chosen precisely later. 

\begin{algorithm}[ht]
\caption*{\textbf{Experiment} \textsc{Exp}}
\label{experiment}
\begin{algorithmic}[1]
\REQUIRE{A stochastic temporal graph $\mathcal{G}^{(k)}$ on an underlying graph $G$ with $n$ vertices and $m$~edges and two designated vertices $\source,\sink$ of $G$}

\medskip

\STATE{Starting at time $t=0$, let $\mathcal{G}^{(k)}$ evolve until time $t' = r n^{c+2}$; 
the resulting temporal graph has at most $t'm$ time-edges}

\STATE{Run the foremost $\source$-$\sink$ journey algorithm of~\cite{akridaGMS17} in this temporal graph}

\medskip

\RETURN{the arrival time of the computed foremost journey}
\end{algorithmic}
\end{algorithm}

The probability  that \textsc{Exp} fails to connect $\source$ to $\sink$ via a journey is equal to the probability that $\source$ is not connected to $\sink$ until time $t'$. 
Therefore, Lemma~\ref{lem:fpras-1} implies that the time to connect $\source$ to $\sink$ exceeds the expectation $\mathbb{E}(X)$ of $X$ 
by a multiplicative factor of at least~$rn$. 
By Markov's inequality, this probability of failure is at most $\frac{1}{rn}$.
For now, we proceed the analysis of the algorithm assuming that 
all experiments succeed, and we will take the probability of failure of some experiment(s) into 
account later on.

Let $X_i$ be the random variable returned by the $i$th execution of the experiment \textsc{Exp}, 
where $i=1,2,\ldots,r$, and let $\widetilde{X} = \frac{1}{r}(X_1+\ldots+X_r)$ be the estimator for $X$. 
Note that $\mathbb{E}[\widetilde{X}] = \mathbb{E}[X]$, meaning that $\widetilde{X}$ is an unbiased estimator for $X$. 
Thus, it follows by Chebyshev's inequality that, for every $\varepsilon\in (0,1)$:
\begin{equation}
\Pr\left[  | \widetilde{X}  - \mathbb{E} [X] | \geq \varepsilon \mathbb{E}[     X   ]  \right]   \leq   \left( \frac{\sigma(\widetilde{X}) }{ \varepsilon \mathbb{E}[     X   ] } \right)^2. \label{eq:fpras2}
\end{equation}
It holds that $\sigma(\widetilde{X}) = \frac{\sigma(X)}{\sqrt{r}}$ (see~\cite[p.~297]{Vazirani}), hence~(\ref{eq:fpras2}) becomes:
\begin{equation}
\Pr\left[  | \widetilde{X}  - \mathbb{E} [X] | \geq \varepsilon \mathbb{E}[X]  \right]   \leq   \frac{\sigma^2(X)}{\varepsilon^2 r \mathbb{E}^2[X]} \leq   \frac{\sigma^2(X) }{ \varepsilon^2 r }. \label{eq:fpras3}
\end{equation}
We will now show that $\sigma(X)$ is upper bounded by a polynomial in $n$. Indeed, we have:
\begin{equation}
\sigma^2(X) = \mathbb{E}[X^2] - \mathbb{E}^2[X] \leq \mathbb{E}[X^2] \text{~(since } \mathbb{E}[X] \geq 0 \text{).} \label{eq:fpras4}
\end{equation}
Now, consider any $\source$-$\sink$ journey in $\mathcal{G}^{(k)}$ and let $\ell$ be the number of its edges. Let $Y_1$ be the number of time steps needed until we cross the $1$st edge of this journey, starting at time $0$. Also let $Y_i$ be the number of time steps that are needed, after we cross the $(i-1)$th edge, until we cross the $i$th edge of this journey, $i=2,\ldots,\ell$. Then $X \leq Y_1 + \ldots + Y_\ell$, and thus
\begin{equation*}
	X^2 \ \leq \ ( Y_1 + \ldots + Y_\ell )^2  \ \leq \ (1^2+ \ldots + 1^2) \cdot (Y_1^2 + \ldots + Y_\ell^2 ) \ \leq \ n (Y_1^2 + \ldots + Y_\ell^2 ) .
\end{equation*}

Note that each $Y_i$ is dominated by a geometric random variable $Z$ with success probability~$\frac{1}{n^c}$. Therefore, it follows by the properties of geometric random variables that $\mathbb{E}(Y^{2}_{i}) \leq 2 n^{2c}$. Thus $\mathbb{E}[X^2] \leq  n (\mathbb{E}[Y_1^2] + \ldots + \mathbb{E}[Y_\ell^2] )  \leq  
2n^{2c+1}\ell \leq 2n^{2c+2}$.   
So,~(\ref{eq:fpras4}) becomes:
\[ \sigma^2(X) \leq 2n^{2c+2}. \]
Going back to~(\ref{eq:fpras3}), it becomes:
\[  \Pr\left[  | \widetilde{X}  - \mathbb{E} [X] | \geq \varepsilon \mathbb{E}[X]  \right]   \leq    
2\frac{n^{2c+2}}{\varepsilon^2 r }. \]
So, for a number $r = 2\frac{n^{2c+3}}{\varepsilon^2}$ of experiments, we get:
\[  \Pr\left[  \widetilde{X} \in (1-\varepsilon, 1+\varepsilon) \cdot \mathbb{E} [X] \right]   \geq    1 - \frac{1}{n} ,       \]
meaning that performing $r$ independent times the experiment \textsc{Exp} results in polynomial time in a solution $\widetilde{X}$ that is within a factor $(1\pm \varepsilon)$ of the optimal with probability 
at least $1 - \frac{1}{n}$. 
Let us call the probability that $\widetilde{X}$ is \emph{far} from the optimal solution ``probability of failure of the estimator''. Recall that there is also a chance of failure in our algorithm if any of the $r$ experiments fails. Therefore, the probability of failure of our FPRAS is:
\begin{eqnarray*}
	\Pr[ \text{failure of FPRAS}]  &\leq& \Pr[\text{failure of some \textsc{Exp}}] + \Pr[\text{failure of the estimator $\widetilde{X}$}] \\	
	& \leq& r \frac{1}{rn} + \frac{1}{n} \ = \ \frac{2}{n} \xrightarrow{n \rightarrow +\infty} 0.
\end{eqnarray*}
We execute the experiment \textsc{Exp} for $r = 2\frac{n^{2c+3}}{\varepsilon^2}$ times. 
Each execution of the experiment runs in total for $t'=2\frac{n^{3c+5}}{\varepsilon^2}$ time; during this time 
we get at most $t'm$ time-edges, i.e.~the algorithm of~\cite{akridaGMS17} 
runs in $O(t'm \log(t'm))=O(m\frac{n^{3c+5}}{\varepsilon^2}  \cdot \log(\frac{n}{\varepsilon}))$ time. 
Thus the total running time is $O(m\frac{n^{5c+8}}{\varepsilon^4}  \cdot \log(\frac{n}{\varepsilon}))$.
\end{proof}

\section{Computing the expected arrival time of a best policy}
\label{sec:best-policy}

In this section we investigate the computational complexity of our second problem, namely \Pt.

\subsection{A polynomial-time algorithm for the memoryless model}\label{sec:dynamic_prog_memoryless}

In this section we focus on the memoryless model and we derive a polynomial-time dynamic-programming algorithm for \Pt.
We define for every vertex $v$ the expected arrival time 
$h(v, \sink) \stackrel{\text{def}}{=} \mathbb{E}_{\mathcal{G}^{(0)}}[Y(v,\sink)]$ of the $v$-$\sink$ journey suggested to Alice by a best policy (i.e.~when Alice starts her journey at vertex $v$). 
For simplicity of presentation, throughout Section~\ref{sec:dynamic_prog_memoryless} we write 
$h(v)\stackrel{\text{def}}{=}h(v,\sink)$.

Assume for now that for all $v \in V$, the value $h(v)$ is given; let $v_1=\sink, v_2, \ldots, v_n$ be an ordering of vertices of $V$ in non-decreasing values of $h$ (ties broken arbitrarily), namely $h(v_1) \leq h(v_2) \leq \cdots \leq h(v_n)$. Clearly, $v_1=y$ and $h(v_1) = h(y) =0$. 

Let $s_t$ be the vertex that Alice occupied at time $t$ and recall that 
$\Gamma_{G_{t}} (v)$ is the neighborhood of vertex $v$ in the snapshot $G_t$, for all $v \in V$ and all $t \in \mathbb{N}$.
Notice that, the best strategy of Alice at time $t+1$ is to look at all neighboring vertices of $s_t$ in $G_{t+1}$ and find one with minimum $h$-value, namely a vertex $u \in \arg \min\{h(v): v \in \Gamma_{G_{t+1}}(s_t)\}$. If $h(u) \geq h(s_t)$, then Alice has no incentive to change vertex and thus $s_{t+1}=s_t$. Otherwise, if $h(u) < h(s_t)$, then $s_{t+1}=u$. 

Therefore, to find the best choice for Alice, it suffices to find the values $h(v), v \in V$. In view of the above, 
if Alice is on vertex $v_i$ at time 0 (i.e.~she is on the $i$-th best vertex in terms of closeness to $y$), 
she will move to the $j$-th best (with $j<i$) only if an edge appears between $v_i$ and $v_j$ in the next 
step, and no edge to a vertex better than $v_j$ appears (i.e.~no edge between $v_i$ and $v_{\ell}, ~1\leq\ell \leq j-1$). 
This happens with probability $Q_{i, j} = p_{\{v_i, v_j\}} \prod_{\ell=1}^{j-1} (1-p_{\{v_i, v_{\ell}\}})$, 
where $\{v_i, v_\ell\}$ denotes the (undirected) edge between $v_i$ and $v_\ell$. 
Additionally, with probability $Q_i = \prod_{\ell=1}^{i-1} (1-p_{\{v_i, v_\ell\}})$ no edge to a vertex better 
than $v_i$ will appear, in which case Alice will stay on $v_i$. 
Therefore $h(v_i)$ can be recurrently computed by $h(v_i) = \sum_{j=1}^{i-1} Q_{i,j} h(v_j) + Q_i h(v_i) +1$, or equivalently:
\begin{equation}
\notag h(v_i) = \frac{\sum_{j=1}^{i-1} Q_{i,j} h(v_j) +1}{1- Q_i}, 
\end{equation}
with initial condition $h(v_1) = 0$. Indeed, the above equation follows by observing that the expected 
length of the foremost journey to $y$ when Alice is on $v_i$ is equal to $1+h(v_1)$ with probability $Q_{i, 1}$ 
(which is the probability that an edge between $v_i$ and $v_1=y$ exists), plus $1+h(v_2)$ with 
probability $Q_{i,2}$ (which is the probability that and edge between $v_i$ and the second best vertex $v_2$ exists, but there is no edge between $v_i$ and $v_1$), and so on. 
In general, the above recurrence states that there is no incentive to visit vertices with larger index and 
also Alice will visit the smallest index vertex $v_j$ for which the edge $\{v_i, v_j\}$ is present (otherwise, if no such edge exists, she will stay on $v_i$). 
Using the above recurrence, we can compute all values of $h(v_i)$ by the following bottom-up dynamic 
programming algorithm\footnote{To avoid trivialities, we assume that the graph induced by the set of edges with non-zero probabilities is connected; in particular, the set of probabilities can be used to model any 
connected underlying graph~$G$. We also assume that the elements of the list $L$ can be accessed using their index, i.e.~$L_i$ is the $i$-th element of the list.}:

\begin{algorithm}[ht]
\caption{\Pt\ in memoryless stochastic temporal graphs}
\label{alg:Best-Policy-Memoryless}  
\begin{algorithmic}[1]
\REQUIRE{A stochastic temporal graph $\mathcal{G}^{(0)} = (G=(V,E), \{ p_e \} )$.}
\ENSURE{The values $\{h(v): v \in V\}$, stored in the ordered list $L$.}

\medskip

\STATE{Let $L$ be the empty list}
\STATE{Append $y$ to $L$; \ \  $h(y)\leftarrow 0$}
\FOR{$i=2$ to $n$}
    \STATE{$u \leftarrow \arg \min \left\{\frac{\sum_{j=1}^{i-1} p_{\{v, L_j\}} \prod_{\ell=1}^{j-1} \left(1-p_{\{v, L_\ell\}}\right) h(L_j) +1}{1-\prod_{\ell=1}^{i-1} \left(1-p_{\{v, L_\ell\}}\right)} : \ v \notin L, \ \prod_{\ell=1}^{i-1} (1-p_{\{v, L_\ell\}})<1 \right\}$}
    \vspace{0,2cm}
    \STATE{$h(u)\leftarrow\frac{\sum_{j=1}^{i-1} p_{\{u, L_j\}} \prod_{\ell=1}^{j-1} \left(1-p_{\{u, L_\ell\}}\right) h(L_j) +1}{1-\prod_{\ell=1}^{i-1} \left(1-p_{\{u, L_\ell\}}\right)}$}
    \vspace{0,2cm}
    \STATE{Append $u$ to $L$}
\ENDFOR

\medskip

\RETURN{$L$ and $h(v), v \in V$}
\end{algorithmic}
\end{algorithm}

Algorithm~\ref{alg:Best-Policy-Memoryless} can be efficiently implemented to run in $O(n^2)$ time and space.
This can be achieved by carefully storing intermediate sums and products in step 4; indeed, at step $i+1$ the only new term in the numerator is $p_{\{v, L_i\}} \prod_{\ell=1}^{i-1} (1-p_{\{v, L_\ell\}}) h(L_i)$, 
and in the product $\prod_{\ell=1}^{i} (1-p_{\{v, L_\ell\}})$ which appears in the denominator the only new factor is $1-p_{\{v, L_i\}}$. Concluding, we have the following theorem:

\begin{theorem}
\label{best-policy-memory-0-thm}
\Pt\ can be optimally computed in the memoryless model in $O(n^2)$ time and space.
\end{theorem}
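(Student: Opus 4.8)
The plan is to establish correctness of the recurrence first, and then to argue the $O(n^2)$ time and space bound by a careful bookkeeping of the partial sums and products. For correctness, I would first verify that the greedy policy described before the algorithm is indeed optimal: when Alice sits at a vertex $v$ at the start of a day and sees the snapshot, her optimal action is to move to the neighbor $u$ in the current snapshot minimizing $h(u)$, and only if $h(u)<h(v)$; otherwise she stays. This is intuitively clear (moving to a vertex with larger or equal $h$-value cannot help, since $h$ already represents the optimal expected remaining time from that vertex), but to make it rigorous I would phrase it as a statement about the underlying Markov Decision Process and invoke the optimality equations for the expected-total-cost MDP with stage cost $1$. Crucially, because all edges are mutually independent and memoryless, the distribution of ``which neighbors of $v$ appear tomorrow'' does not depend on the past, so the value function $h$ is well defined and stationary.

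Next I would justify the recurrence $h(v_i)=\bigl(\sum_{j=1}^{i-1}Q_{i,j}h(v_j)+1\bigr)/(1-Q_i)$ together with the fact that the algorithm discovers the vertices in the correct order $v_1=y,v_2,\dots,v_n$. The key observation is a monotonicity/consistency property: when computing $h(v)$ for a candidate $v$ not yet in $L$, the optimal policy only ever wants to move to vertices that are strictly \emph{closer} to $y$ than $v$ itself, i.e.\ vertices already placed in $L$; so the value of $v$ depends only on the $h$-values of the vertices already processed. Hence at iteration $i$ we may, for each remaining candidate $v$, evaluate the one-step lookahead expression $\tfrac{\sum_{j=1}^{i-1} p_{\{v,L_j\}}\prod_{\ell<j}(1-p_{\{v,L_\ell\}})h(L_j)+1}{1-\prod_{\ell=1}^{i-1}(1-p_{\{v,L_\ell\}})}$ using the already-committed list $L=(v_1,\dots,v_{i-1})$, and the candidate minimizing it is exactly $v_i$, with that value being $h(v_i)$. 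I would prove this by induction on $i$, using the same exchange-argument flavor as in Dijkstra's algorithm: if some candidate $v$ achieved the minimum one-step value but were not the true next vertex, the true $h(v)$ (which can only be smaller than or equal to the one-step estimate, since it may exploit moves to vertices discovered later) would contradict minimality; conversely the true $h(v_i)$ cannot use any vertex outside $L$ because all such vertices have $h$-value at least $h(v_i)$. Handling the denominator carefully (the guard $\prod_{\ell=1}^{i-1}(1-p_{\{v,L_\ell\}})<1$, and the connectivity assumption in the footnote ensuring every vertex is eventually processed) is part of this step.

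For the complexity claim I would argue as follows. Maintain, for every vertex $v\notin L$, two running quantities: the partial product $\Pi_v=\prod_{\ell=1}^{|L|}(1-p_{\{v,L_\ell\}})$ and the partial numerator sum $S_v=\sum_{j=1}^{|L|}p_{\{v,L_j\}}\prod_{\ell=1}^{j-1}(1-p_{\{v,L_\ell\}})h(L_j)$. When a new vertex $u$ is appended to $L$ at the end of iteration $i$, each remaining $v$ updates $S_v \mathrel{+}= p_{\{v,u\}}\cdot\Pi_v\cdot h(u)$ and then $\Pi_v \mathrel{*}= (1-p_{\{v,u\}})$, in that order — this is $O(1)$ per remaining vertex, hence $O(n)$ total per iteration and $O(n^2)$ overall. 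With these quantities in hand, the argmin in step 4 costs $O(n)$ per iteration as well, since each candidate's value is $(S_v+1)/(1-\Pi_v)$. Storage is $O(n)$ for the two arrays plus $O(n)$ for $L$ and $h$, and the probabilities $p_{\{v,u\}}$ needed are only those incident to the newly appended vertex, which is consistent with $O(n^2)$ space (or less if the graph is sparse, but $O(n^2)$ suffices). Combining the correctness argument with this bookkeeping yields the theorem.

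The main obstacle I anticipate is the correctness of the processing order — specifically, proving cleanly that $h(v_i)$ can be computed using only $h(v_1),\dots,h(v_{i-1})$ and that the one-step-lookahead minimizer is globally optimal. This is a Dijkstra-style argument, but unlike Dijkstra the ``edge relaxation'' here is a nonlinear expression (a ratio involving products of the $1-p$ terms), and one must be careful that the true value $h(v)$ is bounded above by the lookahead estimate that ignores moves to not-yet-processed vertices, while simultaneously no optimal policy from $v_i$ ever benefits from such moves. Getting the inequality directions right, and dealing with the edge case where $\prod_{\ell}(1-p_{\{v,L_\ell\}})=1$ (no incentive/possibility to move, so $v$ is deferred), is where the real work lies; the $O(n^2)$ bound itself is then routine.
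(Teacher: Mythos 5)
Your proposal is correct and follows essentially the same route as the paper: the same recurrence $h(v_i)=\bigl(\sum_{j<i}Q_{i,j}h(v_j)+1\bigr)/(1-Q_i)$, the same Dijkstra-like greedy construction of the ordered list $L$, and the same incremental maintenance of the partial products and numerator sums to get $O(n^2)$ time and space. If anything, you are more explicit than the paper about the one genuinely delicate point — the exchange/induction argument showing that the one-step lookahead restricted to already-processed vertices upper-bounds the true $h(v)$ while equalling it for the minimizer — which the paper largely asserts rather than proves.
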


\subsection{Hardness of computation for the memory-$k$ model, $k\geq3$}\label{sec:hardness_for_Pt}
We now show that \Pt\ is {\#P}-hard for memory-3 stochastic temporal graphs on directed acyclic graphs, and consequently also for memory $k\geq 3$. 

\begin{theorem}
	When the underlying graph is a Directed Acyclic Graph (DAG), it is {\#P}-hard to compute the expected arrival time of the best
	policy journey in the memory-3 model.
\end{theorem}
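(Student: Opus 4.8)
The plan is to reduce from \#PP2DNF, mirroring the construction of Theorem~\ref{thm:prob_1_hard} but now forcing the memory mechanism to do the combinatorial counting. As in the \Po\ reduction, start with a positive partitioned 2-DNF formula $\Phi = \bigvee_{(i,j)\in E} x_i y_j$ and build a DAG whose ``core'' consists of a source $\source$, a layer of vertices $x_1,\ldots,x_n$, a layer $y_1,\ldots,y_m$, arcs $(\source,x_i)$, arcs $(x_i,y_j)$ whenever $(i,j)\in E$, and arcs $(y_j,\sink)$. The key difference is that the probabilistic behaviour of the arcs incident to $\source$ and to $\sink$ must encode the random truth assignment \emph{persistently}: I would use the memory to make each arc $(\source,x_i)$ and each arc $(y_j,\sink)$ essentially ``flip a coin once'' and then stay in that state forever (e.g.\ in the memory-$k$ Markov chain set the relevant transition probabilities so that from the all-zero history the arc appears with probability $1/2$, but once it has appeared it appears deterministically thereafter, and once it is ``decided absent'' it stays absent). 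Because memory-$3$ gives a $3$-bit history, there is room to implement such a one-time latch together with a small ``timing gadget'' that makes Alice's decisions at the $\source$-layer and the $\sink$-layer occur at controlled, distinct time windows.

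\textbf{Key steps, in order.} First I would describe the gadget for a single arc that, using a memory-$3$ transition rule, realizes the behaviour: ``appears for the first time at a uniformly random early step, and from then on is permanently present; if it has not appeared by a cutoff it is permanently absent.'' Second, I would add, in parallel to the core, a guaranteed ``fallback'' $\source$–$\sink$ path of some fixed length $L$ all of whose arcs appear deterministically every step, so that Alice always has a safe option; the point is to make the \emph{best policy} value a simple affine function of the probability $p$ that the random latches open up a length-$3$ route $\source\to x_i\to y_j\to\sink$. Third, and this is the heart of the argument, I would prove the policy-analysis lemma: an optimal Alice, knowing all the transition rules, should first wait at $\source$ until the $(\source,x_i)$ latches have resolved; if some resolved-open arc $(\source,x_i)$ leads (via the deterministic $(x_i,y_j)$ arcs and an open $(y_j,\sink)$) to $\sink$ in total length $3$, she takes it; otherwise she falls back to the length-$L$ path. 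Exactly as in the \Po\ proof, by the one-time-latch construction the probability of the good event is $p = \psi/2^{n+m}$ where $\psi$ is the number of satisfying assignments of $\Phi$, because a length-$3$ route exists iff the ``open'' set of $\source$-arcs and $\sink$-arcs corresponds to a satisfying assignment. Fourth, I would compute $\mathbb{E}[Y(\source,\sink)]$ as $c_1 + c_2\, p$ for explicit constants $c_1,c_2$ depending only on the gadget lengths and the latch timing (not on $\Phi$), conclude $\psi = (\mathbb{E}[Y]-c_1)\,2^{n+m}/c_2$, and hence that computing $\mathbb{E}[Y]$ is \#P-hard; the memory-$k$ case for $k\ge 3$ follows since memory-$3$ is a special case of memory-$k$.

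\textbf{The main obstacle} I expect is the policy-analysis lemma (step three): unlike \Po, where the foremost journey is a deterministic function of the realization and symmetry makes the computation clean, here one must argue that an \emph{adaptive} Alice cannot do anything cleverer than the ``wait, then route-or-fallback'' strategy — in particular she must not be able to exploit partial information (some latches resolved, others not) to hedge in a way that makes $\mathbb{E}[Y]$ a more complicated, non-affine function of the $2^{n+m}$ latch outcomes. The gadget must therefore be designed so that (i) all $\source$-latches and all $\sink$-latches resolve within a fixed, commonly-known time window, (ii) moving off $\source$ prematurely into the $x$-layer strictly dominates nothing (e.g.\ once at $x_i$, if its downstream $\sink$-arc is still unresolved or resolves closed, Alice is stranded relative to the fallback, so a rational Alice never gambles early), and (iii) the fallback length $L$ is tuned so that the length-$3$ option is strictly preferable exactly when it is available. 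Making these three conditions simultaneously enforceable with only a $3$-bit memory — and verifying there is no subtle intermediate strategy — is the delicate part; the rest is bookkeeping of the affine constants and an appeal to the \#PP2DNF-to-paths correspondence already established in the proof of Theorem~\ref{thm:prob_1_hard}.
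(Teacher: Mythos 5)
Your high-level plan is the paper's plan (reduce from \#PP2DNF, use memory-3 to turn each arc $(\source,x_i)$ and $(y_j,\sink)$ into a one-time coin latch, add a costly fallback route), but two of your concrete choices would fail as stated. First, the fallback: you propose a path of fixed length $L$ whose arcs appear \emph{deterministically}. For the ``wait at $\source$, then route-or-fallback'' behaviour to be optimal, the fallback must be worse than waiting one step and gambling on an event of probability $p=\psi/2^{n+m}$, which forces $L = \Omega(2^{n+m})$ and hence an exponential-size instance; with any polynomial $L$ the optimal policy simply takes the fallback and $\mathbb{E}[Y]$ reveals nothing about $\psi$. The paper instead uses a two-edge path $(\source,v),(v,\sink)$ whose arcs each appear with probability $2/M$ for $M=5\cdot 2^{n+m}$: this costs only one extra vertex and polynomially many bits, yet has expected traversal time $M$, which provides exactly the exponential amplification your deterministic path cannot. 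Second, your latch gadget is self-contradictory as described: if the coin is flipped ``from the all-zero history,'' then after a tails outcome the history is again all-zero and the coin is re-flipped, so the arc appears with probability tending to $1$ rather than exactly $1/2$ once and for all. The actual content of the memory-3 construction is resolving this: the paper sets the fictitious initial snapshots to (absent, absent, present), so that the unique history $0,1,0$ occurs exactly at time step $2$ and triggers the single fair coin flip, while every other history deterministically freezes the arc's state. Distinguishing ``not yet decided'' from ``decided absent'' is precisely what requires the third bit of memory, and your sketch does not supply it.

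There is also a structural difference worth noting: you correctly identify the policy-analysis lemma (that $\mathbb{E}[Y]$ is \emph{exactly} affine in $p$) as the main obstacle, but you do not overcome it, and the paper does not attempt to. Instead it sandwiches the optimal value $\ell$ between $M(1-g)$ (any policy must cross the $v$-path when no length-3 route exists, which happens with probability $1-g$) and $1+3g+M(1-g)$ (the value of one explicit policy), and then chooses $M=5\cdot 2^{n+m}$ so that the resulting interval for $\psi$ has width less than $1$; since $\psi$ is an integer, it is determined exactly. This two-sided bound entirely replaces the exact characterization of the optimal policy that your argument still requires, and without it (or the exponential amplification that makes the sandwich tight enough) your reduction does not go through.
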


\begin{proof}
	We will provide a reduction from the counting problem \#PP2DNF which is
	known to be {\#P}-hard~\cite{PB83}. Recall that this problem takes as input a DNF formula 
	$\Phi =\bigvee_{(i,j)\in E}x_{i}y_{j}$ on the sets of variables $%
	X=\{x_{1},\ldots ,x_{n}\}$ and $Y=\{y_{1},\ldots ,y_{m}\}$,
	for some $E \subseteq [n] \times [m]$, and the task is
	to compute the number $\psi$ of truth assignments that satisfy $\Phi $.
	Similarly to our reduction in the proof of Theorem~\ref{thm:prob_1_hard}, we
	create a directed acyclic graph (DAG) $H$ as follows. First, $H$ has one
	vertex for each of the variables in $X\cup Y$; then we add two distinct
	vertices $\source,\sink$ and one other vertex $v$. For every vertex $x_{i}\in X$ and
	every vertex $y_{i}\in Y$ we add the directed edges $(\source,x_{i})$ and $%
	(y_{j},\sink)$. Furthermore we add the edge $(x_{i},y_{j})$ whenever $x_{i}y_{j}$
	is a clause in $\Phi $. Finally we add the edges $(\source,v)$ and $(v,\sink)$. The
	construction of $H$ is illustrated in Figure~\ref{hardness-memory-3-graph-fig}.
	
	\begin{figure}[ht]
		\centering
		\includegraphics[scale=0.6]{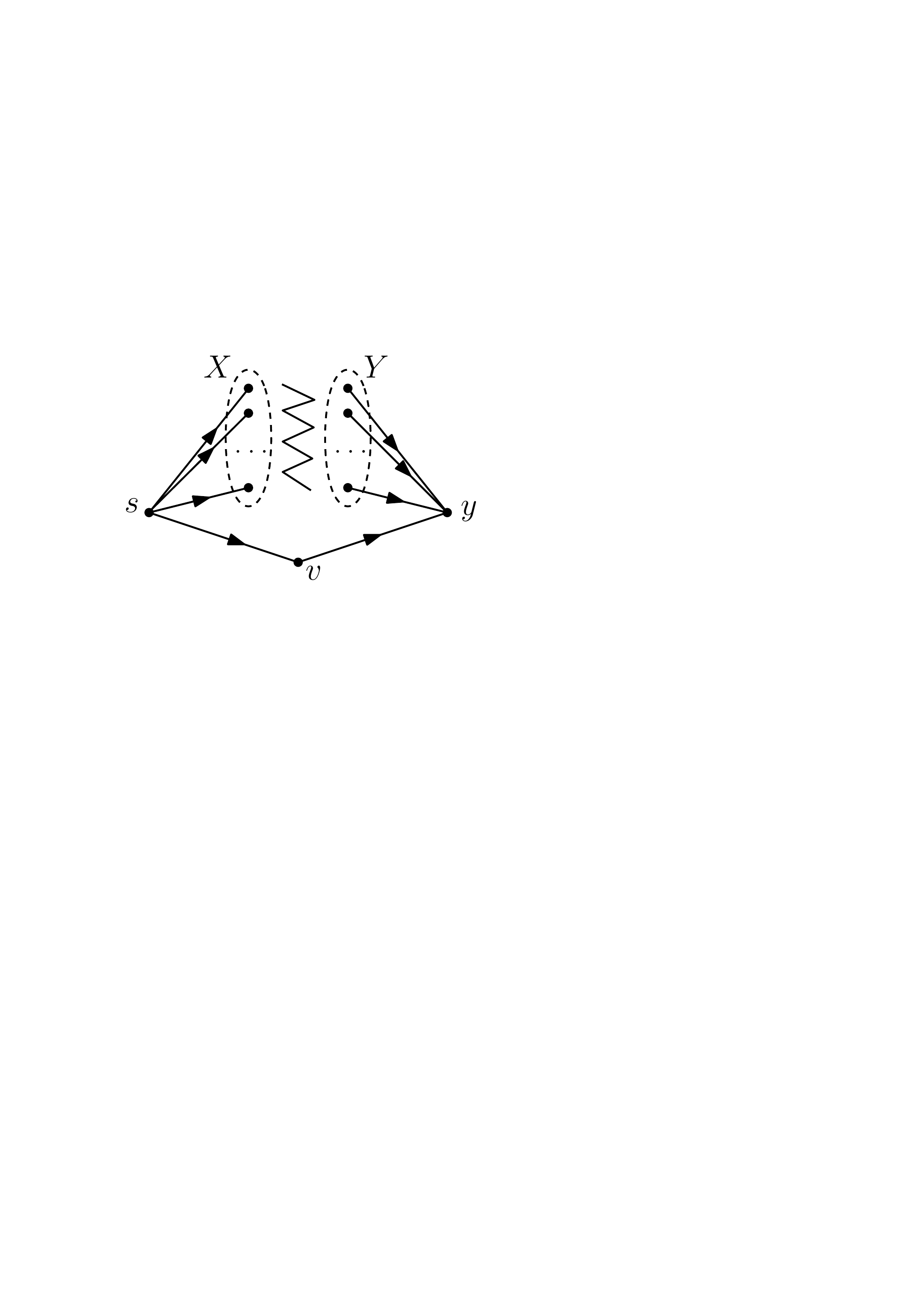}
		\caption{The construction of the DAG $H$.}
		\label{hardness-memory-3-graph-fig}
	\end{figure}

	Denote by $M=5\cdot 2^{n+m}$, and assume that $2^{n+m}\geq 3$ in order to
	avoid trivialities. All edges $(x_{i},y_{j})$ appear constantly in $H$, 
	i.e.~they appear at every time step $i\geq 1$ in a memoryless fashion with
	probability $1$. Both edges $(\source,v)$ and $(v,\sink)$ also appear in a memoryless
	fashion, each of them with probability~$\frac{2}{M}$ at every step $i\geq 1$. 
	Moreover, each of the edges $(\source,x_{i})$ and $(y_{j},\sink)$ appears at each
	step $i\geq 1$ according to the following table of memory $3$. This table
	has four columns and eight rows. Each column is labeled with the sequence of
	consecutive time steps $i-3,i-2,i-1$, and $i$. Each row corresponds to a
	different triple of appearances of each of the edges in $%
	\{(\source,x_{i}),(y_{j},\sink):x\in X,y\in Y\}$ at the time steps $i-3,i-2,i-1$ (here 
	$1$ means ``edge exists'' and $0$ means
	``edge does not exist''). At the end of
	each row there is a pair of numbers $(p,1-p)$ which denotes that, with the
	particular history of memory $3$, at time step $i$ the edge appears with
	probability $p$ and it does not appear with probability $1-p$. For
	simplicity of notation, in the column of time step $i$, we write
	``0'' and ``1'' to denote the entries $(0,1)$ and $(1,0)$, respectively.
	
	\begin{center}
		\begin{tabular}{ccc|c}
			$i-3$ & $i-2$ & $i-1$ & $i$ \\ \hline
			0 & 0 & 1 & 0 \\ 
			0 & 1 & 0 & ($\frac{1}{2}$,$\frac{1}{2}$) \\ 
			1 & 0 & 0 & 0 \\ 
			0 & 0 & 0 & 0 \\ 
			1 & 0 & 1 & 1 \\ 
			0 & 1 & 1 & 1 \\ 
			1 & 1 & 1 & 1 \\ 
			1 & 1 & 0 & 1%
		\end{tabular}
	\end{center}
	
	To complete the description of our memory-3 instance, we specify that, in
	the fictitious initialization snapshots $G_{-2},G_{-1},G_{0}$, each of the
	edges $(\source,x_{i})$ and $(y_{j},\sink)$ appears with probability $0$, $0$, and $1$, 
	respectively, i.e.~according to the first row of the above table.
	
	The intuition of this table for the edges $(\source,x_{i})$ and $(y_{j},\sink)$ is as
	follows. In the snapshot $G_{1}$, none of these edges appears (see the first
	line of the table). Then, to determine whether each of these edges appears
	at time step $2$ (see the second row of the table), we need to toss an
	unbiased coin which with probability $\frac{1}{2}$ outputs ``appear'' 
	and with probability $\frac{1}{2}$ outputs ``does not appear''. Once this coin has
	been tossed at time step $2$, the status of the edge does not change any
	more in any subsequent time step $i\geq 3$. That is, if one of the edges $%
	(\source,x_{i})$ and $(y_{j},\sink)$ appears (resp. does not appear) at time $2$, then
	it appears (resp. does not appear) at all times $i\geq 3$ too. This is easy
	to be verified by observing the rows $3$-$7$ of the table. Note that the
	last row of the table is included only for the sake of completeness, as it
	does not affect the appearance of any edge of $H$ at any time step $i$.
	
	Let $\ell $ be the expected $\source$-$\sink$ arrival time of the best policy in the
	memory-3 model. Note that, from the above construction of the temporal graph
	instance, each of the edges $(\source,x_{i})$ and $(y_{j},\sink)$ appears with
	probability $\frac{1}{2}$ at all steps $i\geq 2$, while it does not appear
	at any step $i\geq 2$ with probability $\frac{1}{2}$. Therefore, the
	probability that there exists a directed temporal path $(\source,x_{i},y_{j},\sink)$
	is equal to $g=\frac{\psi}{2^{n+m}}$, where $\psi$ is the number of satisfying
	truth assignments of the DNF formula $\Phi $. That is, with probability $%
	1-g $, there exists no such temporal path from $\source$ to $\sink$ with 3 edges
	through some vertices $x_{i}$ and $y_{j}$. Furthermore, the expected $\source$-$\sink$
	arrival time through the edges $(\source,v)$ and $(v,\sink)$ is equal to $\frac{M}{2}+%
	\frac{M}{2}=M$. Therefore, since with probability $1-g$ any policy (also the
	best one) needs to travel from $\source$ to $\sink$ through vertex $v$, it follows
	that $\ell \geq M(1-g)$.
	
	We now define the following policy: at time step $1$ do nothing and just
	wait for the outcome of the random coin tosses which occur at time step $2$.
	Subsequently, at time step $2$ do the following: if there exists a directed
	temporal path $(\source,x_{i},y_{j},\sink)$ then follow it, starting at time step $2$;
	otherwise follow the temporal path $(\source,v,\sink)$ which has an expected travel
	time $\frac{M}{2}+\frac{M}{2}=M$. The expected arrival time of this
	particular policy is equal to $1+3g+M (1-g)$, and thus it follows that $%
	\ell \leq 1+3g+M (1-g)$. Summarizing, we have:%
	\begin{eqnarray*}
		M(1-g) &\leq &\ell\ \ \leq\ \ 1+3g+M (1-g) \Leftrightarrow  \\
5\cdot 2^{n+m}-5\psi &\leq &\ell \ \ \leq \ \ 5\cdot 2^{n+m}-5\psi+3\frac{\psi}{2^{n+m}}+1.
	\end{eqnarray*}%
	The first inequality can be written as $2^{n+m}-\frac{\ell }{5}\leq \psi$,
	while the second one can be written as $\left( 1-\frac{3}{5\cdot 2^{n+m}}%
	\right) \psi\leq 2^{n+m}-\frac{\ell }{5}+\frac{1}{5}$. Therefore: 
	\begin{equation*}
		2^{n+m}-\frac{\ell }{5} \leq \psi\leq \left( 1+\frac{3}{5\cdot 2^{n+m}-3}\right) 
		\left( 2^{n+m}-\frac{\ell }{5}+\frac{1}{5}\right) \leq 2^{n+m}-\frac{\ell }{5}+\frac{1}{5}+\frac{3}{4},
	\end{equation*}%
	and thus%
	\begin{equation}
	2^{n+m}-\frac{\ell }{5}\leq \psi\leq 0.95+2^{n+m}-\frac{\ell }{5}.
	\label{eq-for-k}
	\end{equation}%
	Thus, knowing the expected value $\ell $ for the best policy we can derive
	the exact integer value for $\psi$ in the counting problem \#PP2DNF. This completes the
	{\#P}-hardness reduction. 
\end{proof}

\subsection{An exact algorithm for the memory-$k$ model, $k\geq 1$} \label{sec:bestpolicy}

In this section we present a doubly exponential-time exact algorithm for computing the best policy for Alice in the memory-$k$ model, where $k\geq 1$. 
We first give a Markov Decision Process (MDP) formulation of our problem under the memory-$k$ model that will be useful 
for the presentation of our results within the general MDP framework.

\subsubsection{An MDP formulation} \label{sec:mdpformulation}

We follow the notation from chapter 5.4 of \cite{Norris} to give an MDP formulation. Let $k \geq 1$ be a fixed integer corresponding to the memory of the model. We denote by ${\cal I} = V \times (2^{G})^k$ the set of \emph{states}, where $2^G$ denotes the set of subgraphs of the underlying graph $G$. In particular, each state $(v, H^{(k)}) \in {\cal I}$ consists of a vertex $v$ which corresponds to the vertex Alice is on and a sequence of $k$ graphs $H^{(k)}$ corresponding to the $k$ most recent snapshots. 
For any $t \geq 0$, we will say that $H^{(k)}_t \stackrel{\text{def}}{=} \left(G_{t-k+1}, G_{t-k+2}, \ldots, G_{t-1}, G_t\right)$ occurred at time $t$, if the snapshots at times $t-k+1, t-k+2, \ldots, t-1, t$ are $G_{t-k+1}, G_{t-k+2}, \ldots, G_{t-1}, G_t$, respectively. The set of \emph{actions} for Alice is the set ${\cal A} = V$. A \emph{stationary policy} for Alice is a function $f:{\cal I} \to {\cal A}$ and determines a probability law $\Pr^f$ for a Markov chain $(X_t)_{t \geq 0}$ with values in ${\cal I}$ as follows:
\begin{description}

\item[(i)] Assuming that at time 0 Alice starts from vertex $s$ and the initial sequence of $k$ snapshots is $H^{(k)}_0 = \left(G_{-k+1}, G_{-k+2}, \ldots, G_{-1}, G_0\right)$, the initial distribution is given by $\Pr^f\left[X_0=(s, H^{(k)}_0) \right] = 1$, and $\Pr^f\left[X_0=(v, H^{(k)}) \right] = 0$ if $v \neq s$ or $H^{(k)} \neq H^{(k)}_0$.

\item[(ii)] For any $t \geq 0$,
\begin{eqnarray}
&& {\Pr}^f\left(X_{t+1}=(v_{t+1}, H^{(k)}_{t+1}) | X_t=(v_t, H^{(k)}_t) \right) \nonumber \\
&& \quad = \left\{ 
\begin{array}{ll}
	\Pr[\textrm{$G_{t+1}$ occurs at $t+1$}| \textrm{$H^{(k)}_t$ occurred at $t$}] & \textrm{if $f(v_t, H^{(k)}_{t+1})=v_{t+1}$} \\
	0 & \textrm{if $f(v_t, H^{(k)}_{t+1}) \neq v_{t+1}$}
\end{array}
\right.
\end{eqnarray} 
\end{description}

Without loss of generality, we will assume that every policy $f$ is \emph{legitimate} in the sense that the following conditions hold:
\begin{description}
\item[A.] $f(v_t, H^{(k)}_{t+1}) = v_{t+1}$ only if $(v_t, v_{t+1}) \in E(G_{t+1})$, i.e.~Alice may visit $v_{t+1}$ in the next step only if $G_{t+1}$ has an edge that connects $v_t$ (the vertex she is currently on) and $v_{t+1}$ (the vertex she wants to go to). 

\item[B.] Recalling that the goal of Alice is to reach $y$, we assume that $f(y, H^{(k)})=y$, for any $H^{(k)}$, i.e.~Alice will never leave her target vertex once she reaches it.
\end{description}
For simplicity, we will denote by $a_t$ Alice's $t$-th action (vertex choice). In particular, $a_0 = s$ and inductively $a_{t+1} = f(a_t, H^{(k)}_{t+1})$, for any $t \geq 0$. Furthermore, let $\mu(G_{t+1}|H^{(k)}_t) \stackrel{\text{def}}{=} \Pr[\textrm{$G_{t+1}$ occurs at $t+1$}| \textrm{$H^{(k)}_t$ occurred at $t$}]$.

To complete the specification of the Markov Decision Process, we assume that constant cost $c((v, H^{(k)}), a)=1$ is incurred when action $a$ is chosen in state $(v, H^{(k)})$ with $v \neq y$, otherwise $c((y, H^{(k)}), a)=0$. Therefore, to every legitimate policy $f$ we can associate an expected total cost starting from state $(a_0, H^{(k)}_0)$, given by $h^f(y, y, H^{(k)}) = 0$, for any $H^{(k)}$ and, for any $a_0 \neq y$ and any $H^{(k)}_0$,
\begin{eqnarray}
\hspace{-0,5cm}
 h^f(a_0, y, H^{(k)}_0) &=& \mathbb{E}^f\left[ \sum_{t=0}^\infty c((a_t, H^{(k)}_t), a_{t+1})\right] = \mathbb{E}^f\left[ \sum_{t=0}^\infty c((a_t, H^{(k)}_t), f(a_t, H^{(k)}_{t+1}))\right] \label{eq:start} \\
&=& 1 + \mathbb{E}^f\left[ \sum_{G_1} \mu(G_1|H^{(k)}_0) \sum_{t=1}^\infty c((a_t, H^{(k)}_t), a_{t+1}) \right] \label{eq:condition1} \\
&=& 1 + \sum_{G_1} \mu(G_1|H^{(k)}_0) h^f(a_1, y, H^{(k)}_1) \label{eq:condition1.1}.
\end{eqnarray}
To be more clear, the expectations in equation (\ref{eq:start}) are over random variables 
$G_1, G_2, \ldots$, while the expectation in equation (\ref{eq:condition1}) 
is over $G_2, G_3, \ldots$. Furthermore, equation (\ref{eq:condition1}) follows by conditioning 
on $G_1$ and equation (\ref{eq:condition1.1}) follows by observing that, 
by symmetry, $\mathbb{E}^f\left[\sum_{t=1}^\infty c((a_t, H^{(k)}_t), a_{t+1})\right]$ equals 
the expected total cost starting from $(a_1, H^{(k)}_1)$.

\begin{observation}
\emph{Any} policy $f$ guiding Alice from $s$ to $y$ must satisfy recurrence (\ref{eq:condition1.1}), with initial condition $h^f(y, y, H^{(k)}) = 0$, for every $H^{(k)}$.
\end{observation}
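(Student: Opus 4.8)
The plan is to observe that the chain of equalities (\ref{eq:start})--(\ref{eq:condition1.1}) displayed above never invoked any optimality of $f$: it used only that $f$ is legitimate (so conditions~A and~B hold) together with the prescribed values of the cost function. Hence proving the observation amounts to re-reading that derivation for an \emph{arbitrary} legitimate policy $f$, an arbitrary choice of current vertex $a_0$, and an arbitrary history $H^{(k)}_0$, and justifying each step. First I would settle the initial condition: if $a_0 = y$, then by condition~B we get $a_{t+1} = f(a_t, H^{(k)}_{t+1}) = y$ by induction on $t$, so $\Pr^f$-almost surely $X_t = (y, H^{(k)}_t)$ for all $t \ge 0$; since $c((y, H^{(k)}), a) = 0$ for every $H^{(k)}$ and every action $a$, every summand of $\mathbb{E}^f\big[\sum_{t \ge 0} c((a_t, H^{(k)}_t), a_{t+1})\big]$ is $0$, whence $h^f(y, y, H^{(k)}) = 0$ for every $H^{(k)}$.

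Second, assume $a_0 \neq y$. The cost at $t = 0$ is $c((a_0, H^{(k)}_0), a_1) = 1$ deterministically, so it factors out of the expectation, giving the leading $1$ of (\ref{eq:condition1}). For the remaining tail I would condition on the first snapshot $G_1$: by item~(ii) of the MDP definition each subgraph $G_1$ occurs with probability $\mu(G_1 \mid H^{(k)}_0)$, and once $G_1$ is fixed both $H^{(k)}_1 = (G_{-k+2}, \dots, G_0, G_1)$ and $a_1 = f(a_0, H^{(k)}_1)$ are determined. Since $G$ is finite the sum over subgraphs $G_1$ is a finite sum, and all costs are nonnegative, so Tonelli's theorem justifies writing $\mathbb{E}^f\big[\sum_{t \ge 1} c((a_t, H^{(k)}_t), a_{t+1})\big] = \sum_{G_1} \mu(G_1 \mid H^{(k)}_0)\, \mathbb{E}^f\big[\sum_{t \ge 1} c((a_t, H^{(k)}_t), a_{t+1}) \ \big|\ X_1 = (a_1, H^{(k)}_1)\big]$, which is (\ref{eq:condition1}).

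The crux is the identity $\mathbb{E}^f\big[\sum_{t \ge 1} c((a_t, H^{(k)}_t), a_{t+1}) \ \big|\ X_1 = (v, H^{(k)})\big] = h^f(v, y, H^{(k)})$, i.e.\ the ``by symmetry'' remark of the text. I would derive it from the Markov property of $(X_t)_{t \ge 0}$ together with time-homogeneity: the transition kernel in item~(ii) is built from $\mu(\,\cdot \mid H^{(k)}_t)$, which carries no explicit dependence on $t$, and from the stationary function $f$, so the kernel is the same at every step. Consequently, conditioned on $X_1 = (v, H^{(k)})$, the process $(X_1, X_2, \dots)$ has the same law under $\Pr^f$ as $(X_0, X_1, \dots)$ started from state $(v, H^{(k)})$; since $\sum_t c((a_t, H^{(k)}_t), a_{t+1})$ is a fixed nonnegative functional of the trajectory, its conditional expectation is precisely the expected total cost from $(v, H^{(k)})$, namely $h^f(v, y, H^{(k)})$ --- with the understanding that this may equal $+\infty$ (e.g.\ when $f$ fails to steer Alice to $y$), in which case the whole recurrence is read as an identity in $[0, \infty]$. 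Substituting back and recalling that $a_1 = f(a_0, H^{(k)}_1)$ is the value determined by $G_1$, we obtain (\ref{eq:condition1.1}) for the arbitrary legitimate policy $f$ and the arbitrary state $(a_0, H^{(k)}_0)$, which completes the proof. The only genuine obstacle is making the time-homogeneity/Markov argument of the last paragraph precise; everything else is bookkeeping.
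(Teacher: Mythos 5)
Your proposal is correct and follows essentially the same route as the paper: the paper's justification of the Observation is precisely the displayed derivation (\ref{eq:start})--(\ref{eq:condition1.1}), carried out for an arbitrary legitimate policy $f$ (factoring out the unit cost at $t=0$, conditioning on $G_1$, and invoking the "by symmetry" / time-homogeneity argument), together with the initial condition forced by condition~B and the zero cost at $y$. You merely make explicit the details the paper leaves implicit (the Markov/stationarity justification and the reading of the identity in $[0,\infty]$), which is a faithful elaboration rather than a different argument.
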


Our objective is to find a policy that minimizes the expected total cost $h^f(a_0, y, H^{(k)}_0)$. In particular, this policy will have the value $h^*(a_0, y, H^{(k)}_0) = \inf_f h^f(a_0, y, H^{(k)}_0)$ which will be equal to the expected arrival time of a journey suggested to Alice by an optimal policy. In fact, without loss of generality we will assume that the $h^*$-values of an optimal policy satisfy $h^*(a_0, y, H^{(k)}_0) = \inf_f h^f(a_0, y, H^{(k)}_0)$, for all $a_0 \in V$ and all $H^{(k)}_0 = \left(G_{-k+1}, G_{-k+2}, \ldots, G_{-1}, G_0\right)$, such that $G_i \subseteq G$, for all $-k+1 \leq i \leq 0$.

\subsubsection{A doubly exponential-time algorithm} \label{sec:doublyexp}

We now provide our doubly exponential-time algorithm for \Pt\ in the memory-$k$ model, where $k\geq 1$. 
In order to simplify the notation and presentation of this section, we only provide the proof of the algorithm for the special case $k=1$; the analysis for arbitrary $k\geq 1$ carries then easily over, 
as we discuss at the end of the section.

Following the notation of Section~\ref{sec:mdpformulation} for memory-1, we denote by $\mu(G''|G')$ the probability that the next snapshot is $G''$, given that the current snapshot is $G'$. Furthermore, for $a_0 \in V$, let $h(a_0, y, G_0)$ be the expected arrival time of a journey from $a_0$ to $y$ suggested to Alice by an optimal policy, given that the starting graph instance is equal to $G_0$.

We define the following policy $\pi$: For any time step $t \geq 0$, if at time $t$ Alice was on a vertex $a_t$ and at time $t+1$ the graph instance is $G_{t+1}$, then at time $t+1$ she will move to a vertex $u \in \Gamma_{G_{t+1}}[a_t]$ that has minimum $h(u, y, G_{t+1})$, that is, 
\begin{equation}
\pi(a_t, G_{t+1}) \stackrel{\text{def}}{=} a_{t+1} \in \arg\min\left\{h(u, y, G_{t+1}): u \in \Gamma_{G_{t+1}}[a_t] \right\}.
\end{equation}
By part (ii) of Theorem 5.4.3 of \cite{Norris}, we have the following:

\begin{lemma} \label{lemma:optimalpi}
Policy $\pi$ is optimal.
\end{lemma}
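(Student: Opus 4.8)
The statement asserts that the greedy policy $\pi$, which always moves to the accessible vertex with the smallest optimal cost-to-go $h(\cdot, y, G_{t+1})$, is optimal. I would prove this by invoking the standard optimality theorem for negative (minimum expected total cost) Markov Decision Processes — exactly the route the authors flag by citing part (ii) of Theorem~5.4.3 of~\cite{Norris}. The key is that this theorem characterises the minimal expected total cost as the \emph{minimal non-negative solution} of the dynamic programming (Bellman) equations, and simultaneously guarantees that any stationary policy which selects, in every state, an action attaining the minimum on the right-hand side of those equations is optimal.

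\textbf{Step 1: Write down the Bellman equations.} From the MDP formulation in Section~\ref{sec:mdpformulation}, specialised to $k=1$, the cost-to-go function must satisfy $h^*(y, y, G') = 0$ for every $G'$, and for $a_0 \neq y$,
\[
h^*(a_0, y, G_0) \ = \ 1 + \min_{u \in \Gamma_{G_0}[a_0]} \ \sum_{G_1 \subseteq G} \mu(G_1 \mid G_0) \, h^*(u, y, G_1).
\]
This follows by combining recurrence~(\ref{eq:condition1.1}) with the fact that an optimal policy, having observed $G_1$, can do no better than move to the vertex minimising the cost-to-go from the new state; the legitimacy constraints~\textbf{A} and~\textbf{B} justify restricting the minimum to $\Gamma_{G_0}[a_0]$ and fixing the value $0$ at $y$.

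\textbf{Step 2: Identify $h(\cdot,y,\cdot)$ as the minimal non-negative solution.} Here $h(a_0,y,G_0)$ is by definition $\inf_f h^f(a_0,y,G_0)$, and all costs are non-negative, so we are in the negative-case MDP setting of~\cite{Norris}. The cited theorem tells us that $h$ is precisely the minimal non-negative solution of the Bellman equations of Step~1, hence it satisfies them with equality. In particular, for each state $(a_0, G_0)$ with $a_0 \neq y$, the vertex $a_1 = \pi(a_0, G_1)$ chosen by $\pi$ upon seeing $G_1$ attains $\min_{u \in \Gamma_{G_1}[a_0]} h(u, y, G_1)$ by the very definition of $\pi$.

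\textbf{Step 3: Conclude via the verification direction of the MDP theorem.} Since $\pi$ is a stationary policy that is greedy with respect to $h$ — i.e.\ in every state it selects an action achieving the minimum in the Bellman equation — part~(ii) of Theorem~5.4.3 of~\cite{Norris} yields $h^\pi(a_0, y, G_0) = h(a_0, y, G_0) = \inf_f h^f(a_0, y, G_0)$ for all $a_0$ and $G_0$, which is exactly the claim that $\pi$ is optimal.

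\textbf{Main obstacle.} The routine part is the algebraic manipulation deriving the Bellman equations from~(\ref{eq:condition1.1}); the genuine subtlety is checking that the hypotheses of the negative-case MDP optimality theorem actually hold — in particular, that the infimum over policies is attained and that greediness with respect to $h$ suffices (rather than merely being necessary). This is where one must be careful: one needs that $h$ is \emph{finite} (which follows from the earlier bound $h(a_0,y,G_0) \le w^\ast < \infty$ in spirit, or more directly from Lemma~\ref{lem:fpras-1}-type arguments guaranteeing $y$ is reached in finite expected time under \emph{some} policy, provided all relevant edge probabilities are positive), so that the minimal non-negative solution is the true value function and the greedy policy inherits optimality. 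Once finiteness is in hand, the rest is a direct citation.
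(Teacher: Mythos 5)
Your proposal is correct and takes essentially the same route as the paper, which gives no argument beyond directly invoking part~(ii) of Theorem~5.4.3 of~\cite{Norris}; your three steps are simply a careful unpacking of that citation (Bellman equations, minimal non-negative solution, verification for the greedy policy). Your added remark about needing finiteness of $h$ for the negative-case MDP theorem to apply is a legitimate point of care that the paper glosses over.
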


Notice that in the definition of $\pi$, we assumed that the $h$-values are given. Therefore, to determine $\pi$ we need to compute $h(a_0, y, G_0)$, for every $a_0 \in V$ and $G_0 \subseteq G$. We start by rewriting recurrence (\ref{eq:condition1.1}) for policy $\pi$:
\begin{equation} \label{eq:recurrencememory1}
h^{\pi}(a_0, y, G_0) = 1 + \sum_{G_1} \mu(G_1|G_0) h^{\pi}(\pi(a_0, G_1), y, G_1).
\end{equation} 
Since $\pi$ is optimal, the left hand side of the above equation is equal to $h(a_0, y, G_0)$. Furthermore, by definition of $\pi(a_0, G_1)$, 
\begin{equation}
h^{\pi}(\pi(a_0, G_1), y, G_1) = \min\left\{h(u, y, G_1): u \in \Gamma_{G_1}[a_0] \right\}.
\end{equation}
Therefore, recurrence (\ref{eq:recurrencememory1}) becomes
\begin{equation} \label{eq:recurrencepi}
h(a_0, y, G_0) = 1 + \sum_{G_1} \mu(G_1|G_0) \min\left\{h(u, y, G_1): u \in \Gamma_{G_1}[a_0] \right\}.
\end{equation}

Suppose that we know an ordering of the triplets $(a_0, y, G_0), a_0 \in V, G_0 \subseteq G$, in increasing values of $h(a_0,y,G_0)$, breaking ties arbitrarily. Notice that these are $n2^{m} \stackrel{\text{def}}{=} N$ values, where $m=|E|$ is the number of edges of $G$. Then the minimum in recurrence (\ref{eq:recurrencepi}) can be replaced with the corresponding $h$-value, which is completely determined by the graph $G_1$ and the vertex $a_0$. Doing this for all different vertices $a_0$ and graphs $G_0$, we get a linear system with $N$ equations coming from (\ref{eq:recurrencepi}) and as many variables (the $h$-values). To this system, we then add the initial conditions $h(y,y, G_0)=0$, for all $G_0 \subseteq G$. This can be solved in $O(N^3)$ time. 

Notice however, that for the above approach to work, we need an ordering of the triplets $(a_0, y, G_0)$ in increasing values of $h(a_0,y,G_0)$. We can therefore have the following brute-force algorithm: For each of the (at most) $N!$ orderings of the triplets $(u_0, y, G_0), a_0 \in V, G_0 \subseteq G$, solve the linear system derived by the recurrence (\ref{eq:recurrencepi}) as described above, assuming the ordering is ``correct'', namely it corresponds to an ordering in increasing values of $h(a_0,y,G_0)$. Then check if the ordering we get from the solution to that system is the same as the one we assumed. If not, then consider a different ordering. 

Notice however the following issue with the above brute-force algorithm: suppose the correct ordering $\sigma^*$ is considered and we construct the corresponding linear system of equations (call it $\Sigma$) based on (\ref{eq:recurrencepi}). Clearly, $\Sigma$ has at least one solution, but what happens if there are more than one solutions, some of which giving an ordering that is not consistent with $\sigma^*$? Can we find the correct solution among all other solutions of $\Sigma$? To circumvent this problem, we replace the linear system of equalities with a linear system of inequalities (constraints). To this end, let $u_{a_0, G_1}^{\sigma^*}$ be the vertex such that $u_{a_0, G_1}^{\sigma^*} \in \Gamma_{G_1}[a_0]$ and the triplet $(u_{a_0, G_1}^{\sigma^*}, y, G_1)$ appears in $\sigma^*$ before all triplets $(u, y, G_1)$, for all $u \in \Gamma_{G_1}[a_0] \setminus \{u_{a_0, G_1}^{\sigma^*}\}$. We want to find \emph{any} solution satisfying the following constraints: 
\begin{eqnarray}
&& h'(a_0, y, G_0) = 1 + \sum_{G_1} \mu(G_1|G_0) h'(u_{a_0, G_1}^{\sigma^*}, y, G_1) \label{constraint:recurrence}\\
&& h'(u_{a_0, G_1}^{\sigma^*}, y, G_1) \leq h'(u, y, G_1) , \quad \textrm{$\forall u \in \Gamma_{G_1}[a_0], a_0 \in V\backslash \{y\}, G_0 \subseteq G$} \label{constraint:min} \\
&& h'(a_0, y, G_0) \geq 0 \quad \textrm{$\forall a_0 \in V, G_0 \subseteq G$} \label{constraint:nonnegative}\\
&& h'(y, y, G_0) = 0 \quad \textrm{$\forall G_0 \subseteq G$}. \label{constraint:zero}
\end{eqnarray}  

By definition of $\sigma^*$, the above set of constraints has at least one solution, namely the one corresponding to the $h$-values of an optimal policy. In Theorem~\ref{thm:uniquesolution} we prove that this is the only feasible solution. For the proof, we also need the following Theorem from \cite{Norris}, which we restate here in our notation for convenience:

\begin{theorem}[Policy increment, Theorem 5.4.4, \cite{Norris}] \label{thm:policyincrement}
Given one stationary policy $f$, let $\theta f$ denote the policy that, for every $a_0, G_0$ minimizes $\sum_{G_1} \mu(G_1|G_0) h^f((\theta f)(a_0, G_1), y, G_1)$. Then, for all $a_0, G_0$
\begin{equation}
\lim_{k \to \infty} h^{\theta^k f}(a_0, y, G_0) = h(a_0, y, G_0),
\end{equation}
provided $\mathbb{E}_{(a_0, G_0)} h^f(a_n, y, G_n) \to 0$ as $n \to \infty$.
\end{theorem}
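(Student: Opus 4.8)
The plan is to prove this by the classical policy-iteration argument for total-cost Markov decision processes, specialised to our finite state space ${\cal I}=V\times 2^{G}$ (of size $N=n2^{m}$ when $k=1$) with absorbing, zero-cost target states $(y,G_0)$. There are two ingredients: a \emph{monotonicity} lemma saying that one greedy improvement step never increases the value, i.e.~$h^{\theta f}\leq h^f$ pointwise on ${\cal I}$, and a \emph{verification} step identifying the pointwise monotone limit of the iterates $h^{\theta^{k}f}$ with the optimal value $h(\cdot,y,\cdot)=\inf_{f'}h^{f'}(\cdot,y,\cdot)$.

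\textbf{Improvement and stabilisation.} Fix a legitimate policy $f$. Recurrence (\ref{eq:condition1.1}) gives $h^f(a_0,y,G_0)=1+\sum_{G_1}\mu(G_1|G_0)\,h^f(f(a_0,G_1),y,G_1)$ for $a_0\neq y$, and by the defining (greedy) property of $\theta f$ we have $\sum_{G_1}\mu(G_1|G_0)\,h^f((\theta f)(a_0,G_1),y,G_1)\leq\sum_{G_1}\mu(G_1|G_0)\,h^f(f(a_0,G_1),y,G_1)$, hence
\begin{equation*}
h^f(a_0,y,G_0)\ \geq\ 1+\sum_{G_1}\mu(G_1|G_0)\,h^f\big((\theta f)(a_0,G_1),y,G_1\big).
\end{equation*}
Iterating this inequality $n$ times along the chain $(a_t,G_t)_{t\geq0}$ driven by $\theta f$ and letting $n\to\infty$ — the accumulated cost increases to $h^{\theta f}(a_0,y,G_0)$ by monotone convergence, while the residual $\mathbb{E}^{\theta f}[h^f(a_n,y,G_n)]\to0$ by hypothesis (a condition that is automatic whenever $h^f$ is finite on the finitely many starting states, since then it equals the vanishing tail $\mathbb{E}^f[\sum_{t\geq n}c((a_t,G_t),a_{t+1})]$) — yields $h^{\theta f}\leq h^f$. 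Applying this repeatedly produces a pointwise non-increasing sequence $h^{\theta^{k}f}\geq0$, which converges to some $\bar h$; and since there are only finitely many stationary policies ${\cal I}\to{\cal A}$, the values must stabilise, i.e.~$h^{\theta g}=h^{g}$ for $g:=\theta^{k_0}f$ with $k_0$ large. Feeding this equality into the recurrence and into the definition of $\theta g$ (using that $a_0\in\Gamma_{G_1}[a_0]$, so waiting is always admissible and the recurrence closes) shows that $\bar h$ solves the Bellman optimality equation
\begin{equation*}
\bar h(a_0,y,G_0)=1+\sum_{G_1}\mu(G_1|G_0)\,\min_{u\in\Gamma_{G_1}[a_0]}\bar h(u,y,G_1)\ \ (a_0\neq y),\qquad \bar h(y,y,\cdot)=0.
\end{equation*}

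\textbf{Verification.} Since $\bar h=h^{g}$ is the value of a concrete policy, $\bar h\geq h$. For the reverse, fix any legitimate policy $f'$. Since $\min_{u\in\Gamma_{G_1}[a_0]}\bar h(u,y,G_1)\leq\bar h(f'(a_0,G_1),y,G_1)$, the optimality equation gives $\bar h(a_0,y,G_0)\leq 1+\sum_{G_1}\mu(G_1|G_0)\bar h(f'(a_0,G_1),y,G_1)$; telescoping this $n$ times along the chain $(a_t,G_t)$ induced by $f'$ yields
\begin{equation*}
\bar h(a_0,y,G_0)\ \leq\ \mathbb{E}^{f'}\Big[\sum_{t=0}^{n-1}c\big((a_t,G_t),a_{t+1}\big)\Big]\ +\ \mathbb{E}^{f'}\big[\bar h(a_n,y,G_n)\big].
\end{equation*}
As $n\to\infty$ the first term increases to $h^{f'}(a_0,y,G_0)$, and the residual $\mathbb{E}^{f'}[\bar h(a_n,y,G_n)]$ tends to $0$: indeed $\bar h\leq h^{f}$ (being the monotone limit of the $h^{\theta^{k}f}$, each $\leq h^{f}$), so the residual is bounded by $\mathbb{E}^{f'}[h^{f}(a_n,y,G_n)]\to0$ as supplied by the hypothesis. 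Hence $\bar h(a_0,y,G_0)\leq h^{f'}(a_0,y,G_0)$ for every $f'$, so $\bar h\leq h$, and therefore $\bar h=h$ and $h^{\theta^{k}f}(a_0,y,G_0)\to h(a_0,y,G_0)$, which is the asserted conclusion.

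\textbf{Main obstacle.} The delicate point is the verification step: running the ``telescope, then let $n\to\infty$'' argument for \emph{every} competing policy $f'$ and not merely for the policies met in the iteration, i.e.~showing that the residual $\mathbb{E}^{f'}[\bar h(a_n,y,G_n)]$ vanishes. The hypothesis nominally concerns only the initial policy $f$, so one must either note that in a finite-state total-cost MDP the condition $\mathbb{E}[h^{f'}(a_n,y,G_n)]\to0$ is equivalent to finiteness of $h^{f'}$ on all starting states (and restrict to such $f'$, which suffices for the infimum), or use the domination $\bar h\leq h^{f}$ together with a truncation argument to control the residual of $\bar h$ by the vanishing dominated tail of $h^{f}$ along the $f'$-chain. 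By contrast, the single-step improvement inequality and the finite-policy stabilisation are routine, and the interchange of limits is justified throughout by monotone convergence since all costs are non-negative.
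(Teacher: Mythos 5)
The paper offers no proof of this statement to compare against: it is imported verbatim as Theorem 5.4.4 of Norris~\cite{Norris} and used as a black box. Your argument is the standard policy-iteration proof of that result (one-step improvement $h^{\theta f}\le h^{f}$, stabilisation by pigeonhole over the finitely many stationary policies, the Bellman optimality equation for the limit $\bar h$, and verification against an arbitrary competitor), and its skeleton is sound; note only that in the improvement step you do not even need the residual to vanish, since dropping the nonnegative term $\mathbb{E}^{\theta f}[h^{f}(a_n,y,G_n)]$ already gives the inequality.

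The one genuinely delicate point is the one you flag yourself: in the verification step the residual $\mathbb{E}^{f'}[\bar h(a_n,y,G_n)]$ must vanish for \emph{every} competitor $f'$, whereas the stated hypothesis controls $\mathbb{E}_{(a_0,G_0)}[h^{f}(a_n,y,G_n)]$ only along the chain generated by an optimal policy. The bound you actually write in the main text, $\mathbb{E}^{f'}[\bar h(a_n)]\le\mathbb{E}^{f'}[h^{f}(a_n)]$ ``as supplied by the hypothesis,'' is therefore not justified as it stands, and $h^{f}$ may even be infinite at states the $f'$-chain visits. Your first proposed repair also has a circularity: you cannot bound $\mathbb{E}^{f'}[\bar h(a_n)]$ by $\mathbb{E}^{f'}[h^{f'}(a_n)]$ before knowing $\bar h\le h^{f'}$. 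The clean finite-state fix is: it suffices to beat competitors with $h^{f'}(a_0,y,G_0)<\infty$; for such $f'$ the absorption time has finite mean, so $\Pr^{f'}[a_n\neq y]\to 0$; and $\bar h$ is bounded on the finite state space (it equals $h^{g}$ for the stabilised policy $g$, which is finite whenever the hypothesis holds), whence $\mathbb{E}^{f'}[\bar h(a_n,y,G_n)]\le \bigl(\max_{(v,H)}\bar h(v,y,H)\bigr)\cdot\Pr^{f'}[a_n\neq y]\to 0$. With that substitution the proof is complete.
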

In the above, the notation $\theta^k f$ means the application of policy increment $k$ times. Furthermore, in the expectation $\mathbb{E}_{(a_0, G_0)} h^f(a_n, y, G_n)$, the state $(a_n, G_n)$ is a random variable and its distribution is determined by an optimal policy, given that we start at $(a_0, G_0)$. We note that, the condition of the Theorem holds in our case by transience of the underlying Markov chain (i.e.~once Alice reaches $y$ she does not leave and no further cost is incurred after that).
 
We now prove the following:
\begin{theorem} \label{thm:uniquesolution}
Let $\sigma^*$ be an ordering of the triplets $(a_0, y, G_0), a_0 \in V, G_0 \subseteq G$, in increasing order of $h$-values. Let $(h'^*(a_0, y, G_0): a_0 \in V, G_0 \subseteq G)$ be any feasible solution to the set of constraints (\ref{constraint:recurrence}) to (\ref{constraint:zero}). Then $h'^*(a_0, y, G_0) = h(a_0, y, G_0)$, for all $a_0, G_0$.
\end{theorem}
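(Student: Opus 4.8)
The plan is to show that every feasible solution $h'^*$ of the constraint system~(\ref{constraint:recurrence})--(\ref{constraint:zero}) already satisfies the Bellman optimality recurrence~(\ref{eq:recurrencepi}), and then to pin it down as the optimal value function $h$ by means of the policy-increment Theorem~\ref{thm:policyincrement}. The first thing to unpack is what the constraints say together. By~(\ref{constraint:min}) the (fixed) vertex $u_{a_0,G_1}^{\sigma^*}$ attains $\min\{h'^*(u,y,G_1): u\in\Gamma_{G_1}[a_0]\}$, so substituting this into~(\ref{constraint:recurrence}) and using $h'^*(y,y,\cdot)=0$ from~(\ref{constraint:zero}) shows that for every $a_0\neq y$ and every $G_0\subseteq G$,
\[
h'^*(a_0,y,G_0) \;=\; 1 + \sum_{G_1}\mu(G_1|G_0)\,\min\{h'^*(u,y,G_1): u\in\Gamma_{G_1}[a_0]\},
\]
i.e.\ $h'^*$ solves exactly~(\ref{eq:recurrencepi}). (As a sanity check, the true $h$-values are one feasible solution, with $u^{\sigma^*}$ the genuine minimiser by the definition of $\sigma^*$, so the system is consistent.)

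Next I would introduce the stationary policy $g$ defined by $g(a_0,G_1):=u_{a_0,G_1}^{\sigma^*}$. It is legitimate: it always chooses a vertex of $\Gamma_{G_1}[a_0]$, and $g(y,\cdot)=y$ because $h'^*\ge 0$ with $h'^*(y,y,\cdot)=0$; and by the display above it is greedy with respect to $h'^*$. Reading~(\ref{constraint:recurrence}) as the policy-evaluation recurrence~(\ref{eq:condition1.1}) specialised to $g$, both $h'^*$ and $h^g$ are solutions of the same linear system $x(a_0,y,G_0)=1+\sum_{G_1}\mu(G_1|G_0)\,x(g(a_0,G_1),y,G_1)$ with $x(y,y,\cdot)=0$. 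The heart of the argument is that $g$ is \emph{proper}: under $g$ Alice reaches $y$ in finite expected time from every state. Writing $\xi_t=(a_t,G_t)$ for the state of the policy-$g$ chain and $\tau$ for the first time $a_t=y$, the recurrence for $h'^*$ gives $\mathbb{E}[h'^*(a_{t+1},y,G_{t+1})\mid\mathcal F_t]=h'^*(a_t,y,G_t)-1$ on $\{\tau>t\}$, so $h'^*(a_{t\wedge\tau},y,G_{t\wedge\tau})+(t\wedge\tau)$ is a martingale; taking expectations, using $h'^*\ge 0$ from~(\ref{constraint:nonnegative}), and letting $t\to\infty$ by monotone convergence yields $\mathbb{E}[\tau]\le h'^*(a_0,y,G_0)<\infty$.

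Properness then forces $h'^*=h^g$: the difference $\delta:=h'^*-h^g$ vanishes on the $y$-states and satisfies $\delta(a_0,y,G_0)=\mathbb{E}[\delta(a_n,y,G_n)\mid \xi_0=(a_0,G_0)]$ for every $n$; since $\delta$ is bounded (the state space is finite) while $\Pr[\tau>n]\to 0$, the right-hand side tends to $0$, hence $\delta\equiv 0$. Finally I would close the loop with Theorem~\ref{thm:policyincrement}. A policy increment applied to $g$ re-optimises, for each $(a_0,G_0)$, the quantity $\sum_{G_1}\mu(G_1|G_0)\,h^g((\theta g)(a_0,G_1),y,G_1)$ over legitimate moves in $\Gamma_{G_1}[a_0]$; since $h^g=h'^*$ and $g(a_0,G_1)=u_{a_0,G_1}^{\sigma^*}$ already minimises $h'^*(\cdot,y,G_1)$ over $\Gamma_{G_1}[a_0]$ for every $G_1$ (again by~(\ref{constraint:min})), the policy $g$ is itself a minimiser, so we may take $\theta g=g$ and hence $h^{\theta^k g}=h^g$ for all $k$. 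The hypothesis $\mathbb{E}_{(a_0,G_0)}h^g(a_n,y,G_n)\to 0$ holds because $g$ is proper (transience: $a_n=y$ for all large $n$ a.s., $h^g$ is bounded and $h^g(y,y,\cdot)=0$). Therefore $h(a_0,y,G_0)=\lim_{k\to\infty}h^{\theta^k g}(a_0,y,G_0)=h^g(a_0,y,G_0)=h'^*(a_0,y,G_0)$ for all $a_0$ and $G_0$, which is the claim.

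I expect the properness step to be the main obstacle: without it the policy-evaluation system for $g$ need not have a unique solution, the identification $h'^*=h^g$ collapses, and the whole argument fails. This is precisely where the nonnegativity constraints~(\ref{constraint:nonnegative}) and the minimality constraints~(\ref{constraint:min}) are indispensable rather than cosmetic — they are exactly the ingredients that make the Lyapunov/martingale bound $\mathbb{E}[\tau]\le h'^*(a_0,y,G_0)$ go through and that identify $g$ with a minimiser in the policy increment. A secondary technical point is the handling of ties when invoking Theorem~\ref{thm:policyincrement}: one needs to be allowed to select $\theta g=g$ rather than merely a policy with the same value, which is harmless since $g$ is itself among the minimisers.
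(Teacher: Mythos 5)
Your proof is correct and takes essentially the same route as the paper's: define the policy that is greedy with respect to the $h'^*$-values, identify its value function with $h'^*$, and then invoke the policy-increment theorem (Theorem~\ref{thm:policyincrement}) to conclude that this policy admits $\theta g=g$ and is therefore already optimal. The one substantive difference is that you explicitly justify the identification $h^{g}=h'^*$ --- via the martingale bound $\mathbb{E}[\tau]\leq h'^*(a_0,y,G_0)$ obtained from the nonnegativity constraints~(\ref{constraint:nonnegative}) and the vanishing of the difference $h'^*-h^{g}$ on $\{\tau>n\}$ --- whereas the paper asserts this step in a single sentence; your added properness argument closes a genuine gap in rigor (a fixed point of the policy-evaluation recurrence need not equal the policy's value without such a transience property) but does not change the overall strategy.
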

\begin{proof}
We define the following policy $\pi^*$ (similar to the definition of $\pi$ earlier, but using the $h'^*$-values instead of the $h$-values): For any time step $t \geq 0$, if at time $t$ Alice was on a vertex $a_t$ and at time $t+1$ the graph instance is $G_{t+1}$, then at time $t+1$ she will move to a vertex $u \in \Gamma_{G_{t+1}}[a_t]$ that has minimum $h'^*(u, y, G_{t+1})$, that is, 
\begin{equation}
\pi^*(a_t, G_{t+1}) \stackrel{\text{def}}{=} a_{t+1} \in \arg\min\left\{h'^*(u, y, G_{t+1}): u \in \Gamma_{G_{t+1}}[a_t] \right\}.
\end{equation}
Since the $h'^*$-values satisfy constraint (\ref{constraint:recurrence}) and also, by constraint (\ref{constraint:min}), $u_{a_0, G_1}^{\sigma^*} \in \arg\min\{ h'^*(u, y, G_1): u \in \Gamma_{G_1}[a_0]\}$, the expected arrival time of a journey from $a_0$ to $y$ when Alice follows policy $\pi^*$ is equal to $h'^*(a_0, y, G_0)$, for all $a_0 \in V$ and $G_0 \subseteq G$. 

Observe that, if $\pi^*$ is not optimal, then we can successively apply policy increments $\theta^k \pi^*, k \to \infty$ as in Theorem~\ref{thm:policyincrement}, to eventually reach optimality. Notice also that the sum in the definition of policy increment for $\pi^*$ can be written as
\begin{equation}
\sum_{G_1} \mu(G_1|G_0) h^{\pi^*}((\theta \pi^*)(a_0, G_1), y, G_1) = \sum_{G_1} \mu(G_1|G_0) h'^*((\theta \pi^*)(a_0, G_1), y, G_1).
\end{equation}
Therefore, by definition, $\pi^*$ itself is a policy that minimizes the above sum, and so we can take $\theta \pi^* = \pi^*$. Consequently, no improvement by increment is possible, implying that $\pi^*$ is optimal. In particular, $(h'^*(a_0, y, G_0): a_0 \in V, G_0 \subseteq G)$ is the same as $(h(a_0, y, G_0): a_0 \in V, G_0 \subseteq G)$, and the proof is completed.
\end{proof}

The set of constraints (\ref{constraint:recurrence}) to (\ref{constraint:zero}) has $N=n2^m$ variables, namely $\{h'(a_0, y, G_0): a_0 \in V, G_0 \subseteq G\}$. 
Furthermore, there are $(n-1)2^m$ constraints of the form (\ref{constraint:recurrence}), at most $n^2 2^{m}$ constraints of the form (\ref{constraint:min}) and $n2^{m}$ non-negativity and initialization constraints, i.e.~$O(nN)$ constraints in total. 
Therefore, Vaydia's algorithm for linear programming \cite{V89} can find an optimum solution in $O((nN)^{2.5})$ time. Since we need to solve this set of constraints for every possible ordering of the $N$ different triplets $(a_0, y, G_0)$, our brute-force approach runs in $O(N! \ (nN)^{2.5})=O(N^N)$ time.

The above analysis for the memory-1 model directly carries over to the memory-$k$ model, for any $k\geq 1$. 
Indeed, the correctness proof can be slightly modified by replacing everywhere 
the subgraphs $G_t$ of $G$ by the length-$k$ histories $H^{(k)}_t$, respectively. 
Furthermore, the running time analysis carries over to the case of an arbitrary $k\geq 1$ by 
replacing $N=n2^{m}$ by $N'=n2^{km}$. 
Summarizing, we obtain the following theorem.

\begin{theorem}
\label{thm-doubly-exponential-k}
Let $k\geq 1$ and $\mathcal{G}^{(k)}$ be a stochastic temporal graph, 
where the underlying graph $G$ has $n$ vertices and $m$ edges. 
Then \Pt\ can be solved on $\mathcal{G}^{(k)}$ in $O(2^{(kmn+n\log n)\cdot 2^{km}})$ time.
\end{theorem}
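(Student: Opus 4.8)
The plan is to lift the memory-1 construction and analysis of Section~\ref{sec:doublyexp} for \Pt\ to arbitrary $k\geq 1$, and then to estimate the resulting running time. First I would reuse the MDP formulation of Section~\ref{sec:mdpformulation}: a state is a pair $(v,H^{(k)})$, where $v\in V$ is Alice's current position and $H^{(k)}\in(2^G)^k$ is the list of the last $k$ snapshots, so that there are exactly $N':=n\,2^{km}$ relevant triplets $(a_0,\sink,H^{(k)}_0)$. The optimality recurrence is the verbatim analogue of~(\ref{eq:recurrencepi}) with every subgraph $G_t$ replaced by a length-$k$ history $H^{(k)}_t$: the value $h(a_0,\sink,H^{(k)}_0)$ is characterised by a ``$1+\sum$'' recurrence over the successor histories of $H^{(k)}_0$ (a successor is obtained by dropping the oldest snapshot of $H^{(k)}_0$ and appending a new snapshot $G_1\subseteq G$, which occurs with probability $\mu(G_1\mid H^{(k)}_0)$; this probability factorises over the edges by the edge-centric independence assumption and is computable in time polynomial in $n$ and $m$), together with the boundary condition $h(\sink,\sink,H^{(k)})=0$. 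Lemma~\ref{lemma:optimalpi} extends verbatim: the greedy policy moving to the closed-neighbour of smallest $h$-value is optimal by part~(ii) of Theorem~5.4.3 of~\cite{Norris}.

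Next I would run the brute-force-over-orderings algorithm of Section~\ref{sec:doublyexp} unchanged. For each of the at most $N'!$ orders $\sigma^*$ of the $N'$ triplets, one linearises the $\min$ in the recurrence --- choosing in each term the closed-neighbour that is $\sigma^*$-earliest --- and obtains exactly the memory-$k$ analogue of the constraint system~(\ref{constraint:recurrence})--(\ref{constraint:zero}): the recurrence equalities, the ``$\sigma^*$-minimal $\le$ all closed-neighbours'' inequalities, non-negativity, and $h'(\sink,\sink,H^{(k)})=0$. This linear program has $N'$ variables and $O(nN')$ constraints, so Vaidya's algorithm~\cite{V89} solves it in $O((nN')^{2.5})$ time. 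Correctness is the memory-$k$ restatement of Theorem~\ref{thm:uniquesolution}: when $\sigma^*$ is the order that sorts the true $h$-values, the system is feasible (the optimal $h$-values satisfy it); and \emph{any} feasible solution $h'$ induces, via the rule ``move to the closed-neighbour of smallest $h'$-value'', a legitimate stationary policy whose expected cost equals $h'$, and since that policy is a fixed point of the policy-increment operator, Theorem~\ref{thm:policyincrement} forces it to be optimal, whence $h'\equiv h$. The hypothesis of Theorem~\ref{thm:policyincrement} is met because once Alice reaches $\sink$ no further cost accrues, so the underlying chain is transient. Thus checking, for each $\sigma^*$, whether the ordering induced by the computed solution coincides with $\sigma^*$ isolates the correct system and yields $h(\source,\sink,H^{(k)}_0)$.

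It then remains to bound the time, which is the only genuinely new computation. The algorithm performs at most $N'!$ linear-program solves, each in $O((nN')^{2.5})$ time, hence $O\big(N'!\cdot(nN')^{2.5}\big)$ in total. Using $N'!\le(N')^{N'}$ and taking base-2 logarithms, $\log_2\big((N')^{N'}\big)=N'\log_2 N'=n\,2^{km}\,(km+\log_2 n)=(kmn+n\log n)\cdot 2^{km}$, while the polynomial factor $(nN')^{2.5}$ contributes only $2^{O((km+\log n)\,2^{km})}$ to the exponent and is absorbed; this gives the running time $O\big(2^{(kmn+n\log n)\cdot 2^{km}}\big)$. The main obstacle here is not one of substance: every idea is already present in the $k=1$ proof, so the real work is the routine but necessary verification that Theorems~\ref{thm:uniquesolution} and~\ref{thm:policyincrement} and the uniqueness argument all survive the replacement of single snapshots by length-$k$ histories, together with the exponent bookkeeping carried out above.
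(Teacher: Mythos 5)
Your proposal follows the paper's own proof essentially verbatim: the paper establishes the $k=1$ case in full (MDP formulation, optimality of the greedy policy via Theorem~5.4.3 of \cite{Norris}, brute force over the $N'!$ orderings with the linearised constraint system solved by Vaidya's algorithm, and uniqueness via the policy-increment theorem) and then notes that everything carries over by replacing single snapshots with length-$k$ histories and $N=n2^{m}$ with $N'=n2^{km}$, exactly as you do. Your exponent bookkeeping $N'\log_2 N' = (kmn+n\log n)\cdot 2^{km}$ matches the paper's claimed bound, so the proposal is correct and takes the same route.
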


\begin{remark}
It is easy to see that the running time of the above brute-force algorithm is dominated by the number of different orderings $N'!$, and thus we have a doubly exponential algorithm (recall that $N'=n2^{km}$). A different approach that can potentially lead to a faster algorithm is to start from an arbitrary initial policy and successively apply policy increments as in Theorem~\ref{thm:policyincrement}. 
Even though the convergence analysis of such an approach is non-trivial, one could use it to find the optimal ordering $\sigma^*$ fast\footnote{This is possible if after a relatively small (say polynomial in $N'$) number of steps the ordering does not change.} and then use $\sigma^*$ to find the unique solution to the set of constraints (\ref{constraint:recurrence})-(\ref{constraint:zero}).  
\end{remark}

\end{document}